\numberwithin{equation}{section}
\newcommand{\C}{{\mathbb C}}
\renewcommand{\Re}{{\operatorname{Re\,}}}
\renewcommand{\Im}{{\operatorname{Im\,}}}
\newcommand{\al}{\alpha}
\newcommand{\be}{\beta}
\newcommand{\ga}{\gamma}
\newcommand{\Ga}{\Gamma}
\newcommand{\ep}{\varepsilon}
\newcommand{\de}{\delta}
\newcommand{\sg}{\sigma}
\newcommand{\om}{\omega}
\newcommand{\Om}{\Omega}
\newcommand{\z}{\zeta}
\newtheorem{lemma}{Lemma}[section]
\newtheorem{theorem}[lemma]{Theorem}
\newtheorem{proposition}[lemma]{Proposition}
\newtheorem{corollary}[lemma]{Corollary}
\newtheorem{remark}[lemma]{Remark}
\begin{document}

\title[Painlev\'e I double scaling limit  in the cubic random matrix model]
{Painlev\'e I double scaling limit in the cubic random matrix model}

\author{Pavel Bleher}
\address{Department of Mathematical Sciences,
Indiana University-Purdue University Indianapolis,
402 n. Blackford St., Indianapolis, In 46202, U.S.A.}
\email{bleher@math.iupui.edu}

\author{Alfredo Dea\~no}
\address{School of Mathematics, Statistics and Actuarial Science, University of Kent,
Canterbury CT2 7NF, 
United Kingdom}
\email{A.Deano-Cabrera@kent.ac.uk}

\thanks{The first author is supported in part
by the National Science Foundation (NSF) Grants DMS-0969254 and DMS-1265172. The second author acknowledges financial support from projects MTM2009--11686, from the Spanish Ministry of Science and Innovation, and from projects MTM2012--34787 and MTM2012-36732--C03--01, from the Spanish Ministry of Economy and Competitivity, and he is also grateful to the Department of Mathematical Sciences, Indiana University Purdue University Indianapolis, for their hospitality in April--May 2012 and in August 2013, when a substantial part of this work was done. The research leading to these results has received financial support from the Fund for Scientific Research Flanders through
Research Project G.0617.10 and from the European Community's Seventh Framework Programme 
(FP7/2007-2013) under grant agreement 247623--FP7-PEOPLE-2009-IRSES}

\date{\today}

\begin{abstract}
We obtain the double scaling asymptotic behavior of the recurrence coefficients and the partition function at the critical point of the $N\times N$ Hermitian random matrix model with cubic potential. We prove that the recurrence coefficients admit an asymptotic expansion in 
powers of $N^{-2/5}$, and in the leading order the asymptotic behavior of the recurrence coefficients
 is given by a Boutroux tronqu\'ee solution to
the Painlev\'e I equation. We also obtain the double scaling limit of the partition function, and we prove that the 
poles of the tronqu\'ee solution are limits of zeros of the partition function.  The tools used include
 the Riemann--Hilbert approach and the Deift--Zhou nonlinear steepest descent method for the corresponding family of complex orthogonal polynomials and their recurrence coefficients, together with  the Toda equation in the parameter space.
\end{abstract}

\keywords{Random matrices, asymptotic representation in the complex domain, Riemann--Hilbert problems, topological expansion, partition function, double scaling limit, Painlev\'e I equation. 2010 MSC: 30E15, 60B20, 33E17, 35Q15, 05C10.}

\maketitle
\tableofcontents 
\section{Introduction}
We consider a unitary invariant ensemble of random matrices given by the  probability distribution,
\begin{equation}\label{dPM_general}
d\mu_N(M)=\frac{1}{\tilde Z_N}e^{-N \textrm{Tr} V(M)}dM,
\end{equation}
where $V(M)$ is a polynomial,
\begin{equation}
V(M)=\sum_{k=1}^{2\nu} t_k M^k, \qquad t_{2\nu}>0.
\end{equation}
Here $M$ belongs to the space $\mathcal H_N$ of $N\times N$ Hermitian matrices
 and $dM$ is the Lebesgue measure on the space $\mathcal H_N$. The partition function of this model is equal to
\begin{equation}\label{tZn_general}
\tilde Z_N=\int_{\mathbb{R}}\ldots\int_{\mathbb{R}} e^{-N \textrm{Tr} V(M)}dM.
\end{equation}

The probability distribution of eigenvalues is
\begin{equation}
d\mu_N(z_1,\ldots,z_N)=\frac{1}{Z_N} \prod_{1\leq j<k\leq N}(z_j-z_k)^2\,
\prod_{j=1}^N e^{-N V(z_j)}dz_1\ldots dz_N,
\end{equation}
with the partition function
\begin{equation}\label{ZN0}
Z_N=\int_{\mathbb{R}}\ldots\int_{\mathbb{R}} \prod_{1\leq j<k\leq N}(z_j-z_k)^2\,
\prod_{j=1}^N e^{-N V(z_j)}dz_1\ldots dz_N.
\end{equation}
We define the free energy as
\begin{equation}\label{free_general}
F_N:=\frac{1}{N^2}\ln \frac{Z_N}{Z^{\rm GUE}_N},
\end{equation}
where $Z^{\rm GUE}_N$ corresponds to the Gaussian Unitary Ensemble (GUE), with $V(M)=M^2/2$. In this case the partition function can be computed explicitly as a Selberg integral, cf. \cite{Ble}:
\begin{equation}\label{Zn0Selberg}
 Z_N^{\rm GUE}=\frac{(2\pi)^{N/2}}{N^{N^2/2}}\prod_{n=1}^N n!
\end{equation}

 If we consider a formal deformation of the
Gaussian model
\begin{equation}
V(M)=\frac{M^2}{2}+uW(M),
\end{equation}
where $u$ is a parameter, it is known that for small $u$ the free energy admits an asymptotic expansion
as $N\to\infty$, 
\begin{equation}\label{free_regular_general}
F_N(u)\sim \sum_{g=0}^{\infty} \frac{F^{(2g)}(u)}{N^{2g}}, \qquad N\to\infty,
\end{equation}
which is called the \emph{topological expansion}.
This asymptotic expansion in inverse powers of $N^2$ for the free energy is well known in the literature. For unitary ensembles it can be obtained from the Riemann--Hilbert formulation, as presented by Ercolani and McLaughlin in \cite{EMcL}, see also \cite{EMcLP} for a more detailed result on the coefficients in the asymptotic expansion for potentials $V(M)$ with a dominant even power.  Alternatively, it is possible to derive it via different techniques that are applicable to more general $\beta$-ensembles, see, for instance, the works of Borot and Guionnet \cite{BG1} and \cite{BG2}. 

It is a remarkable fact that the coefficients $F^{(2g)}(u)$ are generating functions for counting graphs 
on Riemannian surfaces of genus $g$. This important observation goes back  to the earlier physical papers 
of Bessis, Itzykson and Zuber \cite{BIZ}
 and Br\'ezin, Itzykson, Parisi and Zuber \cite{BIPZ}. For rigorous mathematical proofs, see  the works of 
Ercolani and McLaughlin \cite{EMcL}, Ercolani, McLaughlin and Pierce \cite{EMcLP}, and the monograph 
of Forrester \cite{For}.

The case when $W(M)$ is a cubic polynomial is especially interesting in this direction, since it gives a generating function for counting triangulations on surfaces. For cubic $W(M)$, however, the partition function, as defined in \eqref{ZN0}, diverges, and it is necessary to consider some kind of regularization. One way to achieve this is to define it using integration on a specially chosen contour $\Ga$ in the complex plane:
\begin{equation}\label{Zn}
Z_N(u)=\int_\Ga\ldots\int_\Ga \prod_{1\leq j<k\leq N}(z_j-z_k)^2\,
\prod_{j=1}^N e^{-N \left(\frac{z_j^2}{2}-uz_j^3\right)}dz_1\ldots dz_N,
\end{equation}
on which the integral converges.
Strictly speaking, this partition function does not correspond to any Hermitian random matrix model,
 but it serves, nevertheless, as a generating function for triangulations on Riemannian surfaces, see, e.g. \cite{BD}.

To choose an appropriate contour of integration $\Ga$, consider the three sectors on the complex plane,
\begin{equation}\label{sectors}
\begin{aligned}
S_0&=\Big\{z\in\C:\;\frac{5\pi}{6}< \arg z < \frac{7\pi}{6}\Big\}\,,\\
 S_1&=\Big\{z\in\C:\;\frac{\pi}{6}< \arg z < \frac{\pi}{4}\Big\}\,,\\
 S_2&=\Big\{z\in\C:\;-\frac{\pi}{4}< \arg z < -\frac{\pi}{6}\Big\}\,,
\end{aligned}
\end{equation}
see Fig.\ref{Figsectors}.

\begin{center}
\begin{figure}[h]
\begin{center}
\includegraphics{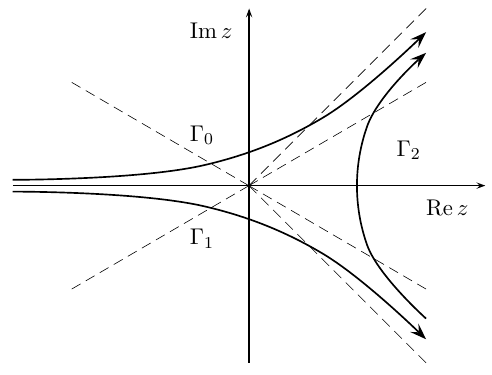}
\end{center}
  \caption[sectors ]{The sectors $S_0$, $S_1$, $S_2$ and the contours  $\Ga_0$, $\Ga_1$, $\Ga_2$.}
 \end{figure}
\label{Figsectors}
\end{center}

Then for any ray
\begin{equation}\label{ray}
R_{\theta}=\big\{z\in\C:\; \arg z =\theta \big\},
\end{equation}
lying in the sectors $S_0$, $S_1$, and $S_2$, the integral
\begin{equation}\label{integral}
\int_{R_{\theta}}z^k e^{-N \left(\frac{z^2}{2}-uz^3\right)}dz
\end{equation}
converges for any $k=0,1,\ldots$ and any $u\ge 0$. Clearly, it is also possible to take combinations of two such contours. In Fig. \ref{Figsectors} we consider contours consisting of two
rays joining the sectors $S_0$, $S_1$, and $S_2$, namely 
\begin{equation}\label{contours}
\Ga_0=R_\pi\cup R_{\pi/5},\qquad
\Ga_1=R_\pi\cup R_{-\pi/5},\qquad
\Ga_2=R_{-\pi/5}\cup R_{\pi/5},
\end{equation}
with orientation from $(-\infty)$ to $(\infty\,e^{\pi i/5})$ on $\Ga_0$,
from $(-\infty)$ to $(\infty\,e^{-\pi i/5})$ on $\Ga_1$, and
from $(\infty e^{-\pi i/5})$ to $(\infty\,e^{\pi i/5})$ on $\Ga_2$.  

More generally, following \cite{FIK}, see also \cite{DK} and \cite{BEH}, it is convenient to introduce a
linear combination of the contours $\Ga_0$ and $\Ga_1$. To that end,
let us fix some $\al\in\C$ and define 
\begin{equation}\label{Gammal}
\Ga=\al\Ga_0+(1-\al)\Ga_1,
\end{equation}
in the sense that for any $f(z)$ such that the integral of $f(z)$ along $\Gamma$ is well defined, we have
\begin{equation}\label{Gamma2}
\int_\Ga f(z)dz=\al\int_{\Ga_0}f(z)dz+(1-\al)\int_{\Ga_1}  f(z) dz.
\end{equation}

Observe that $\alpha=1$ corresponds to integration on $\Gamma_0$ only, and $\alpha=0$ corresponds to integration on $\Gamma_1$. The choice $\alpha=1/2$ would lead to integration on $\Gamma_2$, since by Cauchy's theorem, 
$$
\int_{\Ga_0} f(z)dz+\int_{\Ga_1}f(z)dz+\int_{\Ga_2}f(z)dz=0,
$$
for an analytic function $f(z)$.

With this choice of $\Ga=\Ga(\al)$, the integral
$Z_N(u)=Z_N(u;\al)$ in \eqref{Zn} is convergent for any $u\ge 0$. By
the Cauchy theorem, we have some flexibility in the choice of the
contours $\Ga_0$, $\Ga_1$ within the sectors $S_0$, $S_1$, $S_2$.

In our previous paper \cite{BD} we considered the large $N$ asymptotic behavior of the free energy 
\begin{equation}\label{Fn_cubic}
F_N(u)=\frac{1}{N^2}\ln \frac{Z_N(u)}{Z_N(0)},
\end{equation}
where $Z_N(u)$ is the partition function for the cubic model \eqref{Zn} in the interval $0\leq u<u_c$, where $u_c$ is the following critical value:
\begin{equation}\label{uc}
u_c=\frac{3^{1/4}}{18}.
\end{equation}

In this regular regime, the free energy admits an asymptotic expansion in powers of $N^{-2}$ of the form \eqref{free_regular_general}, see \cite{BD}. This expansion is uniform in the variable $u$ on any interval $[0, u_c-\delta]$ for $\delta>0$. Furthermore,
the functions $F^{(2g)}(u)$ do not depend on $\alpha$ and they admit an analytic
continuation to the disc $|u| < u_c$ in the complex plane.

The first two terms in the topological expansion \eqref{free_regular_general} for the cubic model can be written as power series in $u$, convergent for $|u|<u_c$, see \cite{BD}:
\begin{equation}\label{F0_regular}
F^{(0)}(u)=\frac{1}{2}\sum_{j=1}^{\infty}\frac{72^j \Gamma\left(\frac{3j}{2}\right)u^{2j}}{\Gamma(j+3)\Gamma\left(\frac{j}{2}+1\right)},
\end{equation}
and
\begin{equation}\label{F2_regular}
F^{(2)}(u)=\frac{5}{48}\sum_{j=1}^{\infty}\frac{72^j\Gamma(\frac{3j}{2})}{(3j+2)\Gamma(j+1)\Gamma(\frac{j}{2}+1)}
\, _3F_2\left(\begin{array}{l} -j+1,2,6\\ 5,-\frac{3j}{2}+1 \end{array};\frac{3}{2}\right) u^{2j},
\end{equation}
in terms of generalized hypergeometric functions, see \cite[Chapter 16]{DLMF}. Consequently, we can write an asymptotic expansion for the partition function for the cubic model in the regular regime:
\begin{equation}
 Z_N(u)=Z_N(0) e^{N^2 F^{(0)}(u)+F^{(2)}(u)}\left(1+\mathcal{O}(N^{-2})\right), \qquad N\to\infty.
\end{equation}

In this paper we analyze the behavior of the free energy and the partition function near the critical value $u=u_c$. Our analysis is based on the asymptotic formulas for recurrence coefficients for orthogonal polynomials. We consider monic orthogonal polynomials  $p_n(z)=z^n+\ldots$ on the contour $\Gamma$, defined in \eqref{Gammal}, with respect to the weight $w(z)$. The orthogonality condition is given by 
\begin{equation}\label{orthow}
\int_{\Gamma} p_k(z)p_j(z) w(z)dz=h_k\delta_{jk}, 
\end{equation}
where 
\begin{equation}\label{wz}
w(z)=e^{-NV(z;u)},\qquad V(z;u)=\frac{z^2}{2}-uz^3, 
\end{equation}
and $h_k=h_{N,k}(u)$ are normalizing constants. Assuming that the orthogonal polynomials $p_n(z)$ exist for $n=0,1,\ldots, N$, we have
\begin{equation}
Z_N(u)=\prod_{n=0}^{N-1} h_{N,n}(u),
\end{equation}
see for instance \cite{Ble}, and the orthogonal polynomials satisfy a three-term recurrence relation
\begin{equation}\label{TTRR}
zp_n(z)=p_{n+1}(z)+\beta_{N,n}p_k(z)+\gamma_{N,n}^2p_{n-1}(z),
\end{equation}
where $\beta_{N,n}=\beta_{N,n}(u)$, and
\begin{equation}\label{gamman_hn}
\gamma_{N,n}^2=\gamma_{N,n}^2(u)=\frac{h_{N,n}(u)}{h_{N,n-1}(u)}.
\end{equation}

%

Following the general theory, the existence of such a sequence of
orthogonal polynomials is a consequence of the Hankel determinant 
$D_n=\textrm{det}[\mu_{j+k}]_{0\leq j,k\leq n}$ being nonzero, where the moments are 
$\mu_j =\int_{\Gamma} z^jw(z)dz$, see for instance the monographs of Szeg\H{o}, \cite[Chapter II]{Sze} or Chihara,  \cite[\S 1.3]{Chi}. In this case, however,
existence is not guaranteed a priori, since the weight
function is not positive on $\Gamma$. However, one of the
consequences of the Riemann--Hilbert analysis will be the existence
of $p_n(z)$ for large enough $N$ and $n=N+\mathcal{O}(1)$.


The main goal of this paper is to obtain the asymptotic behavior of the partition function $Z_N(u)$ as $N\to\infty$ in the double scaling regime
\begin{equation}\label{ds_u}
|u-u_c|\leq CN^{-4/5}, 
\end{equation}
where $C>0$ is an arbitrary fixed number. To achieve this goal, we first obtain the asymptotic expansion of the recurrence coefficients $\beta_{N,N}(u)$ and $\gamma^2_{N,N}(u)$ in this double scaling regime, in Theorem \ref{Th1}. Then, in Theorem \ref{th_extended}, we extend these asymptotic expansions to a bigger neighborhood $u-u_c=\mathcal{O}(N^{-3/5})$, and simultaneously we extend the asymptotic expansion in the regular regime obtained in our previous paper \cite{BD} to $u-u_c\geq C_{\varepsilon}N^{-4/5+\varepsilon}$, for $\varepsilon>0$, where $C_{\varepsilon}>0$. Then, as a consequence of Theorem \ref{th_extended}, we have asymptotic expansions for the recurrence coefficients in the  overlapping regions $|u-u_c|=\mathcal{O}(N^{-3/5})$ and $0\leq u\leq u_c-C_{\varepsilon} N^{-4/5+\varepsilon}$, with $C_{\varepsilon}>0$. Finally, we integrate the Toda differential equation from $u=0$ to $u=u_c+\lambda N^{-4/5}$, with $|\lambda|\leq C$, and we obtain the double scaling asymptotic formula for the free energy in Theorem \ref{thfree}.


We remark that the appearance of solutions of Painlev\'e I in the double scaling asymptotics of the partition function near the critical point was predicted in the physics literature, see the papers by F. David \cite{David1991} and \cite{David1993}, and also in the work of M. Y. Mo, \cite{Mo}. 


The structure of the paper is as follows:
\begin{itemize}
\item In Section \ref{RHforPI} we recall known properties of the Painlev\'e I differential equation, and we present a Riemann--Hilbert problem for an associated function $\Psi$. A uniform asymptotic expansion for this function is also proved in Theorem \ref{Th_Phi}, a result that may be of independent interest.
\item In Section \ref{sec_main} we present the main results of the paper: asymptotic expansions for the recurrence coefficients corresponding to the orthogonal polynomials $p_n(z)$ and for the free energy in the double scaling regime.
\item In Section \ref{sec_eqmeasure} we analyze the equilibrium measure and its support close to the critical case. Similarly as in \cite{DK}, we need to construct a modified equilibrium measure, whose density becomes negative near one of the endpoints of the support.
\item In Section \ref{RHP} we apply the Deift--Zhou nonlinear steepest descent to the Riemann--Hilbert problem (see \cite{DZ} or the monograph \cite{Deift}), and as a result we deduce the asymptotic behavior of the recurrence coefficients $\ga^2_{N,N}(u)$ and $\be_{N,N}(u)$,
corresponding to the orthogonal polynomials of degree $n=N$ with respect to the
weight $e^{-NV(z;u)}$, in the double scaling regime and near the critical case. We also identify the subleading terms in these asymptotic expansions in terms of solutions of the Painlev\'e I differential equation, proving Theorem \ref{Th1}. Using the connection between the two double scaling formulations (in terms of $u-u_c$ and in terms of $n/N$, see \eqref{nNv}) we analyze the behavior of the recurrence coefficients $\ga^2_{N,n}(u_c)$ and $\be_{N,n}(u_c)$,
corresponding to the orthogonal polynomials of degree $n$ with respect to the
weight function $e^{-NV(z;u_c)}$, at the critical case. This proves Corollary \ref{Cor1}.

\item In Section \ref{sec_free} we integrate the Toda equation to obtain the asymptotic behavior of the free energy near the critical case. To prove Theorem \ref{thfree}, we
need to extend the regular and the double scaling regimes for the
asymptotics of the recurrence coefficients. This implies a small
modification of the corresponding local parametrices in the
Riemann--Hilbert problem. 
\end{itemize}



Note that both the orthogonal polynomials and the recurrence coefficients in the above relation depend
on the parameter $u$, and also on $n$ and $N$. When needed, we will denote these recurrence
coefficients by $\gamma^2_{N,n}(u)$ and $\beta_{N,n}(u)$, to
stress this dependence.

\section{Riemann--Hilbert problem for the Painlev\'e I equation}\label{RHforPI}

\subsection{The Painlev\'e I equation}
In order to state our results, we will need to work with solutions of the Painlev\'e I differential equation. This is a nonlinear second--order ordinary differential equation that in standard form reads
\begin{equation}\label{PI_general}
y''(\lambda)=6y(\lambda)^2+\lambda,
\end{equation}
see for instance \cite[Chapter 32]{DLMF}. Because of the Painlev\'e property, it is known, from the original work of Painlev\'e \cite[\S 17--\S19]{Pain}, see also \cite[Chapter 1, \S 1]{GLS}, that any solution of this differential equation is a meromorphic function in $\mathbb{C}$ with infinitely many double poles. Furthermore, if we denote the set of these poles by $\mathcal{P}$, then in a vicinity of any pole $\lambda_{j}\in\mathcal{P}$, the function $y(\lambda)$ can be expanded in Laurent series:
\begin{equation}\label{y_poles}
y(\lambda)=\frac{1}{(\lambda-\lambda_j)^2}-\frac{\lambda_j}{10}(\lambda-\lambda_j)^2-\frac{1}{6}(\lambda-\lambda_j)^3+C(\lambda-\lambda_j)^4+\mathcal{O}((\lambda-\lambda_j)^5),
\end{equation} 
where $C$ is an arbitrary constant.

We also note the following $\mathbb{Z}_5$-symmetry, that follows directly from the differential equation: if $y(\lambda)$ is a solution of \eqref{PI_general}, then the function
\begin{equation}
y^0(\lambda)=\frac{1}{\omega_5^2}\,y\left(\frac{\lambda}{\omega_5}\right), \qquad \omega_5=e^{2\pi i/5},
\end{equation}
is a solution as well. So we consider the rays
\begin{equation}\label{rays_PI}
\Sigma_k=\left\{\lambda\in\mathbb{C}:\arg\lambda=\frac{\pi}{5}+\frac{2(k-1)\pi}{5}, k=1,2,\ldots 5\right\},
\end{equation}
that delimit the sectors
\begin{equation}
\Omega_k=\left\{\lambda\in\mathbb{C}:\frac{\pi}{5}+\frac{2(k-1)\pi}{5}<\arg\lambda<\frac{\pi}{5}+\frac{2k\pi}{5}, k=1,2,\ldots 5\right\},
\end{equation}
see Figure \ref{Boutroux}. 

\begin{figure}
\centerline{
\includegraphics{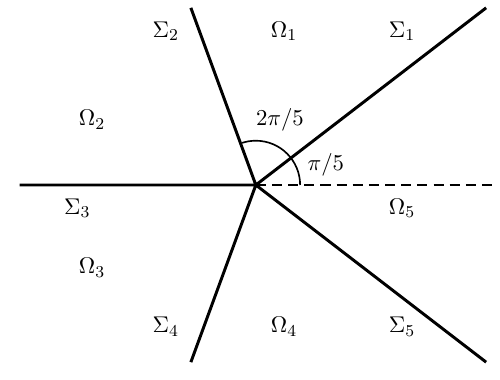}}
 \caption{Boutroux sectors in $\mathbb{C}$ for the solutions of Painlev\'e I.}
\label{Boutroux}
\end{figure}

These canonical sectors were originally considered by Boutroux \cite{Bou} in the analysis of solutions to Painlev\'e I. As proved in \cite{Bou}, for any $\lambda_j\in \mathbb C$, not located on any of the rays indicated in Figure \ref{Boutroux}, there exists a solution $y(\lambda)$ to Painlev\'e I that has a double pole at the point $\lambda_j$ and that is {\em tronqu\'ee}  in the direction of the semi axis $\Sigma_3$, in the following sense: for any $\varepsilon>0$ there exists $R>0$ such that in the region
\begin{equation}
\Lambda=\{\lambda\in\mathbb{C}: |\lambda|\geq R, \frac{3\pi}{5}+\varepsilon<\arg\lambda<\frac{7\pi}{5}-\varepsilon\}
\end{equation}
the function $y(\lambda)$ is free of poles. A similar result holds in the other sectors because of the $\mathbb{Z}_5$ symmetry shown before.

Moreover, the solution has the following asymptotic behavior in $\Lambda$:
\begin{equation}
y(\lambda)\sim \sqrt{-\frac{\lambda}{6}}\,\sum_{k=0}^{\infty} a_k(-\lambda)^{-5k/2},  \qquad |\lambda|\to\infty.
\end{equation}

Here the coefficients $a_k$ are given by the nonlinear recursion
\begin{equation}
a_0=1, \qquad a_{k+1}=\frac{25k^2-1}{8\sqrt{6}}a_k-\frac{1}{2}\sum_{m=1}^k a_ma_{k+1-m}, \qquad k\geq 0,
\end{equation}
so the first ones are
\begin{equation}
a_0=1, \qquad a_1=-\frac{1}{8\sqrt{6}}, \qquad a_2=-\frac{49}{768}.
\end{equation}

\subsection{The $\Psi$ function and its Riemann--Hilbert problem}\label{RHforPsi}
Kapaev in \cite{Kap} describes tronqu\'ee solutions of Painlev\'e I in terms of the following Riemann--Hilbert problem: introduce  a $2\times 2$ matrix valued function $\Psi(\z;\lambda,\alpha)$ such that
\begin{enumerate}
\item $\Psi$ is analytic on $\C\setminus \Ga_\Psi$, where
\begin{equation}\label{rhp5A}
\Ga_\Psi=\ga_1\cup\ga_2\cup\rho\cup \ga_{-2}\cup\ga_{-1}
\end{equation}
is the contour shown in Figure \ref{PI_jumps}.
\item On the contour $\Ga_\Psi$, oriented as in Figure \ref{PI_jumps}, the positive (from the left) and negative (from the right) limit values of $\Psi(\zeta)$, denoted $\Psi_{\pm}(\zeta)$ respectively, satisfy
\begin{equation}\label{rhp6A}
\Psi_+=\Psi_- J_\Psi
\end{equation}
where the jump matrices are indicated in Figure \ref{PI_jumps}:
\begin{equation}\label{rhp4A}
J_1=\begin{pmatrix}
1 & \al \\
0 & 1
\end{pmatrix},\quad
J_{-1}=\begin{pmatrix}
1 & 1-\al \\
0 & 1
\end{pmatrix},\quad
J_2=J_{-2}=\begin{pmatrix}
1 & 0 \\
1 & 1
\end{pmatrix},\quad
J_\rho=\begin{pmatrix}
0 & 1 \\
-1 & 0
\end{pmatrix}.
\end{equation}
\item As $\z\to\infty$, with fixed $\lambda$, we have the asymptotic series
\begin{equation}\label{rhp8A}
\Psi(\z;\lambda,\alpha)=\frac{\z^{\sigma_3/4}}{\sqrt{2}}\begin{pmatrix} 1 & -i\\ 1 & i\end{pmatrix}
\left(I+\frac{\Psi_1(\lambda,\alpha)}{\z^{1/2}}+\frac{\Psi_2(\lambda,\alpha)}{\z}+\mathcal{O}(\z^{-3/2})\right)e^{\theta(\z;\lambda)\sigma_3}.
\end{equation}
where $\Psi_1(\lambda,\alpha)$ is a diagonal matrix and the phase function is 
\begin{equation}\label{rhp9A}
\theta(\z;\lambda)=\frac{4}{5}\,\z^{5/2}+\lambda \z^{1/2},
\end{equation}
with fractional powers taken on the principal sheet and a cut on the negative half-axis. We use the standard notation for the Pauli matrix
\begin{equation}
\sigma_3=\begin{pmatrix} 1 & 0\\ 0 & -1\end{pmatrix}.
\end{equation}
\end{enumerate}

\begin{figure}
\centerline{
\includegraphics{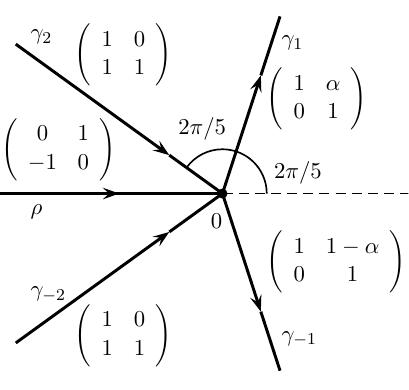}}
 \caption{The contour $\Gamma_{\Psi}$ and the jumps for the function $\Psi$ associated with the Painlev\'e I differential equation.}
\label{PI_jumps}
\end{figure}

In \cite{Kap} it is proved that this Riemann--Hilbert problem has a unique solution for large enough $|\lambda|$ in the sector $\Lambda$. Furthermore, the function $\Psi(\z;\lambda,\alpha)$ satisfies the following system of linear ODEs:
\begin{equation}\label{rhp2A}
\begin{aligned}
&\Psi_{\z}\Psi^{-1}=
\begin{pmatrix}
-(y_{\alpha})_{\lambda} & 2\z^2+2y_{\alpha}\z+\lambda+2y_{\alpha}^2 \\
2\z-2y_{\alpha} &(y_{\alpha})_{\lambda}
\end{pmatrix}\\
&\Psi_{\lambda}\Psi^{-1}=
\begin{pmatrix}
0 & \z+2y_{\alpha} \\
1 & 0
\end{pmatrix}.
\end{aligned}
\end{equation}

The compatibility condition of this Lax pair implies the Painlev\'e I equation \eqref{PI_general} for the function $y_{\alpha}(\lambda)$.  Here the subscripts $(\cdot)_{\lambda}$ and $(\cdot)_{\z}$ indicate differentiation with respect to these variables.

The parameter $\alpha$ that appears in the Riemann--Hilbert problem parametrizes a family of tronqu\'ee solutions of Painlev\'e I, with asymptotic behavior
\begin{equation}\label{asymp_ya}
y_{\alpha}(\lambda)\sim \sqrt{-\frac{\lambda}{6}}\,\sum_{k=0}^{\infty} a_k(-\lambda)^{-5k/2},  
\end{equation}
as $|\lambda|\to\infty$ in $\Lambda$. The dependence on $\alpha$ appears in exponentially small corrections to this asymptotic behavior.  These extra terms are computed by Kapaev in \cite[Theorem 2.2]{Kap}. If $\arg \lambda\in[3\pi/5,\pi]$, then the large $|\lambda|$ asymptotics of $y_{\alpha}(\lambda)$ is given by
\begin{equation}\label{asymp_alpha1}
y_{\alpha}(\lambda)=y_1(\lambda)+\frac{i(1-\alpha)}{\sqrt{\pi}\,2^{11/8}(-3\lambda)^{1/8}}e^{-\frac{2^{11/4}3^{1/4}}{5}(-\lambda)^{5/4}}\left(1+\mathcal{O}(\lambda^{-3/8})\right), \qquad |\lambda|\to\infty,
\end{equation}
where $y_{1}(\lambda)$ behaves as \eqref{asymp_ya}. For higher order corrections in the latter formula, $k$-instanton terms, see the paper \cite{GIKM} and references therein.

A similar result holds in the sector $\arg\lambda\in[\pi,7\pi/5]$, with respect to $y_0(\lambda)$ instead of $y_1(\lambda)$. As a consequence, \cite[Corollary 2.4]{Kap} establishes that if $\arg\lambda\in[3\pi/5,7\pi/5]$, then
\begin{equation}\label{asymp_alpha10}
y_{0}(\lambda)-y_1(\lambda)=\frac{i}{\sqrt{\pi}\,2^{11/8}(-3\lambda)^{1/8}}e^{-\frac{2^{11/4}3^{1/4}}{5}(-\lambda)^{5/4}}\left(1+\mathcal{O}(\lambda^{-3/8})\right), \qquad |\lambda|\to\infty.
\end{equation}

Furthermore, the cases $\alpha=0$ and $\alpha=1$ are special in the sense that, as $|\lambda|\to\infty$, we have 
\begin{equation}
\begin{aligned}
y_0(\lambda)&= \sqrt{-\frac{\lambda}{6}}+\mathcal{O}(\lambda^{-2}), \qquad \varepsilon+\frac{3\pi}{5}<\arg\lambda<\frac{11\pi}{5}-\varepsilon,\\
y_1(\lambda)&= \sqrt{-\frac{\lambda}{6}}+\mathcal{O}(\lambda^{-2}), \qquad \varepsilon-\frac{\pi}{5}<\arg\lambda<\frac{7\pi}{5}-\varepsilon,
\end{aligned}
\end{equation}
for arbitrary $\varepsilon>0$. This means that these are {\em tritronqu\'ee} solutions (asymptotically free of poles in four out of the five canonical sectors), again after Boutroux \cite{Bou}, see also \cite{JosKit}. 

The first two terms $\Psi_1(\lambda,\alpha)$ and $\Psi_2(\lambda,\alpha)$ in the asymptotic expansion \eqref{rhp8A} are fixed and can be written as follows, cf. \cite{DK}:
\begin{equation}\label{Psi1Psi2}
\begin{aligned}
\Psi_1(\lambda,\alpha)&=-\mathcal{H}_{\alpha}(\lambda)\sg_3=\begin{pmatrix}-\mathcal{H}_{\alpha}(\lambda) & 0\\
0 & \mathcal{H}_{\alpha}(\lambda)\end{pmatrix},\\
\Psi_2(\lambda,\alpha)&=\frac{1}{2}\mathcal{H}_{\alpha}(\lambda)^2 I+\frac{1}{2}y_{\alpha}(\lambda)\begin{pmatrix}0 &-i\\ i &0\end{pmatrix}=
\frac{1}{2}\begin{pmatrix}\mathcal{H}_{\alpha}(\lambda)^2 & -iy_{\alpha}(\lambda)\\
iy_{\alpha}(\lambda) & \mathcal{H}_{\alpha}(\lambda)^2\end{pmatrix}.
\end{aligned}
\end{equation}
Here 
\begin{equation}
\mathcal{H}_{\alpha}(\lambda)=\frac{1}{2}(y_{\alpha}'(\lambda))^2-2y_{\alpha}^3(\lambda)-y_{\alpha}(\lambda)\lambda
\end{equation}
is the Hamiltonian corresponding to Painlev\'e I. Consequently, given a fixed value of $\alpha$, the function $y_{\alpha}(\lambda)$ can be written in terms of the solution of this Riemann--Hilbert problem:
\begin{equation}
y_{\alpha}(\lambda)=2i(\Psi_2(\lambda))_{12}.
\end{equation} 

\begin{remark}
The parameter $\alpha$, that appears again in the jumps of the Riemann--Hilbert problem, can be naturally related to the Stokes multipliers for Painlev\'e I, as formulated by Kapaev, namely $\alpha=-i s_1$ or $\alpha=1+is_{-1}$. Note however that the function $\Psi(\z;\lambda,\alpha)$ that we just defined is not exactly the same that is used in \cite{Kap}. If we denote this last one by $\Psi^{(0)}$, we have the relation
\begin{equation}
\Psi(\z;\lambda,\alpha)=\Psi^{(0)}(\z;\lambda,\alpha)\begin{pmatrix} 1 & 0 \\ 0 & -i \end{pmatrix},
\end{equation}
and the jump matrices are related as follows:
\begin{equation}
J_{\Psi}=\begin{pmatrix} 1 & 0 \\ 0 & i \end{pmatrix}J_{\Psi^{(0)}}\begin{pmatrix} 1 & 0 \\ 0 & -i \end{pmatrix}.
\end{equation}
\end{remark}

\subsection{Large $|\lambda|$ asymptotics for the solution of the Painlev\'e I Riemann--Hilbert problem}
The asymptotic expansion \eqref{rhp8A} holds for large $\z$ and fixed $\lambda$, but for the analysis of the free energy later on, we need to extend it to cover the case when $(-\lambda)\to\infty$, $\lambda\in\mathbb{R}$, simultaneously. To this end, we define the following function:
\begin{equation}\label{rhp11A}
\Phi(\z;\lambda,\alpha)=(-\lambda)^{-\sg_3/8}\Psi(\z(-\lambda)^{1/2};\lambda,\alpha).
\end{equation}
Then $\Phi$ solves the following RH problem:
\begin{enumerate}
\item $\Phi$ is analytic on $\C\setminus \Ga_\Psi$.
\item On $\Ga_\Psi$, $\Phi$ has the jumps,
\begin{equation}\label{rhp12A}
\Phi_+=\Phi_- J_\Psi.
\end{equation}
\item As $\z\to\infty$, $\Phi$ expands in the asymptotic series
\begin{equation}\label{rhp13A}
\Phi(\z;\lambda,\alpha)\sim \frac{\z^{\sg_3/4}}{\sqrt 2}
\begin{pmatrix}
1 & -i \\
1 & i
\end{pmatrix}
\left(I+\sum_{k=1}^\infty \frac{\Psi_k(\lambda,\alpha)}{(-\lambda)^{k/4}\z^{k/2}}\right)e^{(-\lambda)^{5/4}\theta_0(\z)\sg_3},
\end{equation}
where now
\begin{equation}\label{rhp14A}
\theta_0(\z)=\frac{4}{5}\,\z^{5/2}-\z^{1/2}.
\end{equation}
\end{enumerate}

The first equation in \eqref{rhp2A} gives the differential equation for $\Phi$:
\begin{equation}\label{rhp15A}
\Phi_{\z}\Phi^{-1}=
(-\lambda)^{5/4}A
\end{equation}
where
\begin{equation}\label{rhp16A}
A=A(\z;\lambda,\alpha)=\begin{pmatrix}
-(y_{\alpha})_{\lambda}(-\lambda)^{-3/4} & 2\z^2+2y_{\alpha} \z(-\lambda)^{-1/2}-1+2(y_{\alpha})^2(-\lambda)^{-1} \\
2\z-2y_{\alpha}(-\lambda)^{-1/2} & (y_{\alpha})_{\lambda}(-\lambda)^{-3/4}
\end{pmatrix},
\end{equation}
where again $(\cdot)_{\lambda}$ indicates differentiation with respect to this variable. 

It is possible to derive a semiclassical solution to this system for large $|\lambda|$ by using the asymptotic behavior of $y_{\alpha}(\lambda)$. As $\lambda\to-\infty$, the matrix $A$
has the following limit:
\begin{equation}\label{rhp17A}
\begin{aligned}
\lim_{\lambda\to -\infty} A(z;\lambda,\alpha)&=A_\infty(\z)=\begin{pmatrix}
0 & 2\z^2+\frac{2\z}{\sqrt 6}-\frac {2}{3} \\
2(\z-\frac{1}{\sqrt 6}) & 0
\end{pmatrix}\\
&=2\left(\z-\frac{1}{\sqrt 6}\right)\begin{pmatrix}
0 & \z+\frac{2}{\sqrt 6} \\
1 & 0
\end{pmatrix}.
\end{aligned}
\end{equation}
The eigenvalues of $A_\infty$ are
\begin{equation}\label{rhp18A}
\z_{1,2}=\pm 2\left(\z-\frac{1}{\sqrt 6}\right)\left(\z+\frac{2}{\sqrt 6}\right)^{1/2}
\end{equation}
and we introduce the $g$-function as
\begin{equation}\label{rhp18}
g(\z)
=\frac{4}{5}\left(\z+\frac{2}{\sqrt 6}\right)^{5/2}-\frac{2\sqrt 6}{3}\left(\z+\frac{2}{\sqrt 6}\right)^{3/2},
\end{equation}
in such a way that
\begin{equation}
 g'(\z)=2\left(\z-\frac{1}{\sqrt 6}\right)\left(\z+\frac{2}{\sqrt 6}\right)^{1/2}.
\end{equation}
 
As $\z\to\infty$, $g(\z)$ has the following power series expansion:
\begin{equation}\label{rhp19A}
g(\z)
=\frac{4}{5}\,\z^{5/2}-\z^{1/2}-\frac{\sqrt{6}}{9}\,\z^{-1/2}+\frac{1}{24}\,\z^{-3/2}-\frac{\sqrt{6}}{180}\,\z^{-5/2}+\ldots
\end{equation}

Hence, the first two terms in the expansion coincide with the phase function $\theta_0(\lambda;\alpha)$ in \eqref{rhp14A}, and the subsequent terms are small as $\z\to\infty$, so we can rewrite the asymptotic expansion \eqref{rhp13A} as follows:
\begin{equation}\label{rhp13g}
\Phi(\z;\lambda,\alpha)\sim \frac{\z^{\sg_3/4}}{\sqrt 2}
\begin{pmatrix}
1 & -i \\
1 & i
\end{pmatrix}
\left(I+\sum_{k=1}^\infty \frac{\Phi_k(\lambda,\alpha)}{(-\lambda)^{k/4}\z^{k/2}}\right)e^{(-\lambda)^{5/4}g(\z)\sg_3},
\end{equation}
with some modified coefficients $\Phi_k(\lambda,\alpha)$. 

Regarding a large $|\lambda|$ asymptotic expansion for the function $\Phi(\z;\lambda,\alpha)$, uniform in $\z$, we have the following result:
\begin{theorem}\label{Th_Phi}
Fix $\alpha\in\mathbb{C}$, then $\Phi(\z;\lambda,\alpha)$ has the following asymptotic behavior as $(-\lambda)\to\infty$:
\begin{equation}\label{asymp_Phi_final}
\Phi(\z;\lambda,\alpha)=
\left(I+\mathcal{O}\left(\frac{1}{(-\lambda)^{5/4}(1+|\z|)}\right)\right)
\frac{(\z-\z_0)^{\sg_3/4}}{\sqrt{2}}\begin{pmatrix} 1& -i\\ 1 & i \end{pmatrix}
e^{(-\lambda)^{5/4}g(\z)\sg_3},
\end{equation}
where 
\begin{equation}\label{zeta0}
\z_0=-\frac{2}{\sqrt{6}},
\end{equation}
and $g(\z)$ is given by \eqref{rhp18}. The estimate holds uniformly for $\z\in\mathbb{C}\setminus D(\z_0,\varepsilon)$, for any $\varepsilon>0$.

\end{theorem}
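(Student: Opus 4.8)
The plan is to prove Theorem \ref{Th_Phi} by a Deift--Zhou-type steepest descent analysis of the RH problem \eqref{rhp12A}--\eqref{rhp14A}, using the $g$-function from \eqref{rhp18} and the limiting matrix $A_\infty(\z)$ in \eqref{rhp17A} as the leading-order model. First I would undo the $g$-function conjugation: set $T(\z)=\Phi(\z;\lambda,\alpha)\,e^{-(-\lambda)^{5/4}g(\z)\sg_3}$, so that $T$ has the normalization $T(\z)\sim \tfrac{1}{\sqrt 2}\z^{\sg_3/4}\left(\begin{smallmatrix}1&-i\\1&i\end{smallmatrix}\right)(I+\mathcal{O}(\z^{-1/2}))$ at infinity and satisfies jumps obtained by conjugating $J_\Psi$ by $e^{(-\lambda)^{5/4}g_\pm\sg_3}$. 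Because the first two terms of the large-$\z$ expansion of $g$ agree with $\theta_0$ (see \eqref{rhp19A} vs.\ \eqref{rhp14A}), this conjugation is consistent with \eqref{rhp13g}. One then checks, using the sign structure of $\Re g$ on the five rays of $\Ga_\Psi$ (the rays $\Sigma_k$ of \eqref{rays_PI} are precisely where $\Re g$ changes sign, up to the shift $\z\mapsto \z+2/\sqrt6$; this is the Boutroux/$S$-curve property and is exactly why those sectors are ``canonical''), that all off-diagonal jump entries — the ones multiplied by $\alpha$, $1-\alpha$, or $1$ in \eqref{rhp4A} — become exponentially small on the rays $\ga_{\pm1}$ and on the unbounded parts of $\ga_{\pm2}$, with the exception of the ``bounded-cut'' piece of $\rho\cup\ga_{\pm2}$ near $\z_0=-2/\sqrt6$. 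The decay rate is $e^{-c(-\lambda)^{5/4}(1+|\z|)^{5/2}}$ type on the unbounded rays, which is far stronger than the claimed error, so those jumps contribute only to the error term and can be absorbed at the end via a small-norm argument.

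Next I would build the global parametrix. Away from $\z_0$, the model RH problem has constant jump $J_\rho=\left(\begin{smallmatrix}0&1\\-1&0\end{smallmatrix}\right)$ on the branch cut emanating from $\z_0$ along the negative direction, and the identity jump elsewhere, with the prescribed behavior at infinity; its solution is exactly the $\z$-independent diagonalizer of $A_\infty$, namely
\begin{equation}\label{globalpar}
P^{(\infty)}(\z)=\frac{(\z-\z_0)^{\sg_3/4}}{\sqrt 2}\begin{pmatrix}1&-i\\1&i\end{pmatrix},
\end{equation}
since $(\z-\z_0)^{1/4}$ has a $\left(\begin{smallmatrix}0&1\\-1&0\end{smallmatrix}\right)$-type jump across the cut that matches $J_\rho$ after the Fourier-like rotation by $\left(\begin{smallmatrix}1&-i\\1&i\end{smallmatrix}\right)$. (Here it is important that $\z+2/\sqrt6 = \z-\z_0$, so the branch point of $g$ and the branch point of $P^{(\infty)}$ coincide.) Near $\z_0$ one needs a local parametrix in a disc $D(\z_0,\varepsilon)$: since $g'(\z)=2(\z-1/\sqrt6)(\z-\z_0)^{1/2}$ vanishes like $(\z-\z_0)^{1/2}$ at $\z_0$, the local variable $w=\bigl[(-\lambda)^{5/4}g(\z)\bigr]^{2/3}$ is conformal there and the standard Airy parametrix built out of $\Ai(w)$ and its rotations matches $P^{(\infty)}$ on $\partial D(\z_0,\varepsilon)$ up to a relative error $\mathcal{O}\bigl((-\lambda)^{-5/4}\bigr)$ — this is the usual edge-point scaling. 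I would state this matching as a lemma and keep the Airy construction schematic.

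Finally I would set $R(\z)=\Phi(\z;\lambda,\alpha)\,e^{-(-\lambda)^{5/4}g(\z)\sg_3}\,\bigl(P(\z)\bigr)^{-1}$, where $P=P^{(\infty)}$ outside $D(\z_0,\varepsilon)$ and $P$ is the Airy parametrix inside. Then $R$ is analytic across all the original contours, has only a small jump $J_R=I+\mathcal{O}\bigl((-\lambda)^{-5/4}\bigr)$ on $\partial D(\z_0,\varepsilon)$ and exponentially small jumps on the tails of $\Ga_\Psi$ outside the disc, and $R(\z)\to I$ as $\z\to\infty$. The standard small-norm theory for RH problems then gives $R(\z)=I+\mathcal{O}\bigl((-\lambda)^{-5/4}/(1+|\z|)\bigr)$ uniformly in $\z$, the extra $1/(1+|\z|)$ coming from the decay of the Cauchy operator against a jump supported on the compact circle $\partial D(\z_0,\varepsilon)$ (for the exponentially small tail-jumps this factor is trivially present and absorbed). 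Unwinding the definition of $R$ and substituting \eqref{globalpar} yields exactly \eqref{asymp_Phi_final}. The main obstacle — and the step I would be most careful about — is verifying the global sign structure of $\Re g$ on all of $\Ga_\Psi$: one must confirm that the contour $\Ga_\Psi$ of Figure \ref{PI_jumps} can be deformed within the domain of analyticity of $\Phi$ so that it runs through the correct half-planes $\{\Re g<0\}$ (resp.\ the ``lens'' regions for the $J_{\pm2}$ jumps, which requires the standard opening-of-lenses step splitting $J_\rho$-adjacent jumps across the $S$-curve), and that the estimate is genuinely uniform as $\z\to\infty$ along every ray; all the rest is the routine Deift--Zhou machinery together with the explicit Airy local parametrix.
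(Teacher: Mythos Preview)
Your proposal is correct and follows essentially the same route as the paper's proof: the $g$-function conjugation $\Phi\mapsto \Phi\,e^{-(-\lambda)^{5/4}g\sg_3}$, the global parametrix $P^{(\infty)}(\z)=\tfrac{1}{\sqrt2}(\z-\z_0)^{\sg_3/4}\left(\begin{smallmatrix}1&-i\\1&i\end{smallmatrix}\right)$, an Airy local parametrix at $\z_0$ built from the conformal map $\bigl(-\tfrac34 g(\z)\bigr)^{2/3}$, and the small-norm argument for $R$ are all exactly what the paper does. The one cosmetic difference is that the paper first \emph{translates} the whole contour $\Ga_\Psi$ to be centered at $\z_0$ (rather than at $0$) and then inserts a short real segment $\ga_0=[\z_0,\z_0+\delta]$ carrying the jump $\left(\begin{smallmatrix}1&1\\0&1\end{smallmatrix}\right)$, so that the rays $\ga_{\pm1},\ga_{\pm2}$ emanate from a point slightly to the right of $\z_0$ and sit entirely in the correct sign regions of $\Re g$; this is the concrete realization of the ``deform $\Ga_\Psi$ into the right half-planes'' step you flagged as the main obstacle, and it replaces any lens-opening.
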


In the proof of this result we use the Riemann--Hilbert approach to analyze $\Phi(\z;\lambda,\alpha)$, and we consider the sequence of transformations of the Deift--Zhou steepest descent method. The proof can be found in Appendix \ref{ApA}.

\section{Statement of main results}\label{sec_main}

We prove the following results:

\begin{theorem}\label{Th1}
Fix $\alpha\in\mathbb{C}$, and suppose that $\lambda\in\mathbb{R}$ is not a pole of the Painlev\'e I function $y_{\alpha}(\lambda)$. Under the double
scaling relation
\begin{equation}\label{dsc_u}
N^{4/5}(u-u_c)=c_1\lambda, \qquad c_1=2^{-12/5}3^{-7/4},
\end{equation}
the recurrence coefficients for the polynomials orthogonal with respect to the weight $e^{-NV(z;u)}$, with $V(z;u)=\tfrac{z^2}{2}-u z^3$, satisfy
\begin{equation}\label{asympgnbn2N}
\begin{aligned}
\ga_{N,N}^2(u)&\sim \sqrt{3}+\sum_{k=1}^{\infty}\frac{1}{N^{2k/5}}\, p_{2k}(\lambda),\\
\be_{N,N}(u) & \sim
3^{1/4}(\sqrt{3}-1)+\sum_{k=1}^{\infty}\frac{1}{N^{2k/5}}\,q_{2k}(\tilde{\lambda}).
\end{aligned}
\end{equation}
for some coefficients $p_{2k}(\lambda)$ and $q_{2k}(\tilde{\lambda})$, where $\tilde \lambda$ is a
shifted variable defined in terms of $\lambda$ as
\begin{equation}
\tilde{\lambda}=N^{4/5}\varphi\left(\varphi^{-1}(\lambda N^{-4/5})+\frac{1}{2N}\right),
\qquad \varphi(t)=\frac{\sqrt{1+t}-1}{c_2}.
\end{equation}

Furthermore, we have
\begin{equation}
p_2(\lambda)=-2^{4/5}3^{1/2} y_{\alpha}(\lambda),
\end{equation}
in terms of the solution to Painlev\'e I, and $q_{2}(\lambda)=3^{-1/4}p_2(\lambda)$.
\end{theorem}

We can prove a similar result with the double scaling in terms of $n/N$. Note that if we apply the change of variables
\begin{equation}\label{zxi}
z=\frac{u}{u_c}\xi,
\end{equation}
then we obtain $NV(z;u_c)=nV(\xi;u)$, where $n=Nu^2/u_c^2$. This leads to an alternative double scaling for the ratio $n/N$, at the critical point $u=u_c$:
\begin{equation}
\frac{n}{N}=\frac{u^2}{u_c^2}=(1+c_2\lambda N^{-4/5})^2,
\end{equation}
where $c_2=c_1/u_c=2^{-7/5}$ is a new constant. Consequently, we define a new variable $v$ as follows:
\begin{equation}\label{nNv}
\frac{n}{N}=1+vN^{-4/5}.
\end{equation}

In this setting, we fix $u=u_c$ and we let both $n,N\to\infty$, in such a way that $v$ in the previous formula remains fixed. We have the following result:
\begin{corollary}\label{Cor1}
Fix $u=u_c$, and let $n$ and $N$ satisfy the double scaling relation \eqref{nNv}, with $v\in\mathbb{R}$ bounded. Then the recurrence coefficients satisfy
\begin{equation}
\begin{aligned}
\ga_{N,n}^2&\sim \ga^2_c+\sum_{k=1}^{\infty}\frac{1}{N^{2k/5}}\, \hat{p}_{2k}(v),\\
\be_{N,n}& \sim \be_c+\sum_{k=1}^{\infty}\frac{1}{N^{2k/5}}\,\hat{q}_{2k}(\tilde{v}),
\end{aligned}
\end{equation}
for some coefficients $\hat{p}_{2k}(v)$ and
$\hat{q}_{2k}(\tilde{v})$, and with the same values of $\ga^2_c$ and $\be_c$ as in the previous theorem. The variable $v$ is defined in \eqref{nNv} and $\tilde v$ i\newcommand{\ee}{\textrm{e}}
s a shifted variable defined in terms of $v$ as
\begin{equation}
\tilde v=v+\frac{N^{-1/5}}{2},
\end{equation}
in such a way that
\begin{equation}
\frac{n}{N}+\frac{1}{2N}=1+\tilde vN^{-4/5}.
\end{equation}
\end{corollary}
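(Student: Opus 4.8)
The plan is to reduce Corollary~\ref{Cor1} to Theorem~\ref{Th1} by an exact rescaling of the orthogonal polynomials and then re-expand the resulting series in powers of $N^{-2/5}$. \emph{Rescaling.} Fix $n,N$ and put $a=\sqrt{n/N}$, $u=u_c a=u_c\sqrt{n/N}$, so $Na^{2}=n$. Inserting $z=a\xi$ into the orthogonality relations \eqref{orthow} for the monic degree-$n$ polynomial $P_n$ orthogonal with respect to $e^{-NV(z;u_c)}$ (and likewise for degrees $n\pm1$), and using $NV(a\xi;u_c)=Na^{2}V(\xi;u)=nV(\xi;u)$ together with the fact that each of the rays $R_\pi,R_{\pm\pi/5}$ making up $\Ga_0,\Ga_1$ is fixed by the positive dilation $z\mapsto a\xi$, one sees that $\xi\mapsto a^{-n}P_n(a\xi)$ is the monic degree-$n$ polynomial orthogonal with respect to $e^{-nV(\xi;u)}$. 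Comparing the three-term recurrences \eqref{rr1} of the two families yields the exact identities
\[
\ga^2_{N,n}(u_c)=\frac nN\,\ga^2_{n,n}(u),\qquad \be_{N,n}(u_c)=\sqrt{\tfrac nN}\;\be_{n,n}(u),\qquad u=u_c\sqrt{\tfrac nN},
\]
valid once the relevant Riemann--Hilbert problems are solvable, which the steepest-descent analysis guarantees for large $N$.

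\emph{Reduction to Theorem~\ref{Th1}.} The right-hand sides are the diagonal recurrence coefficients covered by Theorem~\ref{Th1} with $N$ replaced by $n$, whose Painlev\'e parameter is $\mu:=c_1^{-1}n^{4/5}(u-u_c)$. From $u-u_c=u_c(\sqrt{n/N}-1)$, $c_2=c_1/u_c$ and $n^{4/5}=N^{4/5}(n/N)^{4/5}$ one computes $\mu=N^{4/5}\varphi(vN^{-4/5})$ with exactly the function $\varphi$ of Theorem~\ref{Th1}; equivalently $\varphi^{-1}(\mu N^{-4/5})=vN^{-4/5}$, which is why this $\varphi$ also governs the relation between $v$ and the Painlev\'e variable. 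Since $\varphi$ is analytic at $0$ with $\varphi(0)=0$ and $\varphi'(0)=1/(2c_2)=2^{2/5}$, for fixed $v$ we have $\mu=\mu_0(v)+\mu_1(v)N^{-4/5}+\cdots$ with $\mu_0(v)=2^{2/5}v$ and coefficients polynomial in $v$; in particular $\mu\to\mu_0(v)$. The Deift--Zhou analysis of Section~\ref{RHP} produces the expansions of Theorem~\ref{Th1} locally uniformly in the Painlev\'e parameter over compact subsets of $\mathbb{R}\setminus\mathcal{P}$, so they apply with this $N$-dependent $\mu$ whenever $2^{2/5}v\notin\mathcal{P}$ (which is needed anyway for the Corollary's right-hand sides to be finite).

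\emph{Re-expansion.} Substituting $\ga^2_{n,n}(u)\sim\ga^2_c+\sum_{k\ge1}n^{-2k/5}p_{2k}(\mu)$ into the first identity above, one expands $n/N=1+vN^{-4/5}$ (exact), $n^{-2k/5}=N^{-2k/5}(1+vN^{-4/5})^{-2k/5}$ (the second factor a power series in $N^{-4/5}$), and $p_{2k}(\mu)=p_{2k}(\mu_0(v))+p_{2k}'(\mu_0(v))\mu_1(v)N^{-4/5}+\cdots$, the last step being legitimate because each $p_{2k}$ is meromorphic in the Painlev\'e variable with poles in $\mathcal{P}$ — a property read off from the explicit local parametrix $\Psi$ built in the proof of Theorem~\ref{Th1} — and $\mu_0(v)\notin\mathcal{P}$. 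Regrouping by powers of $N^{-2/5}$ gives $\ga^2_{N,n}(u_c)\sim\ga^2_c+\sum_{k\ge1}N^{-2k/5}\hat p_{2k}(v)$ with the same $\ga^2_c=\sqrt3$ (the prefactor tends to $1$) and, e.g., $\hat p_2(v)=p_2(2^{2/5}v)=-2^{4/5}3^{1/2}y_\alpha(2^{2/5}v)$. For $\be$ the argument is the same in structure, except that $\be_n$ lies between degrees $n$ and $n+1$, so the natural scaling variable is the one obtained from $v$ by the half-integer shift $n\mapsto n+\tfrac12$ of the degree — the very mechanism that converts $\lambda$ into $\tilde\lambda$ in Theorem~\ref{Th1} — producing $\tilde v=v+\tfrac12 N^{-1/5}$ and $\be_{N,n}(u_c)\sim\be_c+\sum_{k\ge1}N^{-2k/5}\hat q_{2k}(\tilde v)$ with $\be_c=3^{1/4}(\sqrt3-1)$.

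\emph{Main obstacle.} Two points call for care. First, Theorem~\ref{Th1} is stated for a fixed Painlev\'e parameter, whereas the $\mu=\mu(v,N)$ above varies with $N$ (it converges to $\mu_0(v)$ but is not constant), so one must invoke the local uniformity of the expansions in the Painlev\'e variable; this is available from the steepest-descent error estimates of Section~\ref{RHP} but must be made explicit. Second, the bookkeeping of the shifted variable $\tilde v$: one has to propagate the shift $n\mapsto n+\tfrac12$ consistently through the rescaling and the change of variables $\mu\leftrightarrow v$ so that the $\be$-series truly comes out in powers of $N^{-2/5}$ with coefficients depending on $\tilde v$ rather than $v$ — otherwise half-integer powers of $N^{-1/5}$ appear. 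The rescaling identities and the regrouping themselves are routine.
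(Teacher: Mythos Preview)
Your rescaling argument is correct and the core ingredient---the change of variables $z=a\xi$ with $a=\sqrt{n/N}$ giving $\gamma^2_{N,n}(u_c)=(n/N)\gamma^2_{n,n}(u)$, $\beta_{N,n}(u_c)=\sqrt{n/N}\,\beta_{n,n}(u)$---is exactly the identity the paper uses (their (newcoeffs), with the diagonal coefficients on the right understood for the rescaled problem with large parameter $n$). The computation $\mu=N^{4/5}\varphi(vN^{-4/5})$ is also the paper's Lemma on $\lambda(v)$.

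The route, however, is genuinely reversed. In the paper, Corollary~\ref{Cor1} is not deduced from Theorem~\ref{Th1}; rather, both are proved together in Section~7, and in fact the even-power structure of Theorem~\ref{Th1} is obtained \emph{from} the $v$-formulation. Concretely, the Riemann--Hilbert analysis first yields only an $N^{-1/5}$ expansion (all powers) for $\gamma^2_{N,N}(u)$ and $\beta_{N,N}(u)$; one then passes to $\gamma^2_{N,n}(u_c)$, $\beta_{N,n}(u_c)$ via the same rescaling and inserts these expansions into the string equations $3u_c(\hat\gamma^2_{N,n}+\hat\beta^2_{N,n}+\hat\gamma^2_{N,n+1})=\hat\beta_{N,n}$ and $\hat\gamma^2_{N,n}(1-3u_c(\hat\beta_{N,n}+\hat\beta_{N,n-1}))=n/N$. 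The discrete symmetries $\sigma_0,\sigma_1$ of these equations (the $n\mapsto n\pm\tfrac12$ reflections) force the odd coefficients $\hat p_{2k+1},\hat q_{2k+1}$ to vanish---this is where the shifted variable $\tilde v=v+\tfrac12 N^{-1/5}$ arises naturally, as the center of the reflection for $\beta$---and one then transports the resulting $N^{-2/5}$ expansion back to prove Theorem~\ref{Th1}. So the paper's proof of the Corollary relies on the string equations rather than on an already-established Theorem~\ref{Th1}, whereas your argument takes Theorem~\ref{Th1} as a black box. Your approach is cleaner if Theorem~\ref{Th1} is granted, but you should be aware that in the paper's logic the dependency runs the other way; in particular your ``very mechanism that converts $\lambda$ into $\tilde\lambda$'' is itself established via the string-equation symmetry in the $v$-variable, not by an independent argument. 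The two obstacles you flag (local uniformity in the Painlev\'e variable, and tracking $\tilde\mu\leftrightarrow\tilde v$ through the rescaling) are real and would need to be checked in your route; the paper sidesteps both by working directly with the string equations in the $v$-formulation.
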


\begin{remark}
As noticed in \cite{DK}, the terms of order $N^{-1/5}$ cancel out in the asymptotic expansion of the recurrence coefficients. Using the shifted variables $\tilde{\lambda}$ and $\tilde{v}$ before, this is true in general for higher order odd terms.
\end{remark}

In order to formulate our main result on the asymptotic behavior of the partition function, we need to extend the asymptotic expansion of the recurrence coefficients both from the regular and the double scaling regime, in order to be able to integrate the Toda equation. Namely, we have the following result:
\begin{theorem}\label{th_extended}
The asymptotic expansions \eqref{asympgnbn2N} for $\gamma^2_{N,N}(u)$ and $\beta_{N,N}(u)$ hold for $N^{4/5}(u-u_c)=c_1\lambda$, where $\lambda=\mathcal{O}(N^{4/25-\varepsilon_1})$, for $\varepsilon_1>0$. Simultaneously, for any $K\geq 1$ and $u-u_c=\mathcal{O}(N^{-4/5+\varepsilon_2})$, for $\varepsilon_2>0$, we have the truncated regular expansion
\begin{equation}
\begin{aligned}
\gamma^2_{N,N}(u)&= g_0(u)+\sum_{k=1}^{K}\frac{g_{2k}(u)}{N^{2k}}+\mathcal{O}(N^{-\frac{2}{5}-\frac{5K+4}{2}\varepsilon_2}),\\
\beta_{N,N}(u)&= b_0(u)+\sum_{k=1}^{K}\frac{b_{2k}(u)}{N^{2k}}+\mathcal{O}(N^{-\frac{2}{5}-\frac{5K+4}{2}\varepsilon_2}),
\end{aligned}
\end{equation}
with the coefficients $g_{2k}(u)$ and $b_{2k}(u)$ from the regular regime, see \cite{BD}.
\end{theorem}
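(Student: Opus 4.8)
The plan is to revisit the two nonlinear steepest descent analyses that underlie Theorem \ref{Th1} (the double scaling regime) and the regular expansion of \cite{BD}, and, in each of them, to make the error estimates explicit in their dependence on, respectively, the Painlev\'e variable $\lambda$ and the distance $u-u_c$; one then checks that the two resulting bounds remain effective on the stated, overlapping ranges. First I would rerun the sequence of transformations of the Deift--Zhou steepest descent carried out in Section \ref{RHP}, now treating $\lambda$ as a parameter that is allowed to grow with $N$ rather than being held fixed. The size of $\lambda$ enters in an essential way only through the local parametrix built around the critical endpoint from the Painlev\'e I function $\Psi$: for bounded $\lambda$ one controls it by the fixed-$\lambda$ expansion \eqref{rhp8A}, whereas for growing $|\lambda|$ one must instead invoke the uniform large $|\lambda|$ asymptotics of $\Phi$ furnished by Theorem \ref{Th_Phi}. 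Feeding that estimate into the matching on the circle around the critical point and propagating it through the small-norm Riemann--Hilbert problem for the final error matrix, one obtains a correction to the fixed-$\lambda$ analysis given by a negative power of $N$ times a power of $|\lambda|$; correspondingly the coefficients $p_{2k}$ and $q_{2k}$ in \eqref{asympgnbn2N} grow polynomially in their argument, their leading growth being inherited from the Puiseux expansion of the regular leading term $g_0(u)$ near $u_c$. The restriction $\lambda=\mathcal{O}(N^{4/25-\varepsilon_1})$ is exactly what forces all of these $\lambda$-amplified errors to remain $o(N^{-2k/5})$ for every $k$, so that \eqref{asympgnbn2N} survives as a genuine asymptotic expansion; I would also have to check that the modified equilibrium measure and $g$-function of Section \ref{sec_eqmeasure}, the opening of the lenses, and the remaining Airy-type local parametrices stay uniformly controlled as $u-u_c\to 0$ at this slower rate. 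Since in the application one takes $\lambda$ large and \emph{negative}, where $y_{\alpha}$ is the pole-free tronqu\'ee branch, the exclusion of poles present in Theorem \ref{Th1} causes no difficulty here, and this step is a matter of uniformity, not of new behavior.

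For the regular side, I would rerun the steepest descent of \cite{BD} while making its errors explicit in $u_c-u$ as $u\uparrow u_c$. The right endpoint of the support is degenerating to a case III singular point of the equilibrium measure, so the conformal map and the constants entering the right-endpoint Airy parametrix degenerate at a definite rate in $u_c-u$; the very same degeneration produces the blow-up of the regular coefficients, which one finds to satisfy $g_{2k}(u),\,b_{2k}(u)=\mathcal{O}\big((u_c-u)^{-(5k-1)/2}\big)$. Truncating the regular expansion after $K$ terms, the remainder is then of order $N^{-2(K+1)}(u_c-u)^{-(5K+4)/2}$, up to the genuine error of the modified steepest-descent construction, which is arranged to be of the same order or smaller; when $|u-u_c|$ has order $N^{-4/5+\varepsilon_2}$ this is exactly $\mathcal{O}\big(N^{-\frac{2}{5}-\frac{5K+4}{2}\varepsilon_2}\big)$, the claimed bound. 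As announced in the introduction, keeping the Airy parametrices legitimate all the way up to $u_c$ forces a small modification: near the critical endpoint one places the Airy parametrix in a disc whose radius is allowed to shrink with $u_c-u$, shrinking the lens regions accordingly, so that the Airy asymptotics are still applied where they are valid and the circle-matching error has the stated magnitude.

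Finally, the two expansions agree on the overlap, where $|u-u_c|$ lies between orders $N^{-4/5+\varepsilon_2}$ and $N^{-16/25-\varepsilon_1}$ (nonempty once $\varepsilon_1+\varepsilon_2<4/25$), automatically, since both describe the same recurrence coefficients extracted from the same Riemann--Hilbert problem; this common domain of validity is precisely what is needed in Section \ref{sec_free} to integrate the Toda equation from the regular regime across the critical point. The part I expect to be the main obstacle is the first step: propagating a sharp, $\lambda$-explicit bound through every transformation and through the Neumann series for the error matrix, and verifying that the threshold exponent $4/25$ --- and hence the error exponents in the second part --- comes out without loss, rather than with a power to spare or with one too few.
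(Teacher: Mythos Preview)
Your proposal is correct and follows essentially the same route as the paper: the double scaling regime is extended by feeding the uniform large $|\lambda|$ asymptotics of Theorem \ref{Th_Phi} into the Painlev\'e parametrix (the paper pins down the threshold $4/25$ from the first neglected term $(-\lambda)^{5/4}\z^{-1/2}$ in the expansion of $g(\z)-\theta_0(\z)$, with $\z=\mathcal{O}(N^{2/5})$ on the matching circle), and the regular regime is extended by rerunning the Airy construction of \cite{BD} in a disc of radius $N^{-\gamma}$ about the right endpoint, the constraint $\gamma<2/5$ being exactly what keeps the Airy argument growing and yields the range $u-u_c=\mathcal{O}(N^{-4/5+\varepsilon_2})$. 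Your identification of the error size after truncation and of the overlap window is also in agreement with the paper.
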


\begin{remark}
 The estimate of the error term comes from the extended regular regime $u-u_c=\mathcal{O}(N^{-4/5+\varepsilon_2})$, together with \cite{BD}, where it is shown that
\begin{equation}
 g_{2k}(u),b_{2k}(u)=\mathcal{O}((u_c-u)^{\frac{1}{2}-\frac{5k}{2}}),
\end{equation}
for $k\geq 1$. The zeroth terms satisfy
\begin{equation}
 g_0(u)=\sqrt{3}+\mathcal{O}((u_c-u)^{\frac{1}{2}}), \qquad
 b_0(u)=3^{1/4}(\sqrt{3}-1)+\mathcal{O}((u_c-u)^{\frac{1}{2}}).
\end{equation}

\end{remark}

As a consequence of this theorem, in the intermediate region $\lambda=\mathcal{O}(N^{\varepsilon_3})$, with $0<\varepsilon_3<4/25$, we can match both asymptotic expansions and integrate to obtain the behavior of the free energy.

Using this information, we can prove our main result on the asymptotic behavior of the partition function:
\begin{theorem}\label{thfree}
Given $\varepsilon>0$ and $\delta>0$, consider the double scaling regime \eqref{dsc_u}, $N^{4/5}(u-u_c)=c_1\lambda$ and fix a neighborhood in the complex plane $D_R=\{\lambda\in\mathbb{C}:|\lambda|<R\}$. Let $\{\lambda_{\alpha,j}\}_{j=1}^J$ be the set of poles of $y_{\alpha}(\lambda)$ in $D_R$. The partition function $Z_N(u)$ can be written in the following way:
\begin{equation}
Z_N(u)=Z_N^{\operatorname{reg}}(u)Z_N^{\operatorname{sing}}(\lambda)\left(1+\mathcal{O}(N^{-\varepsilon})\right),
\end{equation}
for $\lambda\in D_R\setminus \cup_j D(\lambda_j,\delta)$. Here the regular part is
\begin{equation}\label{ZNreg}
Z_N^{\operatorname{reg}}(u)=e^{N^2 [A+B(u-u_c)+C(u-u_c)^2]+D},
\end{equation}
where the constants $A$, $B$, $C$ and $D$ are explicit:
\begin{equation}
A=F^{(0)}(u_c), \qquad 
B=F^{(0)'}(u)\Big\vert_{u=u_c}, \qquad 
C=\frac{1}{2} F^{(0)''}(u)\Big\vert_{u=u_c}, 
\end{equation}
where $F^{(0)}(u)$ is given by \eqref{F0_regular}, and
\begin{equation}
D=\left[F^{(2)}(u)+\frac{1}{48}\ln (u_c-u)\right]\Bigg\vert_{u=u_c},
\end{equation}
where $F^{(2)}(u)$ is given by \eqref{F2_regular}. The singular part of the partition function is
\begin{equation}\label{ZNsing}
Z_N^{\operatorname{sing}}(\lambda)=e^{-Y_{\alpha}(\lambda)},
\end{equation}
where $Y_{\alpha}(\lambda)$ solves the differential equation
\begin{equation}\label{Yy_ODE}
Y''_{\alpha}(\lambda)=y_{\alpha}(\lambda),
\end{equation}
with boundary condition
\begin{equation}\label{asymp_Y}
Y_{\alpha}(\lambda)=\frac{2\sqrt{6}}{45}(-\lambda)^{5/2}-\frac{1}{48}\log(-\lambda)+\mathcal{O}((-\lambda)^{-5/2}), \qquad (-\lambda)\to\infty.
\end{equation}
\end{theorem}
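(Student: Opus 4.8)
The plan is to integrate the Toda equation in the coupling $u$. Write $Z_N(u)=N!\,\prod_{j=0}^{N-1}h_j(u)$ with $h_j=\int_\Ga P_j^2(s)w(s)\,ds$; these orthogonal polynomials exist for large $N$ as a by-product of the Riemann--Hilbert analysis of Section~\ref{RHP}. Since $\partial_u w=Nz^3w$ and $\partial_u P_j$ has degree $<j$,
\[
\partial_u\log h_j=\frac{N}{h_j}\int_\Ga P_j^2(s)\,s^3\,w(s)\,ds=N\,(J^3)_{jj},
\]
where $J=J_N(u)$ is the Jacobi matrix of the recurrence \eqref{rr1}. As a function of $u$, $J$ satisfies the Lax equation $\partial_u J=N[B,J]$ of the cubic Toda flow for a suitable banded $B=B(J)$, hence $\partial_u(J^3)=N[B,J^3]$ and
\[
\partial_u^2\log Z_N(u)=N\sum_{j=0}^{N-1}\partial_u(J^3)_{jj}=N^2\operatorname{Tr}\!\big(\Pi_N[B,J^3]\big),
\]
$\Pi_N$ being the projection onto indices $<N$. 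Because $B$ and $J^3$ are banded, the interior contributions telescope and only boundary terms near level $N$ remain, so $\partial_u^2\log Z_N(u)=N^2\mathcal R$ with $\mathcal R$ a fixed polynomial in $\ga^2_{N,n}(u)$, $\be_{N,n}(u)$ for $|n-N|$ bounded. Under \eqref{dsc_u} one has $\partial_u=(N^{4/5}/c_1)\partial_\lambda$, hence $\partial_\lambda^2\log Z_N=c_1^2N^{2/5}\mathcal R$.

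Next I insert the recurrence--coefficient asymptotics. Theorem~\ref{Th1} gives the expansions of $\ga^2_{N,N}(u)$ and $\be_{N,N}(u)$; the entries with index near $N$ reduce to these via the scaling identity \eqref{zxi}--\eqref{nNv} (which also explains the half--integer shifts in $\tilde\lambda,\tilde v$, as $\ga^2_{N,n}$ is naturally indexed by the midpoint $n-\tfrac12$ and $\be_{N,n}$ by $n$). Thus every entry of $J$ entering $\mathcal R$ equals $\ga_c^2+N^{-2/5}p_2(\lambda)+\mathcal O(N^{-4/5})$, resp.\ $\be_c+N^{-2/5}q_2(\lambda)+\mathcal O(N^{-4/5})$, with $p_2=-2^{4/5}3^{1/2}y_\alpha$, $q_2=3^{-1/4}p_2$. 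Expanding $\mathcal R=\mathcal R(\ga_c^2,\be_c)+N^{-2/5}\nabla\mathcal R\!\cdot\!(p_2,q_2)+\mathcal O(N^{-4/5})$,
\[
\partial_\lambda^2\log Z_N=c_1^2N^{2/5}\,\mathcal R(\ga_c^2,\be_c)+c_1^2\,\nabla\mathcal R\!\cdot\!(p_2,q_2)+\mathcal O(N^{-2/5}).
\]
The $N^{2/5}$ term is constant in $\lambda$ and integrates to $N^2C(u-u_c)^2$ with $C=\tfrac12\mathcal R(\ga_c^2,\be_c)$; a direct evaluation of $\mathcal R$ and $\nabla\mathcal R$ at $(\ga_c^2,\be_c)$, using the explicit value of $c_1$, shows $c_1^2\,\nabla\mathcal R\!\cdot\!(p_2,q_2)=-y_\alpha(\lambda)$. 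Hence the non--smooth part of $\log Z_N$ satisfies $\partial_\lambda^2(\cdot)=-y_\alpha(\lambda)+\mathcal O(N^{-2/5})$, and two integrations in $\lambda$ produce $-Y_\alpha(\lambda)$ with $Y_\alpha''=y_\alpha$, up to an affine function of $\lambda$.

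The two integration constants, and the constants $A,B,C,D$, are fixed by matching with the regular regime. By Theorem~\ref{th_extended}, in the intermediate region $\lambda=\mathcal O(N^{\varepsilon_3})$, $0<\varepsilon_3<\tfrac{4}{25}$, both the expansion just derived and the truncated regular expansion $\log(Z_N/Z_N^0)=N^2F^{(0)}(u)+F^{(2)}(u)+o(1)$ are valid. Equating them, and using the behaviour near $u_c$ of $F^{(0)}$ (analytic through second order, with a $(u_c-u)^{5/2}$ tail; cf.\ \eqref{F0_regular}) and of $F^{(2)}$ (a $-\tfrac1{48}\log(u_c-u)$ singularity; cf.\ \eqref{F2_regular} and \cite{BD}), identifies $A=F^{(0)}(u_c)$, $B=F^{(0)'}(u_c)$, $C=\tfrac12F^{(0)''}(u_c)$, $D=[F^{(2)}(u)+\tfrac1{48}\log(u_c-u)]_{u=u_c}$, and pins down $Y_\alpha$ through the boundary condition \eqref{asymp_Y} (the affine ambiguity being absorbed into the regular part $N^2[A+B(u-u_c)]+D$; the $(u_c-u)^{5/2}$ tail of $N^2F^{(0)}$ matches $\tfrac{2\sqrt{6}}{45}(-\lambda)^{5/2}$ and the logarithms match the $\log(-\lambda)$ term in \eqref{asymp_Y}). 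Propagating the $\mathcal O(N^{-4/5})$ tails of the recurrence coefficients through the two $\lambda$--integrations over an interval of length $\mathcal O(N^{\varepsilon_3})$ gives a contribution $\mathcal O(N^{-2/5+2\varepsilon_3})=o(1)$, which together with the (negative--power) error of the regular expansion yields the factor $1+\mathcal O(N^{-\varepsilon})$; the disks $D(\lambda_{\alpha,j},\delta)$ must be removed because there $p_2(\lambda)\propto y_\alpha(\lambda)$ is unbounded and the asymptotics cease to be uniform.

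The principal difficulty is Theorem~\ref{th_extended} itself: establishing the recurrence--coefficient asymptotics uniformly across the \emph{extended} double scaling regime and its overlap with the extended regular regime requires the uniform large--$|\lambda|$ control of the Painlev\'e parametrix $\Phi$ from Theorem~\ref{Th_Phi}, a small modification (carried out in Section~\ref{sec_free}) of the local parametrices in the Riemann--Hilbert problem, and the $\mathcal O((u_c-u)^{1/2-5k/2})$ bounds of \cite{BD} for the regular coefficients. A further delicate point is the constant bookkeeping in the matching step: one must track all $\mathcal O(1)$ and $\mathcal O(\log N)$ terms on both sides -- the logarithmic singularity of $F^{(2)}$, the $\log(-\lambda)$ term in \eqref{asymp_Y}, and the Barnes--type $\log N$ contributions in $\log Z_N^0$ -- and check that they recombine precisely into $Z_N^{\operatorname{reg}}(u)\,Z_N^{\operatorname{sing}}(\lambda)$ with the stated constants.
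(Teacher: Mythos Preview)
Your overall architecture---a Toda-type second derivative identity, insertion of the double-scaling asymptotics of the recurrence coefficients, two integrations in $\lambda$, and matching against the regular expansion via Theorem~\ref{th_extended}---is the same as the paper's. The main difference is in how the Toda step is set up. You work directly in the coupling $u$: from $\partial_u\log h_j=N(J^3)_{jj}$ and the Lax equation you obtain $\partial_u^2\log Z_N=N^2\mathcal R$, where $\mathcal R$ is a polynomial in several nearby recurrence coefficients (the boundary terms of the commutator). The paper instead passes to the variable $t=\tfrac{1}{4(3u)^{4/3}}$, in which the potential becomes $\tilde V(w;t)=-\tfrac{w^3}{3}+tw$, linear in $t$. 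In that variable the Toda equation collapses to the single relation $\tfrac{d^2\tilde F_N}{dt^2}=\tilde\gamma^2_{N,N}(t)$ (only one recurrence coefficient on the right), and the free energies are related by $\tilde F_N(t)=\tfrac{2}{3}t^{3/2}-\tfrac14\log(4t)+F_N(u)$.

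What this buys the paper is that the step you leave as ``a direct evaluation of $\mathcal R$ and $\nabla\mathcal R$ at $(\ga_c^2,\be_c)$ shows $c_1^2\,\nabla\mathcal R\cdot(p_2,q_2)=-y_\alpha(\lambda)$'' becomes a one-line substitution: $\tilde\gamma^2_{N,N}=(2\sqrt t)^{-1}\gamma^2_{N,N}$, expand at $t_c$, and the coefficient of $N^{-2/5}$ is immediately proportional to $p_2(\lambda)$ hence to $y_\alpha(\lambda)$. In your route that identity is genuinely a computation (one has to expand the boundary polynomial $\mathcal R$ in several shifted variables and check the constants against $c_1$), and you have only asserted it. Likewise, the paper's choice of variable makes the identification of the constant $C$ and of the $(u_c-u)^{5/2}$ term in $N^2F^{(0)}$ with $\tfrac{2\sqrt6}{45}(-\lambda)^{5/2}$ essentially automatic, whereas in $u$ you must also keep track of the Jacobian contributions from $\partial_u\to\partial_\lambda$ at second order. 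None of this is a gap in principle---your Lax-equation derivation is correct and the computation must agree with the paper's by change of variables---but the $t$-variable trick is the idea that turns the crucial algebraic step from an exercise into a triviality.
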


\begin{figure}[h!]
\centerline{
\includegraphics{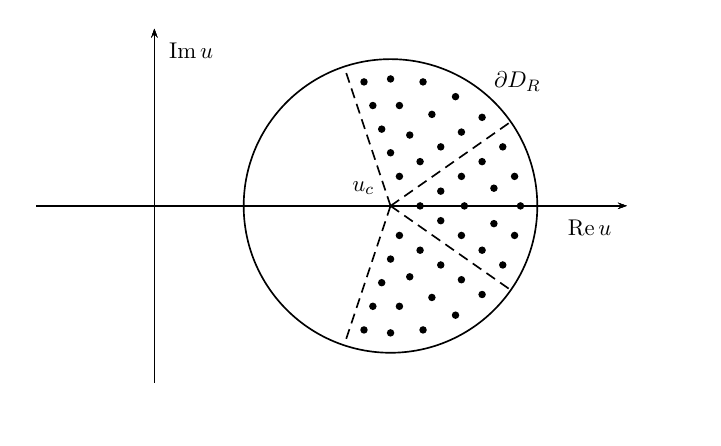}}
 \caption{Illustration of the domain used in Theorem \ref{thfree}. Dots represent poles of $y_{\alpha}(\lambda)$, schematically. Here $\partial D_R=\{u\in\mathbb{C}: N^{4/5}|u-u_c|=c_1 R\}$.}
\label{free_pic}
\end{figure}

As a consequence we can relate the poles of $y_{\alpha}(\lambda)$ with zeros of the partition function, cf. \cite{David1993}:
\begin{corollary}
Assume that there are no poles of $y_{\alpha}(\lambda)$ on the boundary of the disc $\partial D_R$, then for large $N$, the partition function $Z_N(u)$ has exactly $J$ zeros $\tau_j(N)$ in $D_R$, and $\tau_j(N)\to\lambda_j$ as $N\to\infty$.
\end{corollary}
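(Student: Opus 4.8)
The plan is to derive the corollary from Theorem \ref{thfree} by a standard argument-principle/Hurwitz-type comparison between $Z_N(u)$ and its singular factor $Z_N^{\operatorname{sing}}(\lambda)=e^{-Y_\alpha(\lambda)}$. First I would observe that $Z_N^{\operatorname{reg}}(u)$, being the exponential of an entire function, is zero-free, so inside $D_R$ the zeros of $Z_N(u)$ coincide with the zeros of $Z_N^{\operatorname{sing}}(\lambda)\bigl(1+\mathcal O(N^{-\varepsilon})\bigr)$. Next I would identify the zeros of $Z_N^{\operatorname{sing}}$: since $Y''_\alpha=y_\alpha$ and $y_\alpha$ has a double pole with principal part $(\lambda-\lambda_j)^{-2}$ at each $\lambda_j$, integrating twice gives $Y_\alpha(\lambda)=-\log(\lambda-\lambda_j)+\text{(holomorphic)}$ near $\lambda_j$, hence $e^{-Y_\alpha(\lambda)}$ has a simple zero exactly at each pole $\lambda_j$ of $y_\alpha$ and no other zeros in $D_R$. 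So $Z_N^{\operatorname{sing}}$ has precisely $J$ zeros in $D_R$, located at $\lambda_1,\dots,\lambda_J$.

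Then the main step is a Rouché/argument-principle comparison on the circle $\partial D_R$ and on small circles $\partial D(\lambda_j,\delta)$. On the compact set $\partial D_R\setminus\cup_j D(\lambda_j,\delta)$ the function $|e^{-Y_\alpha(\lambda)}|$ is bounded below by a positive constant (by hypothesis $y_\alpha$, hence $Y_\alpha$, is finite there, and $\partial D_R$ avoids all poles), so for $N$ large the relative error $\mathcal O(N^{-\varepsilon})$ from Theorem \ref{thfree} is strictly smaller in modulus than $|e^{-Y_\alpha(\lambda)}|$; by Rouché's theorem, $Z_N^{\operatorname{sing}}(\lambda)(1+\mathcal O(N^{-\varepsilon}))$ and $Z_N^{\operatorname{sing}}(\lambda)$ have the same number of zeros, namely $J$, inside $D_R$. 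The same Rouché estimate applied on each $\partial D(\lambda_j,\delta)$ shows $Z_N(u)$ has exactly one zero in each $D(\lambda_j,\delta)$ and none in the annular region $D_R\setminus\cup_j \overline{D(\lambda_j,\delta)}$; counting gives exactly $J$ zeros $\tau_j(N)$, one near each $\lambda_j$. Letting $\delta\to 0$ (and noting $N$ can be taken large depending on $\delta$) yields $\tau_j(N)\to\lambda_j$.

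The main obstacle is bookkeeping the regime of validity: Theorem \ref{thfree} is stated for $\lambda\in D_R\setminus\cup_j D(\lambda_j,\delta)$ with $\delta>0$ fixed in advance, so the Rouché argument near each $\lambda_j$ must be run on a circle $\partial D(\lambda_j,\delta')$ with $\delta'>\delta$ but still inside the region where $y_\alpha$ has no other poles; one must check $|e^{-Y_\alpha}|$ stays bounded below on that circle (true since $e^{-Y_\alpha}$ only vanishes at $\lambda_j$ itself) and that the lower bound, though $\delta'$-dependent, beats $\mathcal O(N^{-\varepsilon})$ once $N$ is large. A secondary point is to make sure the exponentially large prefactor $Z_N^{\operatorname{reg}}(u)=e^{N^2[\,\cdots\,]+D}$ genuinely never vanishes, which is immediate, and that the product representation in Theorem \ref{thfree} holds as an honest identity of analytic functions of $\lambda$ (not merely an asymptotic statement), so that the argument principle applies — this is exactly the content of that theorem. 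Assembling these pieces gives the count and the convergence $\tau_j(N)\to\lambda_j$.
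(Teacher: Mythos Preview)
Your proposal is correct and follows essentially the same approach as the paper: both integrate the double-pole expansion of $y_\alpha$ twice to obtain $Y_\alpha(\lambda)=-\log(\lambda-\lambda_j)+\text{(holomorphic)}$, deduce that $Z_N^{\operatorname{sing}}$ has a simple zero at each $\lambda_j$, and conclude via the argument principle. In fact your write-up is considerably more complete than the paper's, which records only the local expansion of $Z_N^{\operatorname{sing}}$ and invokes the argument principle on $\partial D(\lambda_j,\delta)$ without spelling out the Rouch\'e comparison with the full $Z_N(u)$ or the passage to $\tau_j(N)\to\lambda_j$.
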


\begin{proof}
Taking into account the behavior of $y_{\alpha}(\lambda)$ near the poles, see \eqref{y_poles}, we obtain by integration
\begin{equation}\label{Y_zeros}
Y_{\alpha}(\lambda)=-\log(\lambda-\lambda_j)+\mathcal{O}((\lambda-\lambda_j)^4), \qquad j=1,2\ldots, J,
\end{equation}
so on the boundary of $D(\lambda_j,\delta)$ we have 
\begin{equation}\label{Y_poles}
Z_{N}^{\operatorname{sing}}(\lambda)=(\lambda-\lambda_j)(1+\mathcal{O}(\lambda-\lambda_j)).
\end{equation}

By the argument principle, $Z_{N}^{\operatorname{sing}}(\lambda)$ has exactly one zero in $D(\lambda_j,\delta)$.
\end{proof}

In Figure \ref{free_pic} we illustrate the setting of the theorem schematically. Note that the variable used is $u$ instead of $\lambda$, so the boundary of the disc is the set $\{u\in\mathbb{C}:N^{4/5}(u-u_c)=c_1\lambda\}$, and it will shrink with $N$ for fixed $\lambda$.

\section{The equilibrium measure}\label{sec_eqmeasure}
As shown in \cite{EMcL}, a key element in the analysis of the partition function and free energy is the equilibrium measure in the external field $V(z;u)$. In the cubic case this equilibrium measure can be written explicitly, as we show next.

\subsection{Support of the equilibrium measure}
When $0\leq u<u_c$, we know from results in \cite{BD} that the equilibrium
measure is supported on an interval $[a,b]$ of the real axis, and
the density is
\begin{equation}\label{rhoz}
\varrho_u(z)=\frac{1}{2\pi} \sqrt{(z-a)(b-z)}(1-3uz-3ux),
\end{equation}
where both $a$ and $b$ depend on $u$, and $x=(a+b)/2$. 

The parameter $\tau=ux$ satisfies the following cubic equation, see \cite{BD}:
\begin{equation}\label{cubictau}
18\tau^3-9\tau^2+\tau-6u^2=0.
\end{equation}

Denote $s=108\sqrt{3}\,u^2$, then the cubic equation becomes
\begin{equation}\label{cubictaus}
18\tau^3-9\tau^2+\tau-\frac{\sqrt{3}}{54}s=0,
\end{equation}
and it has three solutions, that behave as follows as $s\to 0$:
\begin{equation}
\begin{aligned}
\tau_1(s)&=\frac{1}{3}+\frac{\sqrt{3}}{54}s+\mathcal{O}(s^2),\qquad
\tau_2(s)=\frac{1}{6}-\frac{\sqrt{3}}{27}s+\mathcal{O}(s^2),\qquad
\tau_3(s)=\frac{\sqrt{3}}{54}s+\frac{1}{108}s^2+\mathcal{O}(s^3).
\end{aligned}
\end{equation}

Furthermore, the discriminant of the cubic equation \eqref{cubictaus} is
$\Delta(s)=9(1-s^2)$. When $s=1$, the cubic equation \eqref{cubictaus}
has a double root and a single root:
\begin{equation}
\tau_1(1)=\frac{1}{6}+\frac{\sqrt{3}}{9}, \qquad
\tau_2(1)=\tau_3(1)=\frac{1}{6}-\frac{\sqrt{3}}{18},
\end{equation}
and similarly when $s=-1$:
\begin{equation}
\tau_1(-1)=\tau_2(-1)=\frac{1}{6}+\frac{\sqrt{3}}{18}, \qquad
\tau_3(-1)=\frac{1}{6}-\frac{\sqrt{3}}{9}.
\end{equation}

Since we are interested in the solution $x(u)=\tau(s)/u$ that is bounded near $u^2=s=0$, we need to 
choose the solution $\tau_3(s)$.

For convenience, we make the following linear change of variables (depending on $u$):
\begin{equation}\label{zetaz}
\zeta=\frac{2z-a-b}{b-a},
\end{equation}
so that the interval $[a,b]$ is mapped to
$[-1,1]$. Then the equilibrium measure becomes
\begin{equation}\label{rhou}
\varrho_u(\z)d\z=-\frac{3(b-a)^3u}{16\pi}\sqrt{1-\zeta^2}\left(\zeta-\frac{1-6ux}{3uy}\right)d\z,
\end{equation}
where $y=(b-a)/2$. This parameter satisfies the equation
\begin{equation}
y^2=\frac{4}{1-6ux},
\end{equation}
so
\begin{equation}
y=\frac{2}{\sqrt{1-6ux}}=\frac{2}{\sqrt{1-6\tau}},
\end{equation}
since we assume that $b>a$ and therefore $y>0$.

The extra root of $\varrho_u(\z)$ is
\begin{equation}\label{z0}
\z_0=\frac{1-6ux}{3uy}=\frac{1}{6u}(1-6\tau)^{3/2},
\end{equation}
 which is outside the interval $[-1,1]$ when $u<u_c$ 
(actually $\z_0>1$), and coalesces with the right endpoint when $u=u_c$, see \cite{BD}. 

Near the critical point $s=1$, we write $\Delta s=1-s$ and by perturbation we obtain
\begin{equation}
\tau_{3\pm}(s)=\frac{1}{6}-\frac{\sqrt{3}}{18}\pm
\frac{\sqrt{2}}{18}(\Delta s)^{1/2}+\frac{\sqrt{3}}{162}\Delta s+\mathcal{O}((\Delta s)^{3/2}).
\end{equation}

For $y$, we have the solutions
\begin{equation}
y_3(s)=2\cdot 3^{1/4}+2^{1/2}
3^{-1/4}(\Delta s)^{1/2}+\frac{11\cdot
3^{1/4}}{18}\Delta s+\mathcal{O}((\Delta s)^{3/2}).
\end{equation}

Regarding the double root, we have
\begin{equation}\label{z0s}
\zeta_0=1+\frac{\sqrt{6}}{2}(\Delta s)^{1/2}+\frac{7}{12}\Delta s+\mathcal{O}((\Delta s)^{3/2}),
\end{equation}
taking the solution $\tau_{3+}(s)$.

Note that this is consistent with the expected behavior: if $u<u_c$
then $s<1$, and $\zeta_0>1$, so the double root is outside the
interval where the equilibrium measure is supported. If $u>u_c$ then
$s>1$, and both the endpoints and the double root become complex. A more complete picture of the different cases that can occur in this cubic model, using the notion of $S$-curves in the complex plane and numerical computations, is described in \cite[\S 4]{AMAM}. Note also that the dependence of the endpoints on the parameter $s$ is not analytic near the critical case.

\subsection{Modified equilibrium measure at the critical case}
When $u=u_c$, the equilibrium measure is supported on the interval $[a_c,b_c]=[3^{3/4}-3^{5/4},3^{3/4}+3^{1/4}]$, and in the new variable $\z$, see \eqref{zetaz}, the polynomial $V(z;u)=\frac{z^2}{2}-uz^3$ becomes
\begin{equation}\label{Vzeta}
V(\zeta;u)=V_{cr}(\zeta)+(u-u_c)V^o(\zeta).
\end{equation}
where
\begin{equation}\label{Vzeta1}
V_{cr}(\zeta)=\frac{(2\z+\sqrt{3}-1)^2(-2\z+2\sqrt{3}+1)}{6}
\end{equation}
and
\begin{equation}\label{Vzeta2}
V^o(\zeta)=-3^{3/4}(2\z+\sqrt{3}-1)^3.
\end{equation}
The equilibrium measure at the critical time has the density, from \eqref{rhou}:
\begin{equation}
\varrho_{cr}(x)dx=\frac{2}{\pi }(1-x)\sqrt{1-x^2}\,dx,\qquad -1\le x\le
1.
\end{equation}
We observe that this is indeed a probability density function, but the equilibrium measure is not regular, since its density vanishes with a $3/2$ exponent at the right endpoint. In this respect, the analysis is very similar to the one carried out in \cite{DK}, but without the symmetry around the origin present in that case.

The resolvent of $\varrho_{cr}(x)$ is
\begin{equation}
\begin{aligned}
\om_{cr}(\zeta)=\int_{-1}^1 \frac{\varrho_{cr}(x)dx}{\z-x}\,,
\qquad \zeta\in\mathbb{C}\setminus [-1,1]
\end{aligned}
\end{equation}
and it solves the equation
\begin{equation}
\begin{aligned}
\om_{cr+}(x)+\om_{cr-}(x)=V'_{cr}(x)=-4x^2+4x+2\,,\qquad x\in(-1,1).
\end{aligned}
\end{equation}

It is equal to
\begin{equation}\label{omcr}
\begin{aligned}
\om_{cr}(\zeta)=\frac{V'_{cr}(\z)}{2}+\pi i\varrho_{cr}(\z)=-2\z^2+2\z+1+2(\zeta+1)^{1/2}(\zeta-1)^{3/2}.
\end{aligned}
\end{equation}

Let us extend the function $\varrho_{cr}(x)$ to the complex plane as
\begin{equation}
\begin{aligned}
\varrho_{cr}(\z)=\frac{2i}{\pi }(\z+1)^{1/2}(\z-1)^{3/2},\qquad
\z\in\C\setminus[-1,1],
\end{aligned}
\end{equation}
with a cut on $[-1,1]$, so that
\begin{equation}
\begin{aligned}
\varrho_{cr}(x+i0)=\frac{2}{\pi }(x+1)^{1/2}(1-x)^{3/2},\qquad
x\in(-1,1).
\end{aligned}
\end{equation}
In what follows the following function will be important:
\begin{equation}\label{phicr1}
\phi_{cr}(\z)=-2\pi i\int_{1}^{\zeta}
\varrho_{cr}(s)ds=4\int_{1}^{\zeta} (s+1)^{1/2}(s-1)^{3/2}ds,
\qquad \z\in\C\setminus (-\infty,1],
\end{equation}
where the integration goes over the segment $[1,\z]$ on the complex plane. It can be integrated explicitly as
\begin{equation}\label{phicr1exp}
\phi_{cr}(\z)=\frac{2\sqrt{\zeta^2-1}(\zeta-2)(2\zeta+1)}{3}+2\log\left(\z+\sqrt{\z^2-1}\right).
\end{equation}

We plot the level curves of the function $\Re \phi_{cr}(\z)$
in Figure \ref{GaY}, and based on this information we take the following contour $\Ga_Y$ for the
Riemann--Hilbert analysis: the real axis from $-\infty$ to $\zeta=1$
and then a combination of steepest descent path of $\textrm{Re}\, \phi_{cr}(\zeta)$ into the upper and lower half plane, on which $\Im \phi_{cr}(\z)=0$, see Figure \ref{GaY}. This combination is determined by the arbitrary complex parameter $\alpha$.

\begin{figure}[h!]
\centerline{
\includegraphics[height=65mm,width=65mm]{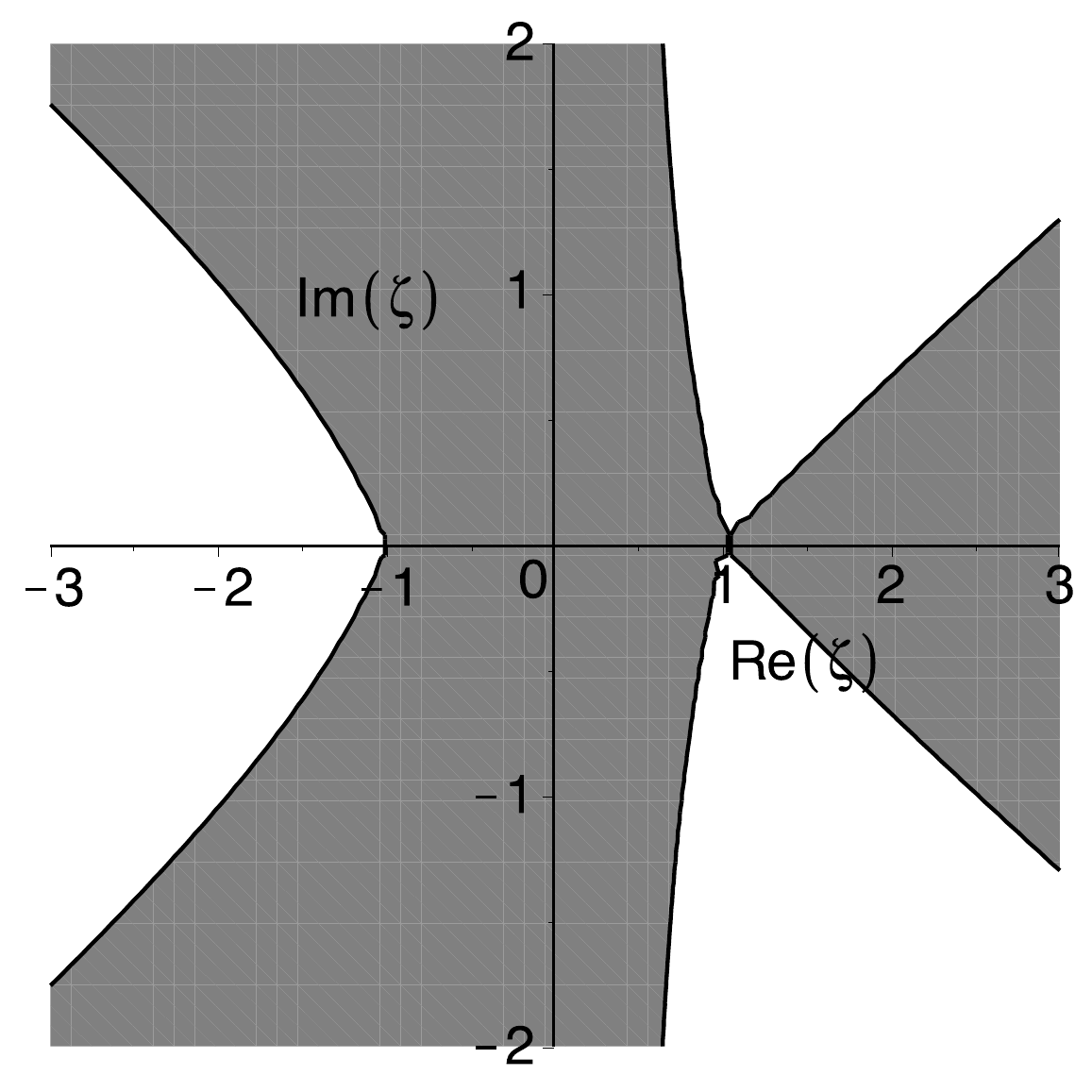}
\hspace*{10mm}
\includegraphics[height=63mm,width=72mm]{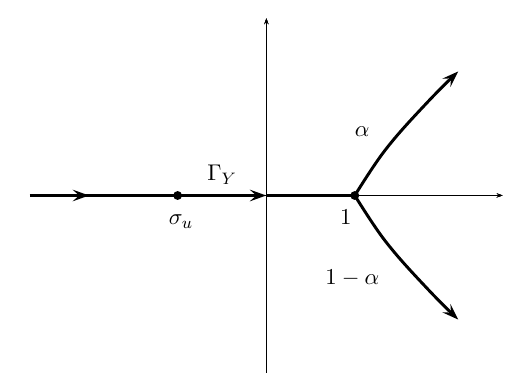}}
\vspace{5mm}
 \caption{On the left, level curves of
 $\textrm{Re}\,\phi_{cr}(\zeta)$ in the critical case $u=u_c$. In white, the region where $\textrm{Re}\, \phi_{cr}(\z)<0$, in grey, the region where $\textrm{Re}\, \phi_{cr}(\z)>0$. On the right, the contour $\Ga_Y$ for the Riemann-Hilbert formulation.}
\label{GaY}
\end{figure}

\subsection{Modified equilibrium measure near the critical case}
Following the ideas in \cite[\S 4.2]{DK}, but without the symmetry present in the quartic case, we construct a modified
equilibrium measure $\mu_u(x)$ in order to study the problem near
the critical case. This measure solves a minimization problem over
signed measures supported on the interval $[\sigma_u,1]$. The left
endpoint $\sigma_u$ will depend on $u$, and will become $-1$ when
$u=u_c$.

Because of the Euler--Lagrange equations for the equilibrium measure $\mu_u$, see for instance \cite{Deift}, we have
\begin{equation}\label{EL}
2\int \log \frac{1}{|x-y|}d\mu_u(y)+V(x;u)+\ell_u=0,
\qquad x\in(\sigma_u,1),
\end{equation}
for some constant $\ell_u$. Differentiating with respect to $x$, we
obtain
\begin{equation}\label{EL2}
2\cdot \operatorname{PV} \int \frac{d\mu_u(y)}{x-y}=V'(x;u),
\qquad x\in(\sigma_u,1).
\end{equation}

Differentiating \eqref{Vzeta}, we get
\begin{equation}\label{Vps}
V'(\zeta;u)=V'_{cr}(\zeta)+(u-u_c){V^o}'(\z)=-4\z^2+4\z+2-2\cdot3^{7/4}\left(2\zeta-1+\sqrt{3}\right)^2(u-u_c),
\end{equation}
using \eqref{zetaz} again. This modified equilibrium measure must verify
\begin{equation}\label{mem2a}
\begin{aligned}
\int_{\sigma_u}^1 d\mu_u(y)&=1,\\
\operatorname{PV} \int_{\sigma_u}^1 \frac{d\mu_u(y)}{x-y}&=-2x^2+2x+1-3^{7/4}\left(2x-1+\sqrt{3}\right)^2(u-u_c), \qquad x\in(\sigma_u,1).\\
\end{aligned}
\end{equation}

The resolvent,
\begin{equation}\label{mem1}
\om_u(\z)=\int_{\sigma_u}^1 \frac{d\mu_u(y)}{\z-y}, \qquad
\z\in \C\setminus[\sigma_u,1],
\end{equation}
because of \eqref{mem2a}, must satisfy the following:
\begin{equation}\label{mem2}
\begin{aligned}
&\om_u(\z)=\frac{1}{\z}+\mathcal{O}(\z^{-2}),\qquad \z\to\infty,\\
&\om_{u+}(x)+\om_{u-}(x)=-4x^2+4x+2-2\cdot 3^{7/4}\left(2x-1+\sqrt{3}\right)^2(u-u_c), \qquad x\in(\sigma_u,1).\\
\end{aligned}
\end{equation}
Consequently, we look for $\om_u(\z)$ in the form
\begin{equation}\label{mem3}
\om_u(\z)=-2\z^2+2\z+1-3^{7/4}\left(2\zeta-1+\sqrt{3}\right)^2(u-u_c)
-\frac{m_u(\zeta)}{2}\left(\frac{\z-\sigma_u}{\z-1}\right)^{1/2},
\qquad \z\in \C\setminus[\sigma_u,1],
\end{equation}
where the square root is taken on the principal
sheet, with a cut on $[\sigma_u,1]$. From the second equation in
\eqref{mem2}, it follows that $m_u(\z)$ has no jump on
$(\sigma_u,1)$, and hence it should be analytic in $\C$. From the
first equation in \eqref{mem2}, we obtain that $m_u(\z)$ is a
quadratic polynomial:
\begin{equation}\label{mem4}
m_u(\z)=a_2(\z-1)^2+a_1(\z-1)+a_0.
\end{equation}

Imposing the first condition in \eqref{mem2}, we can derive equations for the coefficients $a_0$, $a_1$ and $a_2$:
\begin{equation}\label{eqnsaj}
\begin{aligned}
a_2&=-4-8\cdot 3^{7/4}(u-u_c),\\
a_1&=-2(\sigma_u+1)-12\cdot 3^{3/4}(\sigma_u+2\sqrt{3}+1)(u-u_c)\\
a_0&=-\frac{1}{2}(\sigma_u+1)(3\sigma_u-5)
-3^{7/4}(3\sigma_u^2+(4\sqrt{3}-2)\sigma_u+7)(u-u_c).
\end{aligned}
\end{equation}

The endpoint $\sigma_u$ is a solution of the following cubic
equation:
\begin{equation}\label{cubicau}
\begin{aligned}
-\frac{1}{8}(\sigma_u+1)(5\sigma_u^2-14\sigma_u+13)
-\frac{3^{7/4}}{4}(\sigma_u-1)(5\sigma_u^2+(6\sqrt{3}-4)\sigma_u+7-2\sqrt{3})(u-u_c)=0.
\end{aligned}
\end{equation}

This last equation can be solved by perturbation, and we obtain
\begin{equation}\label{suseries}
\sigma_u=-1+3^{7/4}(2-\sqrt{3})(u-u_c)+\mathcal{O}((u-u_c)^2).
\end{equation}

Furthermore, the discriminant of the cubic equation \eqref{cubicau} is
\begin{equation}
\Delta_u=-16- 12\cdot 3^{1/4}(14\sqrt{3}+9)(u-u_c)+\mathcal{O}((u-u_c)^2),
\end{equation}
so for small $|u-u_c|$ we have $\Delta_u\neq 0$, and therefore we have three different solutions. In particular, the solution $\sigma_u$ is analytic in $u$ in a neighborhood of $u=u_c$.

Using \eqref{suseries}, we can write the coefficients $a_0$, $a_1$
and $a_2$ in powers of $u-u_c$:
\begin{equation}
\begin{aligned}
a_2&=-4-8\cdot 3^{7/4}(u-u_c),\\
a_1&=-3^{7/4}(4+6\sqrt{3})(u-u_c)+\mathcal{O}((u-u_c)^2),\\
a_0&=-12\cdot 3^{3/4}(u-u_c)+\mathcal{O}((u-u_c)^2).
\end{aligned}
\end{equation}

Higher order terms in $u-u_c$ can be computed by the same procedure. Note that when $u=u_c$ we have $m_{u_c}(\z)=-4(\z-1)^2$ and $\sigma_{u_c}=-1$, so
\begin{equation}
\om_{u_c}(\z)=-2\z^2+2\z+1+2(\z+1)^{1/2}(\z-1)^{3/2}
\qquad \z\in \C\setminus[-1,1],
\end{equation}
which recovers \eqref{omcr}. Finally, if we write the density as $d\mu_u(\z)=\psi_u(\z)d\z$, then
\begin{equation}\label{mem8}
\begin{aligned}
\psi_u(x)&=-\frac{1}{2\pi i}(\om_{u+}(x)-\om_{u-}(x))=
-\frac{m_u(x)\sqrt{x-\sigma_u}}{2\pi\sqrt{1-x}}, \qquad x\in (\sigma_u,1).
\end{aligned}
\end{equation}

\section{The Riemann--Hilbert problem}\label{RHP}


Following the work of Fokas, Its and Kitaev \cite{FIK}, consider the contour $\Ga_Y$ on Figure \ref{GaY} and
the following Riemann-Hilbert problem (RHP)
for a $2\times 2$ matrix-valued function $Y=Y_{N,n,u}\,:\; \C\setminus \Ga_Y\to \C^{2\times 2}$:
\begin{enumerate}
\item $Y(\z)$ is analytic in $\C\setminus \Ga_Y$, and for every $s\in\Ga_Y$ the limits
\begin{equation}\label{RHP_1}
Y_{\pm}(\z)=\lim_{s\to\z,\; s\in \Om_{\pm}}Y(s),
\end{equation}
exist, where $\Om_{\pm}$ are the left and the right sides of $\Ga_Y$, respectively, oriented as in Figure \ref{GaY}.
\item On $\Ga_Y$, the function $Y(\z)$ has a multiplicative jump:
\begin{equation}\label{RHP_2}
Y_{+}(\z)=Y_{-}(\z)
\begin{cases}
\begin{pmatrix}
1 & e^{-NV(\z;u)} \\
0 & 1
\end{pmatrix}, \qquad \z\in(-\infty,1]\\
\begin{pmatrix}
1 & \alpha e^{-NV(\z;u)} \\
0 & 1
\end{pmatrix}, \qquad \z\in\Ga_Y\cap\{\textrm{Im}\, \z>0\}\\
\begin{pmatrix}
1 & (1-\alpha)e^{-NV(s;u)} \\
0 & 1
\end{pmatrix}, \qquad \z\in\Ga_Y\cap\{\textrm{Im}\, \z<0\}.
\end{cases}
\end{equation}
\item As $\z\to\infty$,
\begin{equation}\label{RHP_3}
Y(\z)=\big(I+\mathcal O(\z^{-1})\big)
\begin{pmatrix}
\z^n & 0 \\
0 & \z^{-n}
\end{pmatrix}.
\end{equation}
\end{enumerate}

We call $n$ the {\it degree} of the RHP. This RHP has a unique
solution if and only if the monic polynomial $p_n(z)$, orthogonal
with respect to the weight function $w(z)$
uniquely exists. If additionally $p_{n-1}(z)$ uniquely exists, then the
solution of the RHP is given by:
\begin{equation}\label{Yz}
Y(\z)=Y_{n}(\z)= \begin{pmatrix} p_n(\z)  & (\mathcal{C}p_n w)(\z) \\[1mm]
    -\frac{2\pi i}{h_{n-1}}p_{n-1}(\z) & -\frac{2\pi i}{h_{n-1}}(\mathcal{C}p_{n-1}w)(\z)
    \end{pmatrix},
\end{equation}
where
\begin{equation}\label{Cauchy}
    (\mathcal{C}f)(\z)=\frac{1}{2\pi i}
    \int_{\Gamma}\frac{f(s)}{s-\z}\ d s
\end{equation}
is the Cauchy transform of $f$ on $\Gamma$, and the coefficient $h_{n-1}$
is defined as
\begin{equation}\label{hn}
h_{n-1}=\int_{\Gamma}p^2_{n-1}(s)w(s)d s.
\end{equation}

Conversely, as a consequence of the jump matrix and orthogonality, see
for instance \cite{Deift}, we have the following result:
\begin{proposition} \label{exi} Suppose that the RHP \eqref{RHP_1}-\eqref{RHP_3} has a solution $Y_n(z)$ for some $n$. Then
the orthogonal polynomial $p_n(\z)$ exists uniquely.
\end{proposition}

A very important identity satisfied by the sequence of orthogonal polynomials, if it
exists, is the following three term recurrence relation, see for instance \cite[\S 3.2]{Sze} or \cite[\S 1.4]{Chi}:
\begin{proposition} \label{ttrr}
Suppose that RHP \eqref{RHP_1}-\eqref{RHP_3} has solutions
$Y_{n-1}(\z)$, $Y_n(\z)$, and $Y_{n+1}(\z)$ for the degrees $n-1$, $n$,
and $n+1$, respectively. Then the orthogonal polynomials
$p_{n-1}(\z)$, $p_n(\z)$, and $p_{n+1}(\z)$, which uniquely exist by
Proposition \ref{exi}, satisfy the three term recurrence relation,
\begin{equation}\label{rr1}
\z p_n(\z)=p_{n+1}(\z)+\beta_{N,n}p_n(\z)+\gamma_{N,n}^2 p_{n-1}(\z).
\end{equation}
\end{proposition}

We note that all the matrices in the subsequent transformations will depend on $u$, $n$ and $N$, but we will omit this for brevity.

\subsection{The $g$-function}
The $g$-function associated with the modified equilibrium measure $d\mu_u(x)$ is
\begin{equation}\label{g1}
g_u(\z)=\int_{\sigma_u}^1 \log (\z-y) d\mu_u(y).
\end{equation}
It has the following properties:
\begin{enumerate}
\item $g_u(\z)$ is analytic for $\z\in\C\setminus (-\infty,1]$, and it has limiting values
$g_{u\pm}(x)$ as $\z\to x\pm i0$, $x\in (-\infty,1)$.
\item For $\z\in\C\setminus (-\infty,1]$,
\begin{equation}\label{g1a}
\frac{dg_u(\z)}{d\z}=\om_u(\z).
\end{equation}
\item By \eqref{EL}, $g_u(x)$ satisfies the Euler-Lagrange equation,
\begin{equation}\label{g2}
g_{u+}(x)+g_{u-}(x)-V(x;u)-\ell_u=0,\qquad x\in(\sigma_u,1),
\end{equation}
\item As $\z\to\infty$,
\begin{equation}\label{g3}
g_u(\z)=\log \z+\mathcal O(\z^{-1}).
\end{equation}
\item By \eqref{g1}, the difference of boundary values of $g(x)$ on the real axis is
\begin{equation}\label{g4}
G_u(x)\equiv g_{u+}(x)-g_{u-}(x)=
\begin{cases}
0, \qquad x\in[1,\infty)\\
2\pi i\displaystyle\int_x^1
d\mu_u(y),\qquad x\in(\sigma_u,1)\\
2\pi i, \qquad x\in (-\infty,\sigma_u].
\end{cases}
\end{equation}
\item Since the density of $d\mu_u(x)$, $\psi_u(x)$, is analytic on $(\sigma_u,1)$, the function $G_u(x)$ is analytic on $(\sigma_u,1)$ too,
and by \eqref{g4} and the Cauchy--Riemann equations,
\begin{equation}\label{g4aa}
\frac{G_u(x+iy)}{dy}\bigg|_{y=0}=2\pi \psi_u(x),\qquad \sigma_u<x<1.
\end{equation}
\end{enumerate}

In what follows, the function
\begin{equation}\label{g4b}
\phi_u(\z)=2g_{u}(\z)-V(\z;u)-\ell_u,\qquad \z\in\C\setminus(-\infty,1],
\end{equation}
will be important. By adding equations \eqref{g2} and \eqref{g4}, we obtain that
\begin{equation}\label{g5}
\phi_{u+}(x)=2g_{u+}(x)-V(x;u)-\ell_u=2\pi i\int_x^1 \psi_u(y)d(y),\qquad x\in(\sigma_u,1).
\end{equation}
and also $\phi_{u+}(x)=G_u(x)$ for $x\in(\sigma_u,1)$. By analytic continuation, we can extend this equation from $x\in
(\sigma_u,1)$ to $\z\in\C\setminus(-\infty,1]$. To this end we first
extend the density $\psi_u(x)$, for $x\in(\sigma_u,1)$, to a
function $r_u(\z)$ analytic on $\C\setminus[\sigma_u,1]$ such that

\begin{equation}\label{g6}
r_{u\pm}(x)=\pm\psi_u(x),\qquad x\in(\sigma_u,1).
\end{equation}

Equation \eqref{g5} can be now analytically extended to $\z\in\C\setminus(-\infty,1]$ as
\begin{equation}\label{g9a}
\phi_u(\z)=2g_{u}(\z)-V_u(\z)-\ell_u=2\pi i\int_{\z}^1 r_u(s)ds,\qquad \z\in\C\setminus(-\infty,1],
\end{equation}
where integration is taken over the segment joining $\z$ and $1$ in the complex plane.

Observe that similar to \eqref{g5},
\begin{equation}\label{g12}
\phi_{u-}(x)=2g_{u-}(x)-V_u(x)-\ell_u=-2\pi i\int_x^1 \psi_u(y)dy,\qquad x\in(\sigma_u,1),
\end{equation}
and $\phi_{u-}(x)=-G_u(x)$ for $x\in(\sigma_u,1)$. Also,
\begin{equation}\label{g13}
\phi_{u\pm}(x)=\pm 2\pi i+2\pi i\int_x^{\sigma_u} r_u(y)dy,\qquad
x\in(-\infty,\sigma_u].
\end{equation}
The function $G_u(x)$, defined in \eqref{g4}, is analytic on the interval $(\sigma_u,1)$. By
\eqref{g5} and \eqref{g12} it is analytically extended to the set $\C\setminus\big((-\infty,\sigma_u]\cup [1,\infty)\big)$,
and
\begin{equation}\label{g14}
\phi_{u}(\z)=\pm G_u(\z)\quad \textrm{for}\quad \pm\Im\z>0.
\end{equation}

\subsection{First transformation of the RHP}
Define $T(\z)$ as
\begin{equation}\label{ft4}
T(\z)=e^{-\frac{N \ell_u\sg_3}{2}}Y(\z)e^{-N\left[g_u(\z)-\frac{\ell_u}{2}\right]\sg_3},
\qquad \sg_3=\begin{pmatrix} 1 & 0\\ 0 & -1\end{pmatrix}.
\end{equation}
Then $T(\z)$ solves the following RHP:
\begin{enumerate}
\item $T(\z)$ is analytic in $\C\setminus \Ga_Y$, and for every $\z\in\Ga_Y$ the following limits exist:
\begin{equation}\label{ft5}
T_{\pm}(\z)=\lim_{s\to\z,\; s\in \Om_{\pm}}T(s).
\end{equation}
\item The jumps for the matrix $T$ are as follows:
\begin{equation}\label{ft7}
T_{+}(\z)=T_{-}(\z)
\begin{cases}
\begin{pmatrix}
e^{-NG_u(\z)} & 0 \\
0 & e^{NG_u(\z)}
\end{pmatrix},\qquad \z\in(\sigma_u,1)\\
\begin{pmatrix}
1 &\alpha e^{N\phi_u(\z)}\\
0 & 1
\end{pmatrix},\qquad \z\in\Gamma_Y\cap\{\textrm{Im}\, \z>0\}\\
\begin{pmatrix}
1 &(1-\alpha) e^{N\phi_u(\z)}\\
0 & 1
\end{pmatrix},\qquad \z\in\Gamma_Y\cap\{\textrm{Im}\, \z<0\}\\
\begin{pmatrix}
1 & e^{N\tilde\phi_u(\z)} \\
0 & 1
\end{pmatrix},\qquad \z\in(-\infty,\sigma_u],
\end{cases}
\end{equation}
where
\begin{equation}\label{ft11}
\tilde\phi_u(x)
=2\pi i\int_x^{\sigma_u}r_u(y)dy,\qquad x\in(-\infty,\sigma_u],
\end{equation}
using \eqref{g5}, \eqref{g12} and \eqref{g13}.

\item As $\z\to\infty$,
\begin{equation}\label{ft8}
T(\z)=I+\mathcal O\left(\z^{-1}\right).
\end{equation}
\end{enumerate}

\subsection{Second transformation of the RHP: Opening of lenses}
The jump matrix $J_T(x;u)$ can be factored as follows on the interval $(\sigma_u,1)$:
\begin{equation}\label{st1}
\begin{pmatrix}
e^{-NG_u(x)} & 1 \\
0 & e^{NG_u(x)}
\end{pmatrix}=\begin{pmatrix}
1 & 0 \\
e^{NG_u(x)} & 1
\end{pmatrix}\begin{pmatrix}
0 & 1 \\
-1 & 0
\end{pmatrix}\begin{pmatrix}
1 & 0 \\
e^{-NG_u(x)} & 1
\end{pmatrix}.
\end{equation}

\begin{figure}
\centerline{
\includegraphics{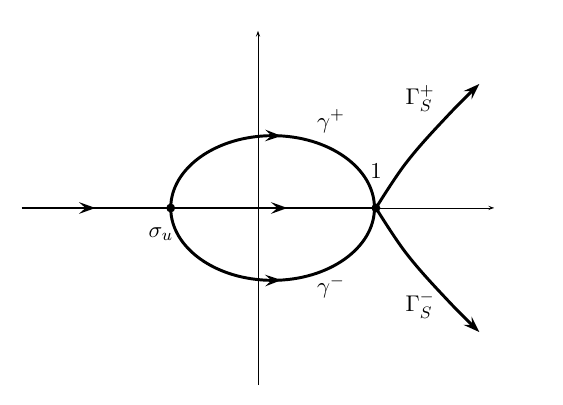}}
 \caption{The contour $\Ga_S$ for the Riemann-Hilbert analysis.}
\label{GaS}
\end{figure}

We introduce the contour $\Ga_S$ as shown on Figure \ref{GaS}, and we set
\begin{equation}\label{st2}
S(\z)=
\left\{
\begin{aligned}
&T(\z)
\begin{pmatrix}
1 & 0 \\
-e^{-NG_u(\z)} & 1
\end{pmatrix},\quad \textrm{ if $\z$ is in the upper part of the lens},\\
&T(\z)
\begin{pmatrix}
1 & 0 \\
e^{NG_u(\z)} & 1
\end{pmatrix},\quad \textrm{if $\z$ is in the lower part of the lens},\\
&T(\z),\quad {\rm otherwise.}
\end{aligned}
\right.
\end{equation}
Then $S(\z)$ solves the following RHP:
\begin{enumerate}
\item $S(\z)$ is analytic in $\C\setminus \Ga_S$, and for every $\z\in\Ga_S$ the following limits exist:
\begin{equation}\label{st3}
S_{\pm}(\z)=\lim_{s\to\z,\; s\in \Om_{\pm}}S(s).
\end{equation}
\item The matrix $S$ has the following jumps:
\begin{equation}\label{st5}
S_{+}(\z)=S_{-}(\z)
\left\{
\begin{aligned}
&\begin{pmatrix}
1 & e^{N\tilde\phi_u(\z)} \\
0 & 1
\end{pmatrix},\qquad \z\in(-\infty,\sigma_u],\\
&\begin{pmatrix}
0 & 1 \\
-1 & 0
\end{pmatrix},\qquad \z\in(\sigma_u,1),\\
&\begin{pmatrix}
1 & \alpha e^{N\phi_u(\z)} \\
0 & 1
\end{pmatrix},\qquad \z\in \Ga^+_S\\
&\begin{pmatrix}
1 & (1-\alpha)e^{N\phi_u(\z)} \\
0 & 1
\end{pmatrix},\qquad \z\in \Ga^-_S\\
&\begin{pmatrix}
1 & 0 \\
e^{-NG_u(\z)} & 1
\end{pmatrix},\qquad \z\in\ga^+,\\
&\begin{pmatrix}
1 & 0 \\
e^{NG_u(\z)} & 1
\end{pmatrix},\qquad \z\in\ga^-,
\end{aligned}
\right.
\end{equation}
where $\ga^+$ and $\ga^-$ are the curved boundaries of the upper and lower part of the lens, respectively. Recall that the function $\tilde{\phi}_u(s)$ is given by \eqref{ft11}.
\item As $\z\to\infty$,
\begin{equation}\label{st6}
S(\z)=I+\mathcal O\left(\z^{-1}\right).
\end{equation}
\end{enumerate}

Note that because of \eqref{g14}, we can rewrite the jump conditions on the lips of the lens as follows:
\begin{equation}
J_S(\z;u)=\begin{pmatrix}
1 & 0 \\
e^{-N\phi_u(\z)} & 1
\end{pmatrix},\qquad \z\in\ga^+\cup\ga^-.
\end{equation}

As observed in \cite{DK}, the use of the modified equilibrium measure causes some difficulties when opening the lens. If the density of the modified measure $\psi_u(y)>0$, then \eqref{g4} implies
\begin{equation}
\frac{d}{dx}\textrm{Im}\, G_u(x+0i)=-2\pi i \psi_u(x)<0
, \qquad x\in(\sigma_u,1),
\end{equation}
and then, because of the Cauchy--Riemann equations, we deduce that
\begin{equation}
\frac{d}{dy}\textrm{Re}\, G_u(x+iy)>0, \qquad x\in(\sigma_u,1),
\end{equation}
and the off--diagonal entries of the jump matrix $J_S(s;u)$ on the lips of the lens $\gamma_+\cup\gamma_-$, see \eqref{st5}, are exponentially decaying. However, the modified density $\psi_u(y)$ becomes  negative for $x$ close to $1$, so we obtain exponentially increasing terms. This can be controlled if $|u-u_c|$ is small enough, though, as shown in the following lemmas:

\begin{lemma} \label{ru}
Let $U$ be a neighborhood of the point $1$, given $\ep>0$ there exists $\delta>0$ such that for every $u\in\mathbb{R}$ with $|u-u_c|<\delta$, we have $\textrm{Re}\, \phi_u>\ep$ on the lips of the lens outside $U$.
\end{lemma}
\begin{proof}
The point where the density $\psi_u(x)$ changes sign in $[\sigma_u,1]$ can be computed explicitly to leading order in $u-u_c$, using \eqref{mem8}. In fact, we have two roots of the factor $m_u(\z)$:
\begin{equation}
x_{\pm}=1\pm 3^{7/8}\sqrt{5-2\sqrt{3}}\,(u_c-u)^{1/2}+\mathcal{O}(u-u_c),
\end{equation}
and so given $\ep>0$, we can find $\delta>0$ small enough such that for
$|u-u_c|<\delta$, the function $\psi_u$ is positive in the interval $(\sigma_u,x_-)$, and then $\textrm{Re}\,\phi_u>\ep$ on the lips of the lens, away from a small neighborhood of the endpoint $x=1$.
\end{proof}

On the contours $\Gamma^{\pm}_S$, we can prove the following:
\begin{lemma} \label{ru+} For any $\ep>0$ there exists $\de>0$ in such a way that if $|u-u_c|<\de$ then on the curve
\[
\Ga_S^{\pm\ep}=\Ga_S\cap \{\pm \Im \z\ge \ep\}
\]
there is a constant $C>0$ such that we have the inequality,
\begin{equation}\label{ft20}
\phi_u(\z)\le -C|\z-1|^3,\qquad s\in\Ga_Y^{\ep}.
\end{equation}
\end{lemma}

\begin{proof}
We use the fact that
\begin{equation}
\phi_u(\z)=\phi_{cr}(\z)+(u-u_c)\phi^o(\z),
\end{equation}
together with the explicit expression for $\phi_{cr}(\z)$ given in \eqref{phicr1exp}. In particular, we have
\begin{equation}
\phi_{cr}(\z)=\frac{4}{3}\z^3+\mathcal{O}(\z^2), \qquad\z\to\infty,
\end{equation}
so if we substitute $\z=1+re^{\pm\pi i/5}$, we have
\begin{equation}
\phi_{cr}(r)=Cr^3+\mathcal{O}(r^2), \qquad \textrm{Re}\, C<0.
\end{equation}

A similar estimate can be computed for the term $\phi^o(\z)$:
\begin{equation}
\phi^o(r)=\tilde{C}_u r^3+\mathcal{O}(r^2), \qquad \textrm{Re}\, C_u<0,
\end{equation}
if $|u-u_c|$ is small enough, and then we have the desired decay. 
\end{proof}

Lemma \ref{ru+} implies that for any $\ep>0$ there exists $\de>0$
such that if $|u-u_c|\le \de$ then
\begin{equation}\label{ft21}
J_S(\z;u)=
\begin{pmatrix}
1 & \mathcal O(e^{-c(\z)N}) \\
0 & 1
\end{pmatrix},\qquad \z\in\Ga_S^{\ep},
\end{equation}
where $c(\z)=C|\z-1|^3$, with $C>0$.

\subsection{Model solution}
The model solution $M(\z)$ solves the following  RHP:
\begin{enumerate}
\item $M(\z)$ is analytic in $\C\setminus [\sigma_u,1]$, and for every $x\in(\sigma_u,1)$ the following limits exist:
\begin{equation}\label{ms1}
M_{\pm}(\z)=\lim_{s\to \z,\; s\in \Om_{\pm}}M(s),
\end{equation}
\item On the interval $(\sigma_u,1)$,
\begin{equation}\label{ms2}
M_{+}(x)=M_{-}(x)\begin{pmatrix}
0 & 1 \\
-1 & 0
\end{pmatrix}.
\end{equation}
\item As $\z\to\infty$,
\begin{equation}\label{ms3}
M(\z)=I+\mathcal O\left(\z^{-1}\right).
\end{equation}
\end{enumerate}
The model solution is explicitly equal to
\begin{equation}\label{ms4}
M(\z)=\frac{1}{2}
\begin{pmatrix}
1 & 1\\ i & -i
\end{pmatrix}
\begin{pmatrix}
\beta^{-1}(\z) & 0\\ 0 & \beta(\z)
\end{pmatrix}
\begin{pmatrix}
1 & 1\\ i & -i
\end{pmatrix}^{-1}
=
\begin{pmatrix}
\frac{\be(\z)+\be^{-1}(\z)}{2} & \frac{\be(\z)-\be^{-1}(\z)}{-2i}  \\
\frac{\be(\z)-\be^{-1}(\z)}{2i} & \frac{\be(\z)+\be^{-1}(\z)}{2}
\end{pmatrix},
\end{equation}
where
\begin{equation}\label{ms5}
\be(\z)=\left(\frac{\z-\sigma_u}{\z-1}\right)^{1/4},
\end{equation}
with a cut on $[\sigma_u,1]$ and the branch such that $\be(\infty)=1$.

\subsection{Parametrix around the point $\z=1$}
The local parametrix near the endpoint $\z=1$ will be given in terms
of a certain solution of the Painlev\'e I differential equation. We
consider a disc $D(1,\ep)=\{\z:\; |\z-1|\le\ep\}$ around the right
endpoint of the support. We seek a function $P(\z)$ that satisfies
the following Riemann--Hilbert problem:
\begin{enumerate}
\item $P(\z)$ is analytic in $\mathbb{C}\setminus\Gamma_S$, see Figure \ref{GaS}, and for every $s\in\Ga_S$ the limits
\begin{equation}
P_{\pm}(\z)=\lim_{s\to \z,\; s\in \Om_{\pm}}P(s),
\end{equation}
exist.
\item On $D(1,\ep)\cap\Gamma_S$, the function $P$ has the following jumps:
\begin{equation}\label{P_jumps}
P_{+}(\z)=P_{-}(\z)
\begin{cases}
\begin{pmatrix} 1 & \alpha e^{N\phi_u(\z)}\\ 0 & 1
\end{pmatrix}, \qquad \z\in D(1,\ep)\cap \Gamma_S^+,\\
\begin{pmatrix} 1 & (1-\alpha)e^{N\phi_u(\z)}\\ 0 & 1
\end{pmatrix}, \qquad \z\in D(1,\ep)\cap \Gamma_S^-,\\
\begin{pmatrix} 1 & 0\\ e^{-N\phi_u(\z)} & 1
\end{pmatrix}, \qquad \z\in D(1,\ep)\cap\left(\gamma^+\cup\gamma^-\right),\\
\begin{pmatrix} 0 & 1\\ -1 & 0\end{pmatrix}, \qquad \z\in D(1,\ep)\cap(\sigma_u,1).
\end{cases}
\end{equation}
\item As $\z\to\infty$,
\begin{equation}\label{PM_matching}
P(\z)=M(\z)\left(I+\mathcal{O}(N^{-1/5})\right),
\end{equation}
uniformly for $\z\in\partial D(1,\ep)$.

\end{enumerate}

We observe that if we consider the function
\begin{equation}\label{px1}
\tilde{P}(\z)=P(\z)e^{N\phi_u(\z)\sg_3/2}.
\end{equation}
in $D(1,\ep)$, then by \eqref{st5}, this new function has the following jumps in $\Ga_{\tilde{P}}=\Ga_S\cap D(1,\ep)$:
\begin{equation}\label{px5}
\tilde{P}_{+}(\z)=\tilde{P}_{-}(\z)
\begin{cases}
\begin{pmatrix}
0 & 1 \\
-1 & 0
\end{pmatrix},\qquad \z\in [1-\ep,1),\\
\begin{pmatrix}
1 & \alpha \\
0 & 1
\end{pmatrix},\qquad \z\in \Ga_S^+\cap D(1,\ep),\\
\begin{pmatrix}
1 & 1-\alpha \\
0 & 1
\end{pmatrix},\qquad \z\in \Ga_S^-\cap D(1,\ep),\\
\begin{pmatrix}
1 & 0 \\
1 & 1
\end{pmatrix},\qquad \z\in \ga^{\pm}.
\end{cases}
\end{equation}

To see this, we recall that $\phi_{u\pm}(x)=\pm G_u(x)$ for $x\in(\sigma_u,1)$ by formula \eqref{g5} and \eqref{g12}, and also $\phi_{u}(\z)=\pm G_u(\z)$ for $\z\in\ga^{\pm}$ because of \eqref{g14}.

The jumps $J_{\tilde{P}}(\z;u)$ fit well to the ones of the $\Psi(w;\lambda,\alpha)$-functions given in Section \ref{RHforPsi}. In the spirit of \cite{DK}, we look for a local parametrix in a neighborhood of the endpoint $\z=1$ using this function:
\begin{lemma} Let $P:D(1,\varepsilon)\setminus\Sigma_S\mapsto \mathbb{C}^{2\times 2}$ be defined by the formula
\begin{equation}\label{px9}
P(\z)=E(\z)\Psi(N^{2/5}f(\z);N^{4/5}h_u(\z),\alpha)e^{-N\phi_u(\z)\sg_3/2},
\end{equation}
where
\begin{itemize}
\item the function $\Psi(w;\lambda,\alpha)$ is the solution of the Riemann--Hilbert problem stated in Section \ref{RHforPsi}.
\item $f(\z)$ is a conformal map from $D(1,\ep)$ to a neighborhood of $0$ such that $\Gamma_S\cap D(1,\ep)$ (see Figure \ref{fig_mapping})
is mapped onto part of the contour in Figure \ref{PI_jumps}. Explicitly,
\begin{equation}\label{fzeta}
f(\z)=\left[\frac{5}{8}\phi_{cr}(\z)\right]^{2/5}.
\end{equation}
\item $h_u:D(1,\ep)\mapsto \mathbb{C}$ is analytic and such that $N^{4/5}h_u(D(1,\ep))$ does not contain any poles of the solution $y_{\alpha}(\lambda)$ of Painlev\'e I. Explicitly,
\begin{equation}\label{hu}
h_u(\z)=\left(\frac{1}{20}\right)^{1/5}\,\frac{\phi_u(\z)-\phi_{cr}(\z)}{\phi_{cr}(\z)^{1/5}}.
\end{equation}
\item $E:D(1,\ep)\mapsto \mathbb{C}^{2\times 2}$ is a suitable analytic prefactor:
\begin{equation}\label{Ez}
E(\z)=M(\z)\left[\frac{(N^{2/5}f(\z))^{\sg_3/4}}{\sqrt{2}}\begin{pmatrix} 1 & -i\\
1 & i \end{pmatrix}\right]^{-1} =\frac{1}{\sqrt{2}} M(\z)
\begin{pmatrix} 1 & 1\\ i & -i \end{pmatrix}
(N^{2/5}f(\z))^{-\sg_3/4},
\end{equation}
\end{itemize}
then $P(\z)$ is analytic in $D(1,\varepsilon)\setminus\Sigma_S$
and it satisfies the Riemann--Hilbert problem stated before.
\end{lemma}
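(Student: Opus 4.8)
We must verify the three defining properties of the Riemann--Hilbert problem stated just before the lemma: $P$ is analytic in $D(1,\ep)\setminus\Ga_S$, it has the jumps \eqref{P_jumps}, and it matches the model solution, $P(\z)=M(\z)(I+\mathcal{O}(N^{-1/5}))$ on $\partial D(1,\ep)$. The one substantive point is that the prefactor $E(\z)$ of \eqref{Ez} extends analytically across the whole disc $D(1,\ep)$; once this is granted, the jumps and the matching follow mechanically from the properties of $\Psi$ recorded in Section \ref{RHforPsi}, the algebraic identities built into \eqref{fzeta}--\eqref{hu}, and the large-argument expansion \eqref{rhp8A}.

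\emph{Conformality of $f$ and analyticity of $E$.} By \eqref{phicr1}, $\phi_{cr}(\z)=\tfrac{8\sqrt{2}}{5}(\z-1)^{5/2}(1+\mathcal{O}(\z-1))$, so $f(\z)=[\tfrac58\phi_{cr}(\z)]^{2/5}$ has $f(1)=0$, $f'(1)=2^{1/5}>0$, and is (after shrinking $\ep$) a conformal bijection of $D(1,\ep)$ onto a neighborhood of $0$; since $f'(1)$ is real and positive, $f$ carries the segment $[1-\ep,1)$, the lips $\ga^\pm$, and the arcs $\Ga_S^\pm$ onto, respectively, the ray $\rho$ and the rays $\ga_{\pm2}$, $\ga_{\pm1}$ of $\Ga_\Psi$ near the origin (the angle-matching alluded to in the lemma). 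Writing $\be(\z)$ as in \eqref{ms5}, a short computation from \eqref{ms4} gives $M(\z)\left(\begin{smallmatrix}1&1\\i&-i\end{smallmatrix}\right)=\left(\begin{smallmatrix}1&1\\i&-i\end{smallmatrix}\right)\diag(\be^{-1}(\z),\be(\z))$, so that
\[
E(\z)=\tfrac{1}{\sqrt2}\left(\begin{smallmatrix}1&1\\i&-i\end{smallmatrix}\right)(N^{2/5})^{-\sg_3/4}\diag\bigl(\be^{-1}(\z)f(\z)^{-1/4},\,\be(\z)f(\z)^{1/4}\bigr).
\]
Both $\be^{-1}(\z)f(\z)^{-1/4}$ and $\be(\z)f(\z)^{1/4}$ are analytic and nonvanishing on $D(1,\ep)$: they carry no jump on $(\sigma_u,1)$ because the branch jump $-i$ of $\be^{\pm1}$ across the cut is compensated by that of $f^{\mp1/4}$ (equivalently, $\left(\begin{smallmatrix}1&1\\i&-i\end{smallmatrix}\right)^{-1}\left(\begin{smallmatrix}0&1\\-1&0\end{smallmatrix}\right)\left(\begin{smallmatrix}1&1\\i&-i\end{smallmatrix}\right)=\diag(i,-i)$ undoes the jump \eqref{ms2} of $M$), and the $(\z-1)^{\pm1/4}$ factors of $\be^{\mp1}$ are cancelled by $f^{\mp1/4}\sim (2^{1/5})^{\mp1/4}(\z-1)^{\mp1/4}$, so $\z=1$ is a removable singularity. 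Hence $E$ is analytic on $D(1,\ep)$.

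\emph{Jumps.} First, $h_u$ is also analytic on $D(1,\ep)$: by \eqref{phicr1exp}, $\phi_{cr}(\z)^{1/5}=(\z-1)^{1/2}\times(\text{analytic, nonzero at }1)$, and integrating $\phi_u'(\z)=-m_u(\z)\sqrt{\z-\sigma_u}/\sqrt{\z-1}$ from $1$ gives $\phi_u(\z)-\phi_{cr}(\z)=(\z-1)^{1/2}\times(\text{analytic})$, so the quotient in \eqref{hu} is analytic at $\z=1$; shrinking $\ep$ and $|u-u_c|$ keeps $N^{4/5}h_u(D(1,\ep))$ off the poles of $y_\al$, so $\Psi(N^{2/5}f(\z);N^{4/5}h_u(\z),\al)$ is well defined and analytic off $\Ga_S$ in the disc, and so is $P$ by \eqref{px9}. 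For the jumps pass to $\tilde P(\z)=P(\z)e^{N\phi_u(\z)\sg_3/2}$ as in \eqref{px1}; since $E$ has no jump and $f$ identifies the arcs of $\Ga_S\cap D(1,\ep)$ with those of $\Ga_\Psi$, the jump of $\tilde P$ on a given arc is $e^{-N\phi_{u-}\sg_3/2}J_\Psi e^{N\phi_{u+}\sg_3/2}$ with $J_\Psi\in\{J_{\pm1},J_{\pm2},J_\rho\}$ from \eqref{rhp4A}. On $\ga^\pm$ and $\Ga_S^\pm$ the function $\phi_u$ is single-valued, so this conjugation is trivial and reproduces the constant jumps in \eqref{px5}; on $[1-\ep,1)$ one uses $\phi_{u\pm}=\pm G_u$ (from \eqref{g5} and \eqref{g12}) to see the conjugation of $J_\rho$ is trivial as well. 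Undoing \eqref{px1}, and using $\phi_u=\pm G_u$ on $\ga^\pm$ from \eqref{g14}, returns precisely the matrices \eqref{P_jumps}.

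\emph{Matching, and the main obstacle.} On $\partial D(1,\ep)$ one has $|f(\z)|\ge c_0>0$ (since $\phi_{cr}$ vanishes only at $\pm1$), hence $|N^{2/5}f(\z)|\ge c_0N^{2/5}\to\infty$, so we may insert \eqref{rhp8A} for $\Psi$ with arguments $w=N^{2/5}f(\z)$ and $\la=N^{4/5}h_u(\z)$. Since by \eqref{Ez} $E(\z)\cdot\tfrac{w^{\sg_3/4}}{\sqrt2}\left(\begin{smallmatrix}1&-i\\1&i\end{smallmatrix}\right)=M(\z)$, the algebraic prefactor of \eqref{rhp8A} is absorbed exactly, and, with the $\mathcal{O}(N^{-1/5})$ coming from the $\Psi_1/w^{1/2}$ term,
\[
P(\z)=M(\z)\bigl(I+\mathcal{O}(N^{-1/5})\bigr)\,e^{\bigl[\theta(N^{2/5}f(\z);N^{4/5}h_u(\z))-\frac{N}{2}\phi_u(\z)\bigr]\sg_3}.
\]
The exponent vanishes identically: by \eqref{rhp9A}, $\theta(w;\la)=\tfrac45w^{5/2}+\la w^{1/2}$; by \eqref{fzeta}, $f(\z)^{5/2}=\tfrac58\phi_{cr}(\z)$ and $f(\z)^{1/2}=(\tfrac58)^{1/5}\phi_{cr}(\z)^{1/5}$; and by \eqref{hu}, $h_u(\z)\phi_{cr}(\z)^{1/5}=(\tfrac1{20})^{1/5}(\phi_u(\z)-\phi_{cr}(\z))$, so $\tfrac45 N f^{5/2}=\tfrac{N}{2}\phi_{cr}$ and $Nh_uf^{1/2}=(\tfrac1{20}\cdot\tfrac58)^{1/5}N(\phi_u-\phi_{cr})=\tfrac{N}{2}(\phi_u-\phi_{cr})$, which add to $\tfrac{N}{2}\phi_u$. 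This is \eqref{PM_matching}. The main obstacle is the uniformity of this last step: \eqref{rhp8A} is a fixed-$\la$ statement, adequate when $\la=N^{4/5}h_u(\z)=\mathcal{O}(u-u_c)$ stays bounded on $\partial D(1,\ep)$, but to cover the extended regimes (needed later for the free energy) in which $|\la|$ grows one must instead use the rescaling \eqref{rhp11A}, writing $\Psi(\xi(-\la)^{1/2};\la,\al)=(-\la)^{\sg_3/8}\Phi(\xi;\la,\al)$ with $\xi=f(\z)/(-h_u(\z))^{1/2}$, and invoke Theorem \ref{Th_Phi}, checking that $\xi$ remains outside $D(\z_0,\ep)$ with $\z_0=-2/\sqrt6$ as in \eqref{zeta0}. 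This uniform control, together with the branch bookkeeping that makes $E$ analytic at $\z=1$, is the delicate part.
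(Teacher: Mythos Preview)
Your argument is correct and follows the same route as the paper (which merely refers to the constant-jump matrix $\tilde P$ and defers the conformality of $f$, analyticity of $h_u$, analyticity of $E$, and the matching computation to the lemma that follows). One small slip: since $\tilde P=Pe^{N\phi_u\sg_3/2}=E\Psi$ directly, the jump of $\tilde P$ is simply $J_\Psi$, not $e^{-N\phi_{u-}\sg_3/2}J_\Psi e^{N\phi_{u+}\sg_3/2}$; the exponential conjugation belongs to the jump of $P$, with the opposite signs $e^{N\phi_{u-}\sg_3/2}J_\Psi e^{-N\phi_{u+}\sg_3/2}$, and on the off-cut arcs this is a genuine conjugation rather than trivial --- it is precisely what produces the $e^{\pm N\phi_u}$ entries in \eqref{P_jumps}.
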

\begin{proof}
The proof is based on the matrix $\tilde{P}(\z)$ presented before,
in particular note the jumps \eqref{px5}.
\end{proof}

\begin{figure}
\centerline{
\includegraphics{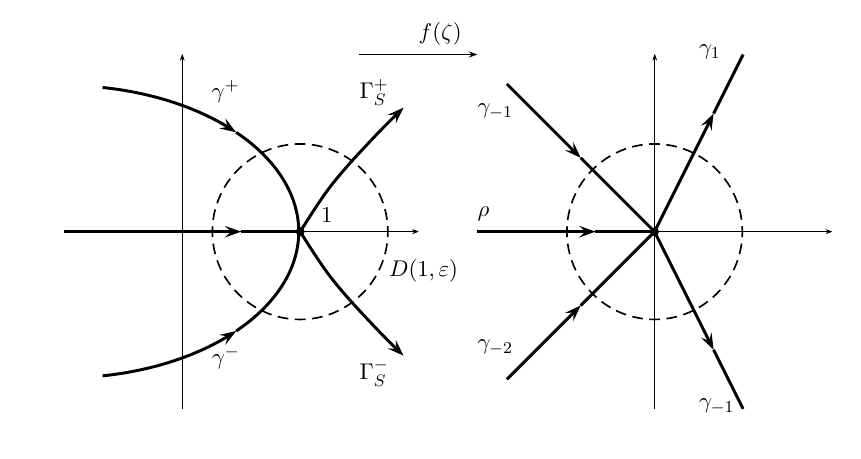}}
 \caption{Construction of the local parametrix in a neighborhood of $\z=1$.}
\label{fig_mapping}
\end{figure}

We have the following result:
\begin{lemma} Suppose that $\lambda\in\mathbb{R}$ is not a pole of $y_{\alpha}(\lambda)$, and $u$ depends on $N$ and $\lambda$ as follows:
\begin{equation}
N^{4/5}(u-u_c)=c_1 \lambda, \qquad c_1=2^{-12/5}3^{-7/4}.
\end{equation}

Let $f(\zeta)$, $h_u(\zeta)$ and $E(\zeta)$ be as defined in
\eqref{fzeta}, \eqref{hu} and \eqref{Ez}, then there is a disc around the point
$\z=1$, which depends only on $\lambda$, such that the previous properties
and the matching condition \eqref{PM_matching} are satisfied.
\end{lemma}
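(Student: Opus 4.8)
The plan is to verify, one ingredient at a time, that the conformal map $f$, the analytic multiplier $h_u$ and the prefactor $E$ are well defined and analytic on a disc $D(1,\varepsilon)$ whose radius can be chosen to depend only on $\lambda$, and then to deduce the matching \eqref{PM_matching} on $\partial D(1,\varepsilon)$ from the large-$\z$ expansion \eqref{rhp8A} of $\Psi$. First, from \eqref{phicr1} and the Taylor expansion of $(s+1)^{1/2}(s-1)^{3/2}$ at $s=1$ one gets $\phi_{cr}(\z)=\tfrac{8\sqrt 2}{5}(\z-1)^{5/2}\bigl(1+\mathcal{O}(\z-1)\bigr)$, so by \eqref{fzeta} $f(\z)=2^{1/5}(\z-1)\bigl(1+\mathcal{O}(\z-1)\bigr)$ is a conformal map of a neighbourhood of $1$ onto a neighbourhood of $0$ with $f(1)=0$, $f'(1)=2^{1/5}>0$; since $f'(1)$ is real and positive, the arcs of $\Ga_S$ issuing from $\z=1$ at the angles $\pm\pi/5$ (the pieces $\Ga_S^\pm$), $\pm3\pi/5$ (the lips $\ga^\pm$) and $\pi$ (the segment $(1-\varepsilon,1)$) are carried onto the rays $\ga_{\pm1},\ga_{\pm2},\rho$ of $\Ga_\Psi$, and the jumps \eqref{px5} of $\tilde P$ then coincide term by term with those of $\Psi(N^{2/5}f(\z);\cdot)$. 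More generally, $r_u(s)=(s-1)^{-1/2}\rho_u(s)$ with $\rho_u$ analytic near $\z=1$ (here $m_u$ is a polynomial and $\sqrt{s-\sigma_u}$ is analytic near $1$, since $\sigma_u$ is analytic in $u$ near $u_c$ with $\sigma_{u_c}=-1$), so termwise integration gives $\phi_u(\z)=(\z-1)^{1/2}\tilde{A}_u(\z)$ with $\tilde{A}_u$ analytic on $D(1,\varepsilon)$ and analytic in $u$, and likewise $\phi_{cr}(\z)=(\z-1)^{1/2}\tilde{A}_{cr}(\z)$ and $\phi_{cr}(\z)^{1/5}=(\z-1)^{1/2}B(\z)$ with $B$ analytic and nowhere zero on $D(1,\varepsilon)$.

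By \eqref{hu} and the previous paragraph, $h_u(\z)=(1/20)^{1/5}\bigl(\tilde{A}_u(\z)-\tilde{A}_{cr}(\z)\bigr)/B(\z)$ is analytic on $D(1,\varepsilon)$; since $\tilde{A}_{u_c}=\tilde{A}_{cr}$ and $\tilde{A}_u$ is analytic in $u$, one has $h_u(\z)=\mathcal{O}(u-u_c)$ uniformly on $D(1,\varepsilon)$, and by Cauchy's estimate $h_u(\z)-h_u(1)=\mathcal{O}\bigl((u-u_c)|\z-1|\bigr)$ there. Because $\phi_{cr}\sim(\z-1)^{5/2}$ forces $\tilde{A}_{cr}(1)=0$, the leading term of $h_u(1)$ in $u-u_c$ is obtained from \eqref{mem8}, \eqref{suseries} and \eqref{phicr1exp}; the computation gives $h_u(1)=(u-u_c)/c_1+\mathcal{O}((u-u_c)^2)$ with exactly $c_1=2^{-12/5}3^{-7/4}$, which is the normalization producing $N^{4/5}h_u(1)=\lambda+\mathcal{O}(N^{-4/5})$ under \eqref{dsc_u}. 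Hence $N^{4/5}h_u(\z)=\lambda+\mathcal{O}(\varepsilon)+\mathcal{O}(N^{-4/5})$ uniformly on $D(1,\varepsilon)$; since $\lambda$ is not a pole of $y_\alpha$ and these poles are isolated, choosing $\varepsilon$ small in terms of the distance from $\lambda$ to the nearest pole and $N$ large makes $N^{4/5}h_u(D(1,\varepsilon))$ pole-free, so $\Psi(\,\cdot\,;N^{4/5}h_u(\z),\alpha)$ is well defined there and the disc depends only on $\lambda$.

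For $E$, write $W(\z)=\tfrac1{\sqrt 2}(N^{2/5}f(\z))^{\sg_3/4}\begin{pmatrix}1&-i\\1&i\end{pmatrix}$, so that $E=MW^{-1}$ by \eqref{Ez}. Tracking the branch of $(\cdot)^{1/4}$, across the image of $(1-\varepsilon,1)$ one finds $W_+=W_-\left(\begin{smallmatrix}0&1\\-1&0\end{smallmatrix}\right)$, which matches the jump \eqref{ms2} of $M$, so $E$ has no jump on $(1-\varepsilon,1)$; at $\z=1$ the factor $\sim(\z-1)^{\sg_3/4}$ carried by $M$ (through $\diag(\beta^{-1},\beta)$) cancels the factor $\sim(\z-1)^{-\sg_3/4}$ carried by $W^{-1}$ (through $(N^{2/5}f(\z))^{-\sg_3/4}$), leaving at worst a removable singularity — exactly as in the corresponding step of \cite{DK} — so $E$ is analytic on $D(1,\varepsilon)$. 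For the matching, on $\partial D(1,\varepsilon)$ we have $|N^{2/5}f(\z)|\asymp N^{2/5}\varepsilon\to\infty$ while $N^{4/5}h_u(\z)$ ranges over a fixed compact set avoiding the poles of $y_\alpha$, on which \eqref{rhp8A} holds uniformly in the $\lambda$-argument (by analyticity in $\lambda$ off the poles and compactness); evaluating its phase at $\z\mapsto N^{2/5}f(\z)$, $\lambda\mapsto N^{4/5}h_u(\z)$,
\[
\theta\bigl(N^{2/5}f(\z);N^{4/5}h_u(\z)\bigr)=N\Bigl(\tfrac45 f(\z)^{5/2}+h_u(\z)f(\z)^{1/2}\Bigr)=\tfrac12 N\phi_u(\z),
\]
using $f^{5/2}=\tfrac58\phi_{cr}$ from \eqref{fzeta} and $h_u f^{1/2}=\tfrac12(\phi_u-\phi_{cr})$ from \eqref{hu} (note $(1/32)^{1/5}=1/2$). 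Thus the factor $e^{\theta\sg_3}$ in \eqref{rhp8A} cancels $e^{-N\phi_u(\z)\sg_3/2}$ in \eqref{px9}, and since $E(\z)W(\z)=M(\z)$ while $M,M^{-1}$ are bounded on $\partial D(1,\varepsilon)$ and $(N^{2/5}f(\z))^{-1/2}=\mathcal{O}(N^{-1/5})$ there, we get $P(\z)=M(\z)\bigl(I+\mathcal{O}(N^{-1/5})\bigr)$ uniformly on $\partial D(1,\varepsilon)$, which is \eqref{PM_matching}.

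The main obstacle is the middle step: proving that $h_u$ is genuinely analytic across $\z=1$ — the apparent $(\z-1)^{-1/2}$ singularity of $(\phi_u-\phi_{cr})/\phi_{cr}^{1/5}$ must cancel, which relies on the detailed structure of the modified equilibrium measure from Section~\ref{sec_eqmeasure} — together with pinning down $c_1$ so that $N^{4/5}h_u(1)\to\lambda$. The uniformity in the $\lambda$-argument of \eqref{rhp8A} needed for the matching is a secondary point, and everything else is the branch-tracking bookkeeping already carried out in \cite{DK}.
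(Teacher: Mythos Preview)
Your proof is correct and follows essentially the same approach as the paper: local expansions of $\phi_{cr}$ and $\phi_u-\phi_{cr}$ at $\z=1$ to establish that $f$ is conformal and $h_u$ analytic with $N^{4/5}h_u(1)\to\lambda$, cancellation of the jump and the $(\z-1)^{\pm 1/4}$ singularity to get analyticity of $E$, and the algebraic identity $\theta(N^{2/5}f;N^{4/5}h_u)=\tfrac{N}{2}\phi_u$ for the matching. One small slip that does not affect the argument: the local angles for $\Ga_S^\pm$ and the lens lips are $\pm 2\pi/5$ and $\pm 4\pi/5$ (from $\arg\phi_{cr}=\pi$ and $\pm 2\pi$), not $\pm\pi/5$ and $\pm 3\pi/5$, and your remark on the uniformity of \eqref{rhp8A} over compact pole-free $\lambda$-sets is a point the paper leaves implicit.
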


\begin{proof}

We recall that
\begin{equation}
\phi_{cr}(\z)=4\int_1^{\z} (s+1)^{1/2}(s-1)^{3/2}ds,
\end{equation}
and we can write
\begin{equation}\label{phicr_at_1}
\phi_{cr}(\z)=\frac{8\sqrt{2}}{5}(\z-1)^{5/2}\left(1+v_{cr}(\z)\right),
\end{equation}
where the function $v_{cr}(\z)$ is analytic in a neighborhood of $\z=1$, with $v_{cr}(1)=0$. Therefore, if we take a small enough disc around $\z=1$, then $f(\z)$ given by \eqref{fzeta} is a conformal map to a neighborhood of the origin. Actually,
\begin{equation}\label{fz1}
f(\z)=2^{1/5}(\z-1)+\mathcal{O}((\z-1)^2), \qquad \z\to 1,
\end{equation}
so in particular $f(\z)$ maps the interval $(1-\ep,1)$ onto a
part of the negative real axis.

The curve $\Gamma_S^+$ is that of steepest descent for
$\phi_{cr}(\z)$, so on $\Gamma_S^+$ the function $\phi_{cr}(\z)$ is
real and negative, i.e. $\arg \phi_{cr}(\z)=\pi$. Therefore, $\arg
f(\z)=2\pi/5$ when $\z\in\Gamma_S^+$.

Finally, the location of the lips of the lens inside the disc $D(1,\varepsilon)$ is chosen in such a way  that they are mapped by $f(\z)$ onto part of the rays $\arg\z=\pm 4\pi/5$ (upper and lower lip respectively), taking into account \eqref{fzeta}. This is achieved by requiring that in a neighborhood of $\z=1$, we have $\arg\phi_{cr}(\z)=\pm 2\pi$ on the upper and lower lips of the lens (respectively).

Regarding the properties of the map $h_u(\z)$, we can write
\begin{equation}
\phi_u(\z)-\phi_{cr}(\z)=(u-u_c)\phi^o(\z,u).
\end{equation}

The function $\phi^o(\z,u)$ satisfies
\begin{equation}
\phi^o(\z,u)= 2^{7/2} 3^{7/4}(\z-1)^{1/2}(1+v^o(\z,u)),
\end{equation}
where $v^o(\z,u)$ is analytic in $\z$ in a neighborhood of $\z=1$ and $v^o(1,u)=0$. This follows from formula \eqref{g9a}, namely
\begin{equation}
\phi_u(\z)=2\pi i\int_{\z}^1 r_u(s)ds,
\end{equation}
where $r_u(s)$ is the analytic extension of $\varrho_u(s)$ to $\mathbb{C}\setminus(-\infty,1]$, so
\begin{equation}
r_u(s)=\frac{m_u(s)\sqrt{s-\sigma_u}}{2\pi\sqrt{s-1}}=\frac{M_u(s)}{\sqrt{s-1}}, 
\end{equation}
with $M_u(s)$ analytic near $s=1$. Also, note that from \eqref{phicr_at_1} we have
\begin{equation}
\phi_{cr}(\z)^{1/5}=2^{7/10}5^{-1/5}(\z-1)^{1/2}\left(1+\tilde{v}_{cr}(\z)\right),
\end{equation}
with $\tilde{v}_{cr}(\z)$ analytic in a neighborhood of $\z=1$ and
$\tilde{v}_{cr}(1)=0$. We conclude that
\begin{equation}
h^o(\z):=\left(\frac{1}{20}\right)^{1/5}\,\frac{\phi^o(\z)}{\phi_{cr}(\z)^{1/5}}
=2^{12/5}3^{7/4}\left(1+\mathcal{O}(\z-1)\right),
\end{equation}
is analytic in the variable $\z$ in a neighborhood of $\z=1$. Observe that, differently from \cite[Lemma 4.3]{DK}, the error term in this case depends on $u$ (actually on powers of $u-u_c$), but that does not change the properties of $h_u(\z)$ as a function of $\z$ locally near $\z=1$.

As a consequence of the last formula, we can define
$c_1^{-1}=h^o(1)=2^{12/5}3^{7/4}$, and then
\begin{equation}
N^{4/5}h_u(1)=N^{4/5}(u-u_c)h^o(1)=N^{4/5}(u-u_c)c_1^{-1}=\lambda.
\end{equation}
We suppose that $\lambda$ is not a pole of $y_{\alpha}(\lambda)$, so we can find a
neighborhood $D(1,\varepsilon)$ of $\z=1$, for $\varepsilon$ small
enough, such that $N^{4/5}h_u(\z)$ is not a pole of $y_{\alpha}(\lambda)$
for all $\z\in D(1,\varepsilon)$.

Finally, substituting formulas \eqref{ms4} and \eqref{ms5} for $M(\z)$, we have
\begin{equation}
E(\z)=\frac{1}{\sqrt{2}}\begin{pmatrix}1 & 1\\ i &-i \end{pmatrix}
\left(N^{2/5}f(\z)\,\frac{\z-\sigma_u}{\z-1}\right)^{-\sigma_3/4},
\end{equation}
so it is an analytic function in a neighborhood of $\z=1$, because $f(\z)$ has a simple zero at $\z=1$, see \eqref{fz1}. Moreover, using the asymptotic behavior of the $\Psi(\z;\lambda,\alpha)$ function, see \eqref{rhp8A}, we have
\begin{equation}
P(\z)=E(\z)\frac{(N^{2/5}f(\z))^{\sigma_3/4}}{\sqrt{2}}\begin{pmatrix}1
& -i\\ 1 & i\end{pmatrix}
\left(I+\mathcal{O}(N^{-1/5})\right)e^{\theta(\z;\lambda)\sg_3-N\phi_u(\z)\sigma_3/2} 
\end{equation}
as $N\to\infty$. Substituting the expression for $\theta(\z;\lambda)$ and the formulas for $f(\z)$ and $h_u(\z)$, see \eqref{fzeta} and \eqref{hu}, we obtain
\begin{equation}
\theta(\z;\lambda)=\theta(N^{2/5}f(\z);N^{4/5}h_u(\z))
=\frac{4}{5}(N^{2/5}f(\z))^{5/2}+N^{4/5}h_u(\z)(N^{2/5}f(\z))^{1/2}
=\frac{N\phi_u(\z)}{2},
\end{equation}
so the exponential factor cancels out. Hence, substituting the expression for $E(\z)$, we get 
\begin{equation}
P(\z) = \frac{1}{2}\begin{pmatrix}1 & 1\\ i &-i
\end{pmatrix}\left(\frac{\z-1}{\z+1}\right)^{\sigma_3/4}
\begin{pmatrix}1 & -i\\ 1 & i\end{pmatrix}\left(I+\mathcal{O}(N^{-1/5})\right)
= M(\z)\left(I+\mathcal{O}(N^{-1/5})\right),
\end{equation}
uniformly for $\z\in\partial D(1,\varepsilon)$.
\end{proof}

\subsection{Parametrix near the endpoint $\z=\sigma_u$}

Near the endpoint $\z=\sigma_u$, we take a small disc $D(\sigma_u,\varepsilon)=\{\z:|\z-\sigma_u|<\varepsilon\}$, and the local parametrix should satisfy the following RH problem:
\begin{enumerate}
\item $Q(\z)$ is analytic in $\mathbb{C}\setminus\Gamma_S$, see Figure \ref{GaS}.
\item On $\Gamma_S$ we have
\begin{equation}\label{Q_jumps}
Q_{+}(\z)=Q_{-}(\z)
\begin{cases}
\begin{pmatrix} 1 & e^{N\tilde{\phi}_u(\z)}\\ 0 & 1
\end{pmatrix}, \qquad \z\in D(\sigma_u,\ep)\cap(-\infty,\sigma_u],\\
\begin{pmatrix} 1 & 0\\ e^{-N\phi_u(\z)} & 1
\end{pmatrix}, \qquad \z\in D(\sigma_u,\ep)\cap\left(\gamma^+\cup\gamma^-\right),\\
\begin{pmatrix} 0 & 1\\ -1& 0\end{pmatrix}, \qquad \z\in D(\sigma_u,\ep)\cap(\sigma_u,1).
\end{cases}
\end{equation}
Here the function $\tilde{\phi}_u(\z)$ is defined by \eqref{ft11}.

\item As $N\to\infty$, uniformly for $\z\in\partial D(\sigma_u,\ep)$, we have
$$
Q(\z)=M(\z)\left(I+\mathcal{O}(N^{-1})\right).
$$
\end{enumerate}

Note that if we write
\begin{equation}
\tilde{Q}(\z)=Q(\z)e^{N\tilde{\phi}_u(\z)\sigma_3/2},
\end{equation}
then the matrix $\tilde{Q}(\z)$ satisfies the following Riemann--Hilbert problem:
\begin{enumerate}
\item $\tilde{Q}(\z)$ is analytic in $D(\sigma_u,\varepsilon)\setminus \Gamma_S$.
\item $\tilde{Q}(\z)$ has the following jumps:
\begin{equation}\label{Qt_jumps}
J_{\tilde{Q}}(\z;u)=
\begin{cases}
\begin{pmatrix} 1 & 1\\ 0 & 1
\end{pmatrix}, \qquad \z\in D(\sigma_u,\ep)\cap(-\infty,\sigma_u],\\
\begin{pmatrix} 1 & 0\\ 1 & 1
\end{pmatrix}, \qquad \z\in D(\sigma_u,\ep)\cap\left(\gamma^+\cup\gamma^-\right),\\
\begin{pmatrix} 0 & 1\\ -1 & 0\end{pmatrix}, \qquad \z\in D(\sigma_u,\ep)\cap(\sigma_u,1).
\end{cases}
\end{equation}
\item Uniformly for $\z\in\partial D(\sigma_u,\ep)$, we have
\begin{equation}\label{matchingQu}
\tilde{Q}(\z)=M(\z)\left(I+\mathcal{O}(N^{-1})\right)e^{N\tilde{\phi}_u(\z)\sigma_3/2}, \qquad N\to\infty,
\end{equation}
\end{enumerate}

\begin{figure}
\centerline{
\includegraphics{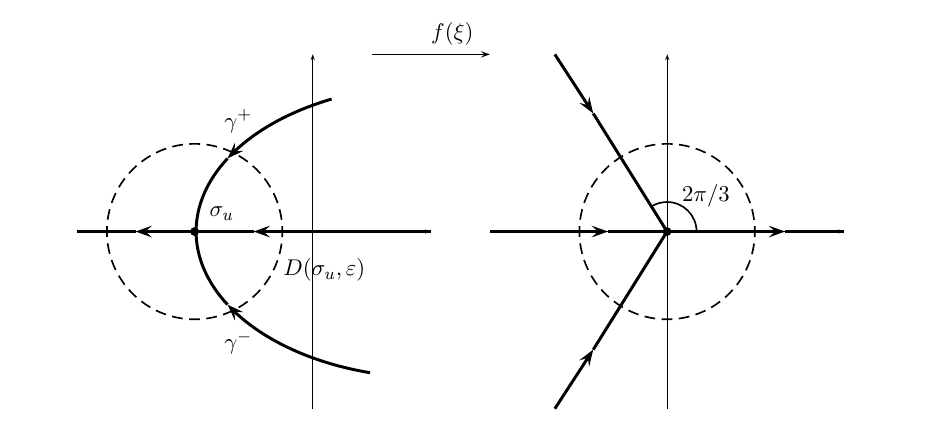}}
 \caption{Construction of the local parametrix in a neighborhood of $\z=\sigma_u$.}
\label{fig_mapping_Airy}
\end{figure}

This problem can be solved in terms of Airy functions. We write $\xi=-\z$, and then
\begin{equation}
\tilde{Q}(\z)=\sigma_3 \tilde{Q}(\xi)\sigma_3,
\end{equation}
where $\tilde{Q}(\xi)$ solves a standard Airy Riemann--Hilbert problem near the endpoint $-\sigma_u>0$.

In order to construct the matrix the local parametrix, we consider the functions
\begin{equation}
y_0(\xi)=\textrm{Ai}\,(\xi), \qquad
y_1(\xi)=\omega\textrm{Ai}\,(\omega\xi), \qquad
y_2(\xi)=\omega^2\textrm{Ai}\,(\omega^2\xi),
\end{equation}
where $\omega=e^{2\pi i/3}$, which are solutions of the Airy differential equation $y''(\xi)=\xi y(\xi)$. We construct the following function:
\begin{equation}
A(\xi)=
\begin{cases}
\begin{pmatrix}
y_0(\xi) & -y_2(\xi)\\
-iy'_0(\xi) & iy'_2(\xi)
\end{pmatrix}, \qquad \textrm{arg}\,\xi\in\left(0,\frac{2\pi}{3}\right),\\
\begin{pmatrix}
-y_1(\xi) & -y_2(\xi)\\
iy'_1(\xi) & iy'_2(\xi)
\end{pmatrix}, \qquad \textrm{arg}\,\xi\in\left(\frac{2\pi}{3},\pi\right),\\
\begin{pmatrix}
-y_2(\xi) & y_1(\xi)\\
iy'_2(\xi) & -iy'_1(\xi)
\end{pmatrix}, \qquad \textrm{arg}\,\xi\in\left(-\pi,-\frac{2\pi}{3}\right),\\
\begin{pmatrix}
y_0(\xi) & y_1(\xi)\\
-iy'_0(\xi) & -iy'_1(\xi)
\end{pmatrix}, \qquad \textrm{arg}\,\xi\in\left(-\frac{2\pi}{3},0\right).
\end{cases}
\end{equation}

This matrix--valued function satisfies the jumps in \eqref{Qt_jumps}, on contours emanating from the origin with angles $\pm 2\pi/3$ and $\pi$, as a consequence of the relation $y_0(\xi)+y_1(\xi)+y_2(\xi)=0$.

We also note the asymptotic behavior of the Airy function and its derivative, as the argument grows large:
\begin{equation}\label{asympAiry}
A(w)=w^{-\sigma_3/4}\frac{1}{2\sqrt{\pi}}\begin{pmatrix}1& i\\ i & 1\end{pmatrix}\left(I+\mathcal{O}(w^{-3/2})\right)e^{-\frac{2}{3}w^{3/2}\sigma_3}, \qquad w\to\infty,
\end{equation}
which follows from the standard asymptotic expansion of the Airy function and its derivative, see for instance
\cite[9.7.5, 9.7.6]{DLMF}.

Now we consider the following local parametrix:
\begin{equation}
\tilde{Q}(\xi)= E(\xi)A(N^{2/3}\tilde{f}(\xi)),
\end{equation}
where $E(\xi)$ is an analytic prefactor, $A(w)$ solves the
standard Airy Riemann--Hilbert problem and $w=N^{2/3}\tilde{f}(\xi)$ is a
conformal map from $D(-\sigma_u,\ep)$ to a neighborhood of the
origin in the auxiliary $w$ plane. 

The function $w=N^{2/3}\tilde{f}(\xi)$ is determined in such a way that we obtain a matching between the exponential factors in \eqref{matchingQu} and \eqref{asympAiry}:
\begin{equation}\label{fxi}
-\frac{2}{3}\left[N^{2/3}\tilde{f}(\xi)\right]^{3/2}=\frac{1}{2}N\tilde{\phi}_u(\xi)
\Rightarrow
\tilde{f}(\xi)=\left(-\frac{3}{4}\tilde{\phi}_u(\xi)\right)^{2/3}.
\end{equation}

This is a conformal mapping in a neighborhood of the point $-\sigma_u$ for $\ep$ and $u-u_c$ small enough, because of the properties of $\tilde{\phi}_u$.

Regarding the factor $E(\xi)$, we construct it in such a way that it takes care of the matching between the local and global parametrices:
\begin{equation}
\begin{aligned}
E(\xi)&=\sqrt{\pi}\, M(\xi)\begin{pmatrix}1& -i\\ -i & 1\end{pmatrix}
\left(N^{2/3}\tilde{f}(\xi)\right)^{\sigma_3/4}.
\end{aligned}
\end{equation}

The fractional power $\tilde{f}(\xi)^{\sigma_3/4}$ has a jump on the interval $[-\sigma_u-\delta,-\sigma_u]$, that is compensated with the jump of $M(\xi)$, and it is not difficult to check that $E(\xi)$ is indeed analytic in a neighborhood of $\xi=-\sigma_u$. Thus, the local parametrix is
\begin{equation}
\begin{aligned}
 Q(\z)&=\tilde{Q}(\z)e^{-N\tilde{\phi}_u(\z)\sigma_3/2}\\
&=\sigma_3 \tilde{Q}(\xi)\sigma_3 e^{-N\tilde{\phi}_u(\z)\sigma_3/2}\\
&=\sigma_3 E(\xi)A(N^{2/3}\tilde{f}(\xi))\sigma_3 e^{-N\tilde{\phi}_u(\z)\sigma_3/2}\\
\end{aligned}
\end{equation}

We also note that the matrix $A(N^{2/3}\tilde{f}(\xi))$ admits a full asymptotic
expansion in powers of $N^{-1}$, because of \eqref{asympAiry} and so does $\tilde{Q}(\z)$, uniformly for $\z$ in a neighborhood of $\z=\sigma_u$.

\subsection{Third transformation of the RHP}
The final step of the Riemann--Hilbert analysis is to define a matrix $R_u(\z)$ as follows:
\begin{equation}\label{Ru1}
R(\z)=S(\z)
\left\{
\begin{aligned}
&M^{-1}(\z), \qquad \z\in\mathbb{C}\setminus(\overline{D(1,\ep)}\cup\overline{D(\sigma_u,\ep)}\cup\Ga_S),\\
&P^{-1}(\z), \qquad \z\in D(1,\ep)\setminus \Ga_S,\\
&Q^{-1}(\z), \qquad \z\in D(\sigma_u,\ep)\setminus \Ga_S,\\
\end{aligned}
\right.
\end{equation}
where $\Ga_S$ is the contour shown in Figure \ref{GaS}. This matrix $R(\z)$ satisfies the following Riemann--Hilbert problem:
\begin{enumerate}
\item $R(\z)$ is analytic in $\C\setminus \Ga_R$, where $\Ga_R$ is shown in Figure \ref{GaR}.
\item On $\Ga_R$,
\begin{equation}\label{jumpsR}
R_{+}(\z)=R_{-}(\z)
\left\{
\begin{aligned}
&M(\z)\begin{pmatrix}
1 & e^{N\tilde\phi_u(\z)} \\
0 & 1
\end{pmatrix}M^{-1}(\z),\qquad \z\in\Ga_{R,1},\\
&M(\z)\begin{pmatrix}
1 & 0 \\
e^{-N\phi_u(\z)} & 1
\end{pmatrix}M^{-1}(s),\qquad \z\in \Ga_{R,2}\cup\Ga_{R,4},\\
&M(\z)\begin{pmatrix}
1 & \alpha e^{N\phi_u(\z)} \\
0 & 1
\end{pmatrix}M^{-1}(\z),\qquad \z\in\Ga_{R,3},\\
&M(\z)\begin{pmatrix}
1 & (1-\alpha)e^{N\phi_u(\z)} \\
0 & 1
\end{pmatrix}M^{-1}(\z),\qquad \z\in\Ga_{R,5},\\
&P(\z)M^{-1}(\z),\qquad \z\in \partial D(1,\ep),\\
&Q(\z)M^{-1}(\z),\qquad \z\in \partial D(\sigma_u,\ep),\\
\end{aligned}
\right.
\end{equation}
\item The matrix $R(\z)$ has the following asymptotic behavior:
\begin{equation}\label{asympR}
R(\z)=I+\mathcal O\left(\z^{-1}\right),\qquad \z\to\infty.
\end{equation}
\end{enumerate}

\begin{figure}[h]
\centerline{\includegraphics{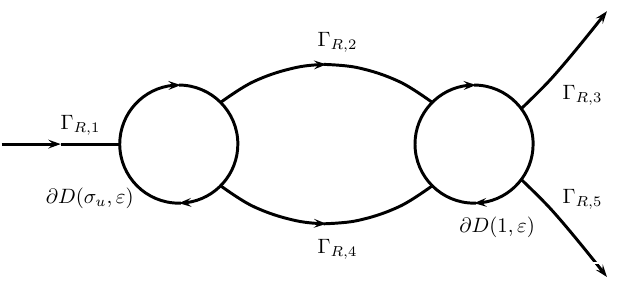}}
  \caption[The contour of integration ]{The contour of integration $\Ga_R$.}
\label{GaR}
 \end{figure}

From the matching conditions with $M(\z)$ and the last two
equations in \eqref{jumpsR}, we find that the jump matrix $J_R(s;u)$
can be expanded in inverse powers of $N^{-1/5}$ and $N^{-1}$ on the
boundary of the discs $D(1,\ep)$ and $D(\sigma_u,\ep)$ respectively.
Namely, as $N\to\infty$,
\begin{equation}\label{jumpsRcircles}
J_R(\z;u)\sim
\begin{cases}
I+\displaystyle\sum_{k=1}^{\infty} W^{(k)}(\z;u) N^{-k/5},\qquad
\z\in\partial D(1,\ep)\\
I+\displaystyle\sum_{k=1}^{\infty} \widehat{W}^{(k)}(\z;u)
N^{-k},\qquad \z\in\partial D(\sigma_u,\ep)
\end{cases}
\end{equation}

The terms in these expansions can be written as follows:
\begin{equation}
W^{(k)}(\z;u)=M(\z)\Psi_k(N^{4/5}h_u(\z),\alpha)f(\z)^{-k/2}M^{-1}(\z) ,
\qquad k\geq 1
\end{equation}
and
\begin{equation}
\widehat{W}^{(k)}(\z;u)=M(\z)\sigma_3
A_k\tilde{f}(-\z)^{-3k/2}\sigma_3 M^{-1}(\z), \qquad k\geq 1,
\end{equation}
with the coefficients appearing in \eqref{rhp8A} and in the asymptotic expansion of the Airy matrix in \eqref{asympAiry}, respectively.

As a consequence, the matrix $R(\z)$ itself admits an asymptotic
expansion in powers of $N^{-1/5}$ that holds uniformly for
$\z\in\mathbb{C}\setminus\Gamma_R$, where the contour $\Gamma_R$ is
depicted in Figure \ref{GaR}.

\begin{proposition}\label{propRn} The Riemann--Hilbert problem for $R(\z)$ has a unique
solution for large enough $N$, and the matrix $R(\z)$ can be
expanded in asymptotic series
\begin{equation}
R(\z)\sim I+\sum_{k=1}^{\infty} R^{(k)}(\z)
\end{equation}
where $R^{(k)}(\z)=\mathcal{O}(N^{-k/5})$ as $N\to\infty$, uniformly for $\z\in\mathbb{C}\setminus\Gamma_R$.
\end{proposition}
\begin{proof} The proof follows the lines of \cite[\S 11]{Ble}, so we omit the details. Note that because of Lemma \ref{ru} and Lemma \ref{ru+}, the jump matrices on the contours $\Gamma_{R,j}$, $j=1,\ldots, 5$ are exponentially close to the identity. On the circles around the endpoints we have expansions in inverse powers of $N$ and $N^{1/5}$, see \eqref{jumpsRcircles}. Thus the leading contribution is given by the expansion on $\partial D(1,\ep)$. Alternatively, one can derive RH problems for higher order terms in the expansion of $R(\z)$, similarly to \cite{DK}.

\end{proof}

\section{Proof of Theorem \ref{Th1}}

As a consequence of the previous steepest descent analysis of the
Riemann--Hilbert problem, we have proved the following:
\begin{itemize}
\item The Riemann--Hilbert problem is solvable for large enough values of $N$, and the solution $Y(\z)$ is unique. This follows from the solvability of the Riemann--Hilbert for $R(\z)$ and the fact that all the transformations $Y\mapsto T\mapsto S\mapsto R$ are invertible.
\item In particular, the $(1,1)$ entry of $Y$ exists for large enough $N$, and it is the $n$-th orthogonal polynomial $p_n(\z)$ with respect to the weight function $e^{-NV(\z;u)}$, where $V(\z;u)$ is described in \eqref{Vzeta}, and orthogonality is defined  on the curve $\Gamma$ described in \eqref{Gammal}. 
\end{itemize}

Moreover, if $P_{n}(\z,u)$ and also $P_{n\pm 1}(\z,u)$ exist, then they satisfy a three term recurrence relation:
\begin{equation}
\z P_{n}(\z,u)=P_{n+1}(\z,u)+\beta_{n,N}(u) p_n(\z,u)+\gamma_{n,N}^2(u) p_{n-1}(\z,u).
\end{equation}

In particular, this result is true in the diagonal case $n=N$, since we have a solution of the RH problem for large $N$. Then it is well known that the recurrence coefficients $\gamma^2_{N,N}(u)$ and $\beta_{N,N}(u)$ can be expressed in terms of the entries of the matrices in the Riemann--Hilbert analysis (for convenience, we omit the dependence on $u$ in the matrices):
\begin{equation}
\begin{aligned}
\gamma^2_{N,N}(u)&=[Y_1]_{12}[Y_1]_{21}=[T_1]_{12}[T_1]_{21},\\
\beta_{N,N}(u)&=\frac{[Y_2]_{12}}{[Y_1]_{12}}-[Y_1]_{22}=
\frac{[T_2]_{12}}{[T_1]_{12}}-[T_1]_{22},
\end{aligned}
\end{equation}
where
\begin{equation}
Y(\z)\z^{-N\sigma_3}=I+\frac{Y_1}{\z}+\frac{Y_2}{\z^2}+\ldots, \qquad
T(\z)=I+\frac{T_1}{\z}+\frac{T_2}{\z^2}+\ldots,
\qquad \z\to\infty.
\end{equation}

Now we recall that $T=S=RM$ outside the discs $D(1,\ep)$ and $D(\sigma_u,\ep)$ and the lens that we opened in the $T\mapsto S$ transformation, and we use the large $\z$ asymptotic expansion
\begin{equation}
\begin{aligned}
M(\z)&=I+\frac{M_1}{\z}+\frac{M_2}{\z^2}+\ldots\\
&=I+\begin{pmatrix}0 & i\\
-i & 0\end{pmatrix}\frac{1-\sigma_u}{4\z}
+\begin{pmatrix} 1-\sigma_u & 4i (1+\sigma_u)\\
-4i(1+\sigma_u)& 1-\sigma_u \end{pmatrix}\frac{1-\sigma_u}{32\z^2}+\ldots, \qquad \z\to\infty.
\end{aligned}
\end{equation}

Since
\begin{equation}
\begin{aligned}
S_1&=T_1=M_1+R_1,\\
S_2&=T_2=M_2+R_1M_1+R_2,
\end{aligned}
\end{equation}
we obtain
\begin{equation}
\begin{aligned}
\gamma^2_{N,N}(u)&=[M_1]_{12}[M_1]_{21}+[M_1]_{12}[R_1]_{21}+[M_1]_{21}[R_1]_{12}+[R_1]_{12}[R_1]_{21},\\
&=[M_1]_{12}[M_1]_{21}+\mathcal{O}(N^{-1/5}),\\
&=\frac{(1-\sigma_u)^2}{16}+\mathcal{O}(N^{-1/5}),\\
\beta_{N,N}(u)&=\frac{[M_2]_{12}+[R_1M_1]_{12}+[R_2]_{12}}
{[M_1]_{12}+[R_1]_{12}}-[M_1]_{22}+[R_1]_{22}\\
&=\frac{[M_2]_{12}}{[M_1]_{12}}-[M_1]_{22}+\mathcal{O}(N^{-1/5})\\
&=\frac{1+\sigma_u}{2}+\mathcal{O}(N^{-1/5}).
\end{aligned}
\end{equation}

More generally, because of the expansion of $R(\z)$ in inverse powers of $N^{1/5}$ obtained in Proposition \ref{propRn}, the coefficients $\gamma^2_{N,N}(u)$ and $\beta_{N,N}(u)$ can be expanded in inverse powers of $N^{1/5}$ as well. As a consequence of the Riemann--Hilbert analysis, we have the following asymptotic expansions:
\begin{equation}\label{asympgnbn}
\begin{aligned}
\gamma^2_{N,N}(u)&\sim \ga_c^2+\sum_{k=1}^{\infty}\frac{1}{N^{k/5}}p_k(\lambda),\qquad
\beta_{N,N}(u)\sim \be_c+\sum_{k=1}^{\infty}\frac{1}{N^{k/5}}q_k(\lambda),
\end{aligned}
\end{equation}
where $\ga_c^2$ and $\be_c$ are the values of the coefficients at the critical time $u=u_c$, and $u$ is related to $\lambda$ by means of the double scaling relation $N^{4/5}(u-u_c)=c_1\lambda$. 

In order to prove that the odd terms in the previous expansions are $0$, as claimed in the theorem, and also the connection with the solution to the Painlev\'e I differential equation, it is more convenient to use the double scaling in terms of the ratio $n/N$, see \eqref{nNv}. 

We can establish a simple analytic relation between the variables $\lambda$ and $v$ in a neighborhood of the origin, which corresponds to the critical case.
\begin{lemma} We can write
\begin{equation}
\lambda(v)=N^{4/5}\varphi(vN^{-4/5}),
\end{equation}
where $\varphi(t)$ is analytic in a neighborhood of $t=0$ and $\varphi(0)=0$. Furthermore, $\varphi'(0)\neq 0$, so
\begin{equation}
v(\lambda)=N^{4/5}\varphi^{-1}(\lambda N^{-4/5}),
\end{equation}
the inverse function $\varphi^{-1}(t)$ being analytic in a neighborhood of $t=0$.
\end{lemma}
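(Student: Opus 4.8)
The plan is to make the relation between $\lambda$ and $v$ completely explicit, rescale it so that it no longer involves $N$, and then invoke the holomorphic inverse function theorem.

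First I would assemble the chain of identities linking the two double scaling parameters. By definition of $v$ we have $n/N = 1 + vN^{-4/5}$, while the change of variables $z = (u/u_c)\xi$ forces $n = Nu^2/u_c^2$, hence $n/N = (u/u_c)^2$ and $u/u_c = \sqrt{n/N}$. On the other hand, the double scaling relation $N^{4/5}(u-u_c) = c_1\lambda$ together with $c_2 = c_1/u_c$ gives $u/u_c = 1 + c_2\lambda N^{-4/5}$. Equating the two expressions for $n/N$ yields
\[
1 + vN^{-4/5} = \big(1 + c_2\lambda N^{-4/5}\big)^2 .
\]
Setting $t = \lambda N^{-4/5}$ and $s = vN^{-4/5}$, this is the single scalar relation $1 + s = (1 + c_2 t)^2$, in which, \emph{crucially}, $N$ no longer appears.

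Next I would solve this relation for $t$ near the origin. Since the right-hand side equals $1$ at $t=0$ with nonvanishing $t$-derivative $2c_2$, there is a unique solution $t = \varphi(s)$ with $\varphi(0) = 0$; explicitly $\varphi(s) = c_2^{-1}(\sqrt{1+s}-1)$ on the principal branch, which is analytic for $|s| < 1$. (If one instead uses the $n$-scaling convention appropriate to the transferred diagonal problem, the same elementary computation produces a $\varphi$ of the form $c_2^{-1}(1+s)^{4/5}(\sqrt{1+s}-1)$; in either case $\varphi$ is analytic in a neighborhood of $0$ with $\varphi(0)=0$.) Undoing the rescaling gives $\lambda = N^{4/5}\varphi(vN^{-4/5})$, which is the first assertion. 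Differentiating, $\varphi'(0) = (2c_2)^{-1} \neq 0$, so by the holomorphic inverse function theorem $\varphi$ restricts to a biholomorphism between a neighborhood of $0$ and a neighborhood of $\varphi(0) = 0$, with analytic inverse $\varphi^{-1}$; hence $v = N^{4/5}\varphi^{-1}(\lambda N^{-4/5})$.

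The only point requiring attention is that all of this is uniform in $N$: since, after the rescaling, the relation between $s$ and $t$ carries no residual dependence on $N$, the neighborhoods of $0$ on which $\varphi$ and $\varphi^{-1}$ are analytic are fixed, and for $v$ (respectively $\lambda$) in a prescribed bounded set the argument $vN^{-4/5}$ (respectively $\lambda N^{-4/5}$) lies inside them once $N$ is large. I do not expect any genuine obstacle here; the content of the lemma is precisely the inverse function theorem applied to an explicitly computable analytic function, and the main care needed is the bookkeeping of which scaling ($N$ versus $n$) each variable refers to.
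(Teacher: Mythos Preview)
Your proposal is correct and follows essentially the same route as the paper: derive the quadratic relation $1+vN^{-4/5}=(1+c_2\lambda N^{-4/5})^2$ from the two double-scaling definitions, rescale to an $N$-free relation between $s=vN^{-4/5}$ and $t=\lambda N^{-4/5}$, solve explicitly for $\varphi$, and apply the inverse function theorem using $\varphi'(0)=(2c_2)^{-1}\neq 0$. You even anticipate the paper's choice of $\varphi(t)=c_2^{-1}(\sqrt{1+t}-1)(1+t)^{4/5}$, which arises because the paper implicitly passes through the $n$-scaled variable in the chain of equalities; your parenthetical remark handles this correctly, and your added observation about uniformity in $N$ is a useful clarification that the paper leaves implicit.
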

\begin{proof}
We can express the variable $\lambda$ in terms of $v$:
\begin{equation}
1+vN^{-4/5}=\frac{n}{N}=\frac{u^2}{u_c^2}=(1+c_2\lambda N^{-4/5})^2
\end{equation}
where $c_2=c_1/u_c=2^{-7/5}$, using \eqref{dsc_u}, and hence for bounded $v$ and large enough $N$ we can take square root and write:

\begin{equation}\label{xv}
\lambda(v)=N^{4/5} \varphi(N^{-4/5}v), 
\end{equation}
where the function
\begin{equation}\label{phit}
\varphi(t)=\frac{\sqrt{1+t}-1}{c_2}
\end{equation}
is analytic in a neighborhood of $t=0$. The first few terms in the Taylor expansion give
\begin{equation}\label{xvexplicit}
\lambda(v)=\frac{v}{2c_2}-\frac{v^2}{8c_2}N^{-4/5}+
\mathcal{O}(N^{-8/5}),
\end{equation} 
so $\varphi'(0)=2^{2/5}\neq 0$, and by the inverse function theorem, locally near $\lambda=0$ we can write
\begin{equation}\label{vx}
v(\lambda)=N^{4/5}\varphi^{-1}(N^{-4/5}\lambda),
\end{equation}
where $\varphi^{-1}(t)$ is analytic in a neighborhood of $t=0$. By perturbation of \eqref{xvexplicit}, the first few terms of this expansion are
\begin{equation}\label{vxexplicit}
v(\lambda)=2c_2\lambda+c_2^2 \lambda^2 N^{-4/5}+\mathcal{O}(N^{-8/5}).
\end{equation}

\end{proof}

%

Now consider the recurrence coefficients corresponding to the new
weight $e^{-NV(\xi;u_c)}$, where
$V(\xi;u_c)=\frac{\xi^2}{2}-u_c\xi^3$. We denote these coefficients
by $\ga^2_{N,n}$ and $\be_{N,n}$, and we have
\begin{equation}\label{newcoeffs}
\begin{aligned}
\ga_{N,n}^2(\lambda)&=\frac{u^2}{u_c^2}\ga_{N,N}^2(\lambda)\sim \left[\ga^2_c+\sum_{k=1}^{\infty}\frac{1}{N^{k/5}}\,p_{k}(\lambda)\right]\left(1+\frac{c_2\lambda}{N^{4/5}}\right)^2,\\
\be_{N,n}(\lambda)&=\frac{u}{u_c}\be_{N,N}(\lambda)\sim \left[\be_c+\sum_{k=1}^{\infty}\frac{1}{N^{k/5}}\,q_{k}(\lambda)\right]\left(1+\frac{c_2\lambda}{N^{4/5}}\right).
\end{aligned}
\end{equation}

These asymptotic expansions are a consequence of the Riemann--Hilbert analysis, since we can expand the coefficients $\ga^2_{N,N}(u)$ and $\be_{N,N}(u)$ in inverse powers of 
$N^{1/5}$, see \eqref{asympgnbn}.


Using the double scaling and the relation between $\lambda$ and $v$ from the previous lemma, we can write an expansion in inverse powers of $N^{1/5}$:
\begin{equation}\label{asympgNbNv}
\begin{aligned}
\ga_{N,n}^2(v)&\sim \left[\ga^2_c+\sum_{k=1}^{\infty}\frac{p_k(N^{4/5} \varphi(vN^{-4/5}))}{N^{k/5}(1+vN^{-4/5})^k}\,\right]
\left(1+\frac{c_2\varphi(vN^{-4/5})}{(1+vN^{-4/5})^{4/5}}\right)^2=\ga^2_c+\sum_{k=1}^{\infty}\frac{1}{N^{k/5}}\,\hat{p}_{k}(v),\\
\beta_{N,n}(v) & \sim \left[\be_c+\sum_{k=1}^{\infty}\frac{q_k(N^{4/5} \varphi(vN^{-4/5}))}{N^{k/5}(1+vN^{-4/5})^k}\right]\left(1+\frac{c_2\varphi(vN^{-4/5})}{(1+vN^{-4/5})^{4/5}}\right)=\be_c+\sum_{k=1}^{\infty}\frac{1}{N^{k/5}}\hat{q}_{k}(v),
\end{aligned}
\end{equation}
for some coefficients $\hat{p}_{k}(v)$ and $\hat{q}_{k}(v)$ that can be computed from the previous formula, in terms of $p_{k}$ and $q_k$, using the analyticity of the coefficients $p_k$ and $q_k$ and of the function $\varphi$. Explicitly, the first coefficients are
\begin{equation}
\begin{aligned}
\hat{p}_k(v)&=p_k\left(\frac{v}{2c_2}\right)=p_k(2^{-2/5}v), \qquad k=1,2,3,4,\\
\hat{q}_k(v)&=q_k\left(\frac{v}{2c_2}\right)=q_k(2^{-2/5}v), \qquad k=1,2,3,4.
\end{aligned}
\end{equation}

Higher order coefficients can be computed, using $p_k$, $q_k$ and
their derivatives. Next we consider the asymptotic expansions
\eqref{asympgNbNv} together with the string equations corresponding
to the weight $e^{-NV(\z;u_c)}$ (we omit the dependence on $u_c$ for
brevity):
%
%
%

\begin{equation}\label{string2}
\begin{aligned}
3u_c(\ga^2_{N,n}+\be^2_{N,n}+\ga^2_{N,n+1})&=\be_{N,n},\\
\ga^2_{N,n}(1-3u_c(\be_{N,n}+\be_{N,n-1}))&=\frac{n}{N}.
\end{aligned}
\end{equation}

We note the symmetry of the first string equation with respect to the change of indices
\begin{equation}
\sigma_0=\{\ga^2_{N,j}\rightarrow \ga^2_{N,2n+1-j},
\be_{N,j}\rightarrow \be_{N,2n-j},j=0,1,\ldots\},
\end{equation}
and the second string equation with respect to
\begin{equation}
\sigma_1=\{\ga^2_{N,j}\rightarrow \ga^2_{N,2n-j},\be_{N,j}\rightarrow \be_{N,2n-1-j},\,\, j=0,1,2,\ldots
\}.
\end{equation}

As a consequence, all odd terms in the expansion are $0$, and in
fact we can write
\begin{equation}\label{asympgNbNv2N}
\begin{aligned}
\gamma_{N,n}^2(v)&\sim \ga^2_c+\sum_{k=1}^{\infty}\frac{1}{N^{2k/5}}\,\hat{p}_{2k}(v),\qquad
\beta_{N,n}(v)  \sim \be_c+\sum_{k=1}^{\infty}\frac{1}{N^{2k/5}}\,\hat{q}_{2k}(\tilde{v}).
\end{aligned}
\end{equation}

The complete proof of this property can be found in \cite[Section 5]{BI}.

In order to identify the coefficients $\hat{p}_2$ and $\hat{q}_2$ in terms of Painlev\'e I functions, we take the first few terms from the asymptotic expansion \eqref{asympgNbNv2N}:
\begin{equation}\label{cr12}
\begin{aligned}
\ga_{N,n}^2(v)&=\ga_c^2+N^{-2/5}\hat{p}_2(v)+N^{-4/5}\hat{p}_{4}(v)+\ldots,\\
\be_{N,n}(v)&=\be_c+N^{-2/5}\hat{q}_2(\tilde{v})+N^{-4/5}\hat{q}_{4}(\tilde{v})+\ldots,
\end{aligned}
\end{equation}

Now we substitute this into the string equations \eqref{string2} and consider different powers of $N$. At the zeroth level we have the following relations for $\hat{p}_0=\ga_c^2$ and $\hat{q}_0=\be_c$:
\begin{equation}\label{g0b0}
\begin{aligned}
72u_c^2\hat{p}^3_0-\hat{p}^2_0+1&=0, \qquad
\hat{q}_0=\frac{\hat{p}_0-1}{6u_c\hat{p}_0}.
\end{aligned}
\end{equation}

The cubic equation for $\hat{p}_0$ has a double root
\begin{equation}\label{gc}
\ga^2_c=\sqrt{3}
\end{equation}
and one simple root $\ga^2_{c*}=-\tfrac{\sqrt{3}}{2}$. The critical value for $\hat{q}_0$ can be computed from the second equation in \eqref{g0b0}:
\begin{equation}\label{bc}
\be_c=\frac{\ga^2_c-1}{6u_c\ga^2_c}=3^{1/4}(\sqrt{3}-1).
\end{equation}

These critical values can also be computed from the value of the coefficients $g_0(u_c)$ and $b_0(u_c)$ coming from the regular regime, see \cite[\S 7]{BD}.

Next we have factors multiplying $N^{-2/5}$. We consider the first string equation and expand around $\tilde{v}$ Observe that
\begin{equation}\label{cr15}
\begin{aligned}
\ga_{N,n}^2(\tilde{v})&=\ga^2_c+N^{-2/5} \hat{p}_2\left(\tilde{v}-\frac{N^{-1/5}}{2}\right)+N^{-4/5} \hat{p}_4\left(\tilde{v}-\frac{N^{-1/5}}{2}\right)+\ldots,\\
\ga_{N,n+1}^2(\tilde{v})&=\ga^2_c+N^{-2/5} \hat{p}_2\left(\tilde{v}+\frac{N^{-1/5}}{2}\right)+N^{-4/5} \hat{p}_4\left(\tilde{v}+\frac{N^{-1/5}}{2}\right)+\ldots,
\end{aligned}
\end{equation}
so when we expand in Taylor series around $\tilde{v}$ and add these two quantities, only the even terms remain:
\begin{equation}
\begin{aligned}
\ga_{N,n}^2+\ga^2_{N,n+1}&=
2\ga^2_c+2N^{-2/5}\hat{p}_2(\tilde{v})+ N^{-4/5}\left(\frac{1}{4}\hat{p}''_2(\tilde{v})
+2\hat{p}_4(\tilde{v})\right)+\ldots,
\end{aligned}
\end{equation}
and then we have from the first string equation
\begin{equation}
6u_c\hat{p}_2(\tilde{v})-(1-6u_c\beta_c)\hat{q}_2(\tilde{v})=0.
\end{equation}

Since $1-6u_c\be_c=1/\sqrt{3}$, see \eqref{bc}, we obtain
\begin{equation}
\hat{q}_2(\tilde{v})=3^{-1/4}\,\hat{p}_2(\tilde{v}).
\end{equation}

Now we want to balance the terms multiplying $N^{-4/5}$, and   that gives
\begin{equation}\label{ode1}
\frac{1}{8}\hat{p}''_2(\tilde{v})+\frac{1}{2}\hat{q}^2_2(\tilde{v})
=-\hat{p}_4(\tilde{v})+3^{1/4}\hat{q}_4(\tilde{v}).
\end{equation}

Consider now the second string equation in \eqref{string2}, we expand around $v$. From \eqref{cr12} we have 
\begin{equation}\label{cr27}
\begin{aligned}
\be_{N,n}(v)&=\be_c+N^{-2/5}\hat{q}_2\left(v+\frac{N^{-1/5}}{2}\right)+N^{-4/5}\hat{q}_4\left(v+\frac{N^{-1/5}}{2}\right)+\ldots\\
\be_{N,n-1}(v)&=\be_c+N^{-2/5}\hat{q}_2\left(v-\frac{N^{-1/5}}{2}\right)+N^{-4/5}\hat{q}_4\left(v-\frac{N^{-1/5}}{2}\right)+\ldots,
\end{aligned}
\end{equation}
hence
\begin{equation}\label{cr29}
\begin{aligned}
\be_{N,n}(v)+\be_{N,n-1}(v)
&=2\be_c+2N^{-2/5}\hat{q}_2(v)
+N^{-4/5}\left(\frac{1}{4}\hat{q}_2''(v)+2\hat{q}_4(v)\right)+\ldots
\end{aligned}
\end{equation}

By substituting this expression into the second equation in \eqref{string2}, we obtain the following equation for the terms multiplying $n^{-2/5}$:
\begin{equation}
(1-6u_c\be_c)\hat{p}_2(v)-6u_c\ga^2_c\hat{q}_2(v)=0,
\end{equation}
which gives $\hat{q}_2(v)=3^{-1/4}\,\hat{p}_2(v)$ again. If we group terms that multiply $N^{-4/5}$, we get
\begin{equation}\label{ode2}
-\frac{3^{1/4}}{8}\hat{q}_2''(v)-3^{-1/4}\,\hat{p}_2(v)\hat{q}_2(v)-3^{1/2}v=-\hat{p}_4(v)+3^{1/4}\hat{q}_4(v).
\end{equation}

Let us compare equations \eqref{ode1} and \eqref{ode2}. These are two equations on the
functions $\hat{p}_2(v)$, $\hat{q}_2(v)$, $\hat{p}_4(v)$, and $\hat{q}_4(v)$. The fact that in \eqref{ode1} we use
the variable $\tilde v$, and not $v$, does not matter, because these are equations on functions, and we can replace $\tilde v$ by $v$ in \eqref{ode1}. Since the right hand sides
of equations \eqref{ode1}, \eqref{ode2} are equal, the left hand sides are equal as well:
\begin{equation}\label{cr39}
-\frac{3^{1/4}}{8}\hat{q}_2''(v)-3^{-1/4}\,\hat{p}_2(v)\hat{q}_2(v)-3^{1/2}v=\frac{1}{8}\hat{p}''_2(\tilde{v})+\frac{1}{2}\hat{q}^2_2(\tilde{v}).
\end{equation}

We can reduce this to an equation on $\hat{p}_2(v)$ only:
\begin{equation}\label{cr40}
\hat{p}_2''(v)=-2\sqrt{3}\, \hat{p}_2^2(v)-4\sqrt{3}\, v,
\end{equation}
which is the Painlev\'e I equation. Applying the change of variables
\begin{equation}\label{cr41}
v=2^{-2/5}\lambda,\qquad \hat{p}_2(v)=-2^{4/5}3^{1/2} y_{\alpha}(\lambda),
\end{equation}
we bring it to the standard form:
\begin{equation}\label{cr42}
y_{\alpha}''(\lambda)=6 y_{\alpha}^2(\lambda)+\lambda.
\end{equation}

We know that $y_{\alpha}(\lambda)$ is the appropriate solution of Painlev\'e I because of the asymptotic behavior coming from the Riemann--Hilbert analysis. 

The asymptotic expansions for the coefficients $\ga_{N,N}^2(\lambda)$ and $\be_{N,N}(\lambda)$ in inverse powers of
$N^{2/5}$ now follow directly from \eqref{asympgNbNv2N} and \eqref{newcoeffs}:

\begin{equation}
\begin{aligned}
\ga^2_{N,N}(\lambda)&\sim \left[\ga^2_c+\sum_{k=1}^{\infty}\frac{\hat{p}_{2k}(v)}{N^{2k/5}}\right]\frac{1}{(1+c_2\lambda N^{-4/5})^2}
=\ga^2_c+\sum_{k=1}^{\infty}\frac{p_{2k}(\lambda)}{N^{2k/5}},
\end{aligned}
\end{equation}
for some coefficients $p_{2k}(\lambda)$. Here we have used the change of variables from $v$ to $\lambda$ and also
\begin{equation}
\frac{n}{N}=1+vN^{-4/5}=1+\varphi^{-1}(\lambda N^{-4/5}).
\end{equation}

The first coefficients are
\begin{equation}\label{pphat}
p_{2}(\lambda)=\hat{p}_{2}(2c_2 \lambda)=\hat{p}_2(2^{-2/5}\lambda), \qquad
p_{4}(\lambda)=\hat{p}_{4}(2c_2 \lambda)=\hat{p}_4(2^{-2/5}\lambda).
\end{equation}

Similarly,
\begin{equation}
\begin{aligned}
\be_{N,N}(\lambda)&\sim \left[\be_c+\sum_{k=1}^{\infty}\frac{\hat{q}_{2k}(v)}{N^{2k/5}}\right]\frac{1}{1+c_2\lambda N^{-4/5}}
\sim \be_c+\sum_{k=1}^{\infty}\frac{q_{2k}(\lambda)}{N^{2k/5}},
\end{aligned}
\end{equation}
for some coefficients $q_{2k}(\lambda)$. The first terms are
\begin{equation}\label{qqhat}
q_{2}(\lambda)=\hat{q}_{2}(2c_2 \lambda)=\hat{q}_2(2^{-2/5}\lambda), \qquad
q_{4}(\lambda)=\hat{q}_{4}(2c_2 \lambda)=\hat{q}_4(2^{-2/5}\lambda).
\end{equation}

Finally, from \eqref{pphat}, we have 
\begin{equation}
 \hat{p}_2(2^{-2/5}\lambda)=p_2(\lambda)=-2^{4/5}3^{1/2} y(\lambda), 
\end{equation}
and also $q_2(\lambda)=\hat{q}_2(2^{-2/5}\lambda)=3^{-1/4} \hat{p}_2(2^{-2/5}\lambda)=3^{-1/4}p_2(\lambda)$, which concludes the proof.

\section{Proof of Theorem \ref{th_extended}}\label{sec_free}

From \cite{BD}, we know that in the regular case (when $u_c-u\geq
\delta$, for some fixed $\delta$ independent of $N$), the
coefficient  $\ga_{N,N}^2(u)$ admits an asymptotic expansion in
inverse powers of $N^{2}$. On the other hand, Theorem \ref{Th1}
gives an asymptotic expansion in inverse powers of $N^{2/5}$ in the
double scaling regime. The issue now is how to cover the gap between the regular and the
double scaling cases in the asymptotics of $\ga^2_{N,N}(u)$, by extending the regions of validity of those asymptotic expansions.

\subsection{Extension of the regular regime}

In the regular case, in terms of the parameter $w=u^2$ and setting $n=N$, we have
the following asymptotic expansions, see \cite{BD}:
\begin{equation}\label{asympgbhat}
\begin{aligned}
\hat{\gamma}_{N,N}^2(w)& \sim \sum_{k=0}^{\infty}\frac{u^{4k}}{N^{2k}}\,\hat{g}_{2k}(w), \qquad 
\hat{\beta}_{N,N}(w) \sim \sum_{k=0}^{\infty}\frac{u^{4k}}{N^{2k}}\,\hat{b}_{2k}(w),
\end{aligned}
\end{equation}
where
\begin{equation}\label{hatbg_1}
\begin{aligned}
\hat{g}_{2k}(w)=u^{-4k+2}g_{2k}(u),\qquad
\hat{b}_{2k}(w)=u^{-4k+1}b_{2k}(u), \qquad k\geq 0.
\end{aligned}
\end{equation}

These asymptotic expansions are uniform in $u$ if $0< u\leq
u_c-\delta$, for fixed $\delta>0$. This is a consequence of the
Riemann--Hilbert analysis. The purpose of this section is to extend
this property to the case when we let $u-u_c$ decrease with $N$, at
a certain (slow enough) rate. To this end, we will need a local
analysis near $\zeta=1$ in a neighborhood that shrinks with $N$.

In the regular case, we recall the global parametrix:
\begin{equation}
M(\z)=\frac{1}{2}
\begin{pmatrix}
1& 1\\ i& -i
\end{pmatrix}
\begin{pmatrix}
\beta^{-1}(\z)& 0\\ 0 & \beta(\z)
\end{pmatrix}
\begin{pmatrix}
1& 1\\ i& -i
\end{pmatrix},
\qquad \beta(\z)=\left(\frac{\z+1}{\z-1}\right)^{1/4}.
\end{equation}

We construct the local parametrix $P(\z)$ in a neighborhood of $\z=1$ in terms of Airy functions. Namely, given a disc $D(1,\varepsilon)$ and $\Gamma_S$ the union of contours shown in Figure \ref{GaS_extended}, the function $P(\z)$ satisfies the following RH problem:
\begin{enumerate}
\item $P(\z)$ is analytic in $D(1,\varepsilon)\setminus \Gamma_S$, and for every $x\in\Gamma_S$ the following limits exist:
\begin{equation}\label{ps1}
P_{\pm}(s)=\lim_{\z\to s,\; \z\in \Om_{\pm}}P(\z),
\end{equation}
\item On $\Gamma_S$, we have the following jumps:
\begin{equation}\label{ps2}
P_{+}(s)=P_{-}(s)
\begin{cases}
 \begin{pmatrix}
  1 & e^{N\phi_u(s)}\\ 0 & 1
 \end{pmatrix},
& s\in D(1,\varepsilon)\cap [1,\infty)\\
 \begin{pmatrix}
  1 & 0\\ e^{-N\phi_u(s)} & 1
 \end{pmatrix},
& s\in D(1,\varepsilon)\cap (\gamma_+\cup\gamma_-)\\
\begin{pmatrix}
  0 & 1\\-1 & 0
 \end{pmatrix},
& s\in D(1,\varepsilon)\cap [\sigma_u,1]\\
\end{cases}
\end{equation}
\item Uniformly for $\z\in\partial D(1,\varepsilon)$,
\begin{equation}\label{ps3}
P(\z)=M(\z)\left(I+\mathcal O\left(\frac{1}{N}\right)\right).
\end{equation}
\end{enumerate}

\begin{figure}
\centerline{\includegraphics{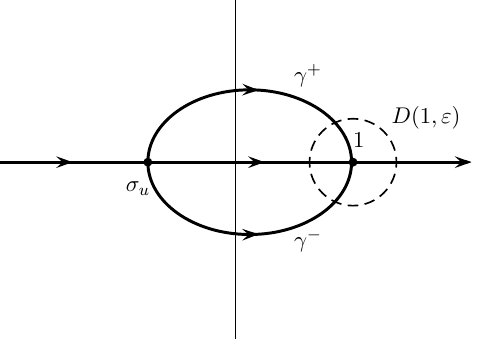}}
\caption{Contour $\Gamma_S$ for the extension of the regular regime.}
\label{GaS_extended}
\end{figure}

Define 
\begin{equation}
 \tilde{P}(\z)=P(\z)e^{N\phi_u(\z)\sigma_3/2},
\end{equation}
then we have
\begin{enumerate}
\item $\tilde{P}(\z)$ is analytic in $D(1,\varepsilon)\setminus \Gamma_S$, and for every $x\in\Gamma_S$ the following limits exist:
\begin{equation}\label{pu1}
\tilde{P}_{\pm}(\z)=\lim_{s\to\z,\; \z\in \Om_{\pm}}\tilde{P}(s),
\end{equation}
\item On $\Gamma_S$, we have the following jumps:
\begin{equation}\label{pu2}
\tilde{P}_{+}(\z)=\tilde{P}_{-}(\z)
\begin{cases}
 \begin{pmatrix}
  1 & 1\\ 0 & 1
 \end{pmatrix},
& \z\in D(1,\varepsilon)\cap [1,\infty)\\
 \begin{pmatrix}
  1 & 0\\ 1 & 1
 \end{pmatrix},
& \z\in D(1,\varepsilon)\cap (\gamma_+\cup\gamma_-)\\
\begin{pmatrix}
  0 & 1\\-1 & 0
 \end{pmatrix},
& \z\in D(1,\varepsilon)\cap [\sigma_u,1]\\
\end{cases}
\end{equation}
\item Uniformly for $\z\in\partial D(1,\varepsilon)$,
\begin{equation}\label{pu3}
\tilde{P}(\z)=M(\z)\left(I+\mathcal O\left(\frac{1}{N^{\varepsilon_3}}\right)\right) e^{N\phi_u(\z)\sigma_3/2},
\end{equation}
for $\varepsilon_3>0$.
\end{enumerate}

Consequently, we look for a local parametrix in the form
\begin{equation}
\tilde{P}(\z)=E(\z)A(N^{2/3}f(\z)),
\end{equation}
with the analytic prefactor
\begin{equation}
E(\z)=M(\z)\frac{1}{\sqrt{2}}\begin{pmatrix} 1 & i\\ i &
1\end{pmatrix}^{-1}(N^{2/3}f(\z))^{\sigma_3/4}.
\end{equation}

The matrix $A(N^{2/3}f(\z))$ solves the standard
Airy--Riemann--Hilbert problem in an auxiliary $w=N^{2/3}f(\z)$
plane, and the function $f(\z)$ is
\begin{equation}
f(\z)=\left(-\frac{3}{4}\phi_u(\z)\right)^{2/3}=\left(-\frac{3\pi i}{2}\int_{\z}^1 \varrho_u(s)ds\right)^{2/3}, \qquad \z\in\mathbb{C}\setminus(-\infty,1],
\end{equation}
where $\varrho_u(s)$ is the density of the equilibrium measure given by \eqref{rhoz}. 

If we expand in powers of $s-1$ and integrate, we obtain
\begin{equation}\label{phiuat1}
\begin{aligned}
\phi_u(\z)
&=C_u(\z-1)^{3/2}\left(1+\frac{3\z_0-15}{20(\z_0-1)}(\z-1)+\mathcal{O}((\z-1)^{2})\right),
\end{aligned}
\end{equation}
where
\begin{equation}
 C_u=\frac{3\sqrt{2}}{16}u(b-a)^3(\z_0-1)>0.
\end{equation}

Recall that $\z_0>1$ if $0\leq u<u_c$, as a consequence of \cite[Prop. 2.2]{BD}. If $\z_0=1+\delta$, with $\delta>0$, then the root of the linear factor in the previous expansion is
\begin{equation}
 \z^*=1-\frac{20(\z_0-1)}{3\z_0-15}=1+\frac{5}{3}\delta+\mathcal{O}(\delta^2)
\end{equation}
  
This proves that $f(\z)$ is a conformal mapping of a neighborhood of $\z=1$ onto a neighborhood of the origin in the $w$ plane.  

If we define now $R(\z)=S(\z)M(\z)^{-1}$ outside the discs
$D(\pm1,\varepsilon)$, and $R(\z)=S(\z)\tilde{P}(\z)^{-1}$ inside,
then the jump for this matrix $R(\z)$ on the boundary of the discs
is $J_R(\z)=M(\z) \tilde{P}(\z)^{-1}$, and because of the matching
between $\tilde{P}(\z)$ and $M(\z)$, this jump is equal to
$I+\mathcal{O}(1/N)$. This is used to prove that $R(\z)$ itself can
be expanded in inverse powers of $N$, beginning with $I$, and this
in turn is used to prove that the recurrence coefficients
$\ga^2_{N,N}(u)$ and $\be_{N,N}(u)$ admit an asymptotic expansion in powers
of $1/N$ too. This expansion is uniform in $u$ for $0<u\leq
u_c-\delta$.

It is clear that if $\z_0=1+\delta$, for some fixed $\delta>0$, then we can find a (fixed) neighborhood of $\z=1$ where the linear factor in \eqref{phiuat1} is positive. The problem that we will encounter when trying to extend the regular case closer to the critical value $u=u_c$ is that the single root of the density $\varrho_u(\z)$ can get arbitrarily close to $\z=1$, and that spoils the analyticity of $f(\z)$ in a neighborhood of $\z=1$. More precisely, if $\z_0=1+cN^{-\ga}$, for some $\ga>0$ and $c>0$, then the linear factor before will vanish at the point
\begin{equation}
\z^*=1+\frac{5c}{4}N^{-\ga}+\mathcal{O}(N^{-2\ga}),
\end{equation}
and $f(\z)$ will not be analytic in any fixed neighborhood of $\z=1$ for $N$ large enough. To get analyticity in this context, we construct a shrinking neighborhood of $\z=1$.

We make the change of variables $s=1+\xi N^{-\ga}$ and $\z=1+\tau N^{-\ga}$, then we obtain
\begin{equation}
\phi_u(\tau)=
C_u N^{-5\ga/2}\tau^{3/2}\left(\tau-\frac{20c}{12-3cN^{-\ga}}+\mathcal{O}(\tau^{2})\right).
\end{equation}

Now the linear factor is harmless when $c>0$, since for any $N\geq 1$ and $\ga>0$ we can bound
\begin{equation}
\frac{20c}{12-3cN^{-\ga}}>\frac{5c}{3},
\end{equation}
and then in any disc around $\tau=0$ of radius $\delta<\tfrac{5c}{3}$, the function
\begin{equation}
N^{2/3}f(\tau)=N^{2/3}\left(-\frac{3}{4}\phi_u(\tau)\right)^{2/3}
=C_u^{3/2} N^{\frac{2-5\ga}{3}}
\tau\left(\tau-\frac{20c}{12-3cN^{-\ga}}+\mathcal{O}(\tau^{2})\right)^{2/3}
\end{equation}
will be analytic.

Because of the shrinking neighborhood, we get an extra factor $N^{\frac{2-5\ga}{3}}$. In order to be able to do the matching with the global parametrix, we need the argument of the Airy function to grow large with $N$, and that leads to the condition
\begin{equation}\label{restga1}
\frac{2-5\ga}{3}>0 \Rightarrow
\ga<\frac{2}{5}.
\end{equation}

Hence, we suppose that $\gamma=\frac{2}{5}-\frac{\varepsilon_2}{2}$, with $\ep_2>0$, so $\z_0=1+c N^{-\frac{2}{5}+\frac{\ep_2}{2}}$. Bearing in mind \eqref{z0s} and the fact that the variables $u$ and $s$ are essentially equivalent near the critical value, we conclude that we can cover the following range with the regular case:
\begin{equation}\label{extreg}
u-u_c\sim \tilde{c}_3 N^{-\frac{4}{5}+\ep_2}.
\end{equation}

We note that in this setting we still obtain an asymptotic expansion for $R(\z)$ that is uniform in $u$, provided that \eqref{extreg} is satisfied. The difference is that the coefficients in the asymptotic expansion coming from the Airy parametrix will now depend on $N$, and actually grow as $N$ gets large:
\begin{equation}
A(N^{2/3}f(\z))=(N^{2/3}f(\z))^{-\sigma_3/4}\frac{1}{\sqrt{2}}
\begin{pmatrix}
1 & i \\ i & 1
\end{pmatrix}
\left(I+\sum_{k=1}^{\infty}\frac{a_k N^{k(5\ga/2-1)}}{\tau^{3k/2}}\right)e^{-N\phi_u(\z)\sigma_3/2}.
\end{equation}

However, because of the restriction \eqref{restga1}, the exponent of
$N$ is still $<0$ for every $k\geq 1$, so we obtain an asymptotic
series, uniform in $u$ in this extended regular regime. Note that the term in parenthesis is
\begin{equation}
I+\mathcal{O}(N^{5\gamma/2-1})=I+\mathcal{O}(N^{-5\varepsilon_2/4}) 
\end{equation}
if $\gamma=\frac{2}{5}-\frac{\varepsilon_2}{2}$, with $\varepsilon_2>0$, and this has an effect on the matching condition \eqref{pu3}.

\subsection{Extension of the double scaling regime}

In the double scaling regime, the asymptotic expansion for
$\ga^2_{N,N}(u)$ was obtained applying the Deift--Zhou nonlinear
steepest descent to the Riemann--Hilbert problem. In particular, in
the final step, the form of the asymptotic expansion for
$R(\zeta)$ comes from the local parametrix near $\z=1$, given in
terms of the solution to the Painlev\'e I equation.

Observe that the uniform asymptotic expansion for $\Phi(\z;\lambda,\alpha)$ in \eqref{asymp_Phi_final} is obtained by considering the function $g(\z)$ instead of $\theta_0(\z)$. Bearing in mind \eqref{rhp19A}, the first term that we neglect is 
\begin{equation}
a_0(\z,\lambda)=-\frac{\sqrt{6}}{9}(-\lambda)^{5/4}\z^{-1/2},
\end{equation}
and we need this to be small as we let $(-\lambda)\to\infty$. In the construction of the local parametrix, see \eqref{px9}, we have $\z=\mathcal{O}(N^{2/5})$, and we suppose now that $-\lambda=\mathcal{O}(N^{\varepsilon})$, for some $\varepsilon>0$. In order for $a_0(\z,\lambda)$ to tend to $0$ as $N\to\infty$, we need
\begin{equation}
\frac{5\varepsilon}{4}-\frac{1}{5}<0 \Rightarrow \varepsilon<\frac{4}{25}, 
\end{equation}
so the extension of the double scaling regime is possible provided that $(-\lambda)=\mathcal{O}(N^{4/25-\varepsilon})$ as $N\to\infty$. As a consequence, one can allow $u-u_c=\mathcal{O}(N^{-16/25-\varepsilon})$, as $N\to\infty$, so there is indeed an overlap between both regimes.

\section{Proof of Theorem \ref{thfree}}

In order to analyze the free energy near the critical case, we will
use the Toda equation, similarly to \cite{BD}:
\begin{equation}\label{Toda}
\frac{d^2 \tilde{F}_N(t)}{dt^2}=\tilde{\ga}^2_{N,N}(t),
\end{equation}
expressed in terms of the parameter $t$, which is related to $u$ as
follows:
\begin{equation}\label{tu}
t=\frac{1}{4(3u)^{4/3}}.
\end{equation}

Accordingly, the new critical value is $t_c=3\cdot 2^{-2/3}$. We recall from \cite[Proposition 5.1]{BD} that the Toda equation \eqref{Toda} holds for any $t>t_c$.

The corresponding weight function is
$\tilde{V}(w;t)=-\frac{w^3}{3}+tw$. More precisely, if we
apply the change of variables
\begin{equation}\label{zw}
z=(3u)^{-1/3}w+\frac{1}{6u}\,,
\end{equation}
where we assume that $u>0$ and $(3u)^{-1/3}>0$, we have, by
straightforward algebra, that
\begin{equation}\label{wz1}
V(z;u)-\frac{1}{108u^2}=\tilde{V}(w;t).
\end{equation}

We observe that the recurrence coefficients in the $t$ parameter,
that we denote $\tilde{\ga}_{N,N}(t)$ and $\tilde{\be}_{N,N}(t)$,
can be obtained from the original ones in the $u$ parameter as
follows:
\begin{equation}\label{gbu}
\begin{aligned}
\tilde{\gamma}_{N,N}^2(t)&=(3u)^{2/3}\gamma_{N,N}^2(u)=\frac{\gamma_{N,N}^2(u)}{2\sqrt{t}}\,,\\
\tilde{\beta}_{N,N}(t)&=(3u)^{1/3}\left(\beta_{N,N}(u)-\frac{1}{6u}\right)=\frac{1}{(4t)^{1/4}}\left(\beta_{N,N}(u)-\frac{1}{6u}\right).
\end{aligned}
\end{equation}

We have the following relation between the free energy in the variables $u$ and $t$:
\begin{equation}\label{FNFN}
\tilde{F}_N(t)=\frac{1}{108 u^2}+\frac{\ln
\,(3u)}{3}+F_N(u)=\frac{2\,t^{3/2}}{3}-\frac{\ln\,(4t)}{4}+F_N(u)\,.
\end{equation}

We start from the regular regime, which can be extended to cover the range $u_c-u=\mathcal{O}(N^{-4/5+\varepsilon})$. In this case
\begin{equation}
 \tilde{F}_N(t)=\tilde{F}_0(t)+\frac{\tilde{F}_2(t)}{N^2}+\mathcal{O}(N^{-4}).
\end{equation}

The behavior of $\tilde{F}_N(t)$ near the critical point can be analyzed as in \cite{BD}, using the Toda equation \eqref{Toda}, bearing in mind that $\tilde{\gamma}^2_{N,N}(t)$ admits an asymptotic expansion in powers of $N^{-2}$ if $t>t_c$:
\begin{equation}
\tilde{\gamma}^2_{N,N}(t)=\tilde{g}_0(t)+\frac{\tilde{g}_2(t)}{N^2}+\mathcal{O}(N^{-4}).
\end{equation}

The first term $\tilde{g}_0(t)$ behaves as follows near $t=t_c$:
\begin{equation}
\tilde{g}_0(t)=2^{-2/3}-2^{-1/3}3^{-1/2} (\Delta t)^{1/2}+\mathcal{O}(\Delta t), \qquad \Delta t=t-t_c.
\end{equation}

Integrating twice \eqref{Toda} from $t=t_c$, we obtain 
\begin{equation}
\tilde{F}^{(0)}(t)=-\frac{2^{5/3}3^{1/2}}{45}(\Delta t)^{5/2}+2^{-5/3}(\Delta t)^2+\tilde{A}+\tilde{B}\cdot \Delta t+\mathcal{O}((\Delta t)^3),
\end{equation}
where $\tilde{A}$ and $\tilde{B}$ are constants. The next term is
\begin{equation}
\tilde{F}^{(2)}(t)=-\frac{1}{48}\ln (\Delta t)+\tilde{D}+\mathcal{O}((\Delta t)^{1/2}),
\end{equation}
where $\tilde{D}$ is a constant. We add and subtract the logarithmic term, which is singular at $t=t_c$:
\begin{equation}
\tilde{F}_N(t)=\tilde{F}^{(0)}(t)+\frac{\tilde{F}^{(2)}(t)+\frac{1}{48}\ln(\Delta t)}{N^2}-\frac{\ln(\Delta t)}{48N^2}+\mathcal{O}(N^{-4})
\end{equation}

We now define
\begin{equation}
 \tilde{F}^{\operatorname{reg}}_N(t)=2^{-5/3}(\Delta t)^2+\tilde{A}+\tilde{B}\cdot \Delta t+\frac{\tilde{D}}{N^2}
\end{equation}

Note that $\Delta t$ and $\Delta u$ are essentially equivalent (up to a constant) near the critical value, so
\begin{equation}
F^{\operatorname{reg}}_N(u)=A+B\Delta u+C(\Delta u)^2+\frac{D}{N^2},
\end{equation}
for some constants $A$, $B$, $C$ and $D$, that come both from the change of variables from $t$ to $u$ and the prefactor in \eqref{FNFN}. 

The function $F^{(0)}(u)$ can actually be written in terms of generalized hypergeometric functions, with the aid of {\sc Maple}:
\begin{equation}
F^{(0)}(u)=6u^2+216u^4\, _4F_3\left(\begin{array}{llll} 1 & 1 & \tfrac{4}{3} & \tfrac{5}{3}\\[1mm] 2 & \tfrac{5}{2} & 3 & \end{array}; 34992u^4\right)
+13608u^6 \, _4F_3\left(\begin{array}{llll} 1 & \tfrac{3}{2} & \tfrac{11}{6} & \tfrac{13}{6}\\[1mm] 3 & \tfrac{5}{2} & \tfrac{7}{2} & \end{array}; 34992u^4\right)
\end{equation}

This representation is not particularly useful for computations, but it allows us to determine the regularity of the function $F^{(0)}(u)$ at the critical point. Using standard arguments, see for instance \cite[\S 16.2]{DLMF}, this function is absolutely convergent when $34992u^4=1$, which corresponds in particular to $u=u_c$. Then we have
\begin{equation}
A=F^{(0)}(u_c)=54\sqrt{3}+\frac{1}{162}\, _4F_3\left(\begin{array}{llll} 1 & 1 & \tfrac{4}{3} & \tfrac{5}{3}\\[1mm] 2 & \tfrac{5}{2} & 3 & \end{array}; 1\right)
+\frac{7\sqrt{3}}{5832} \, _4F_3\left(\begin{array}{llll} 1 & \tfrac{3}{2} & \tfrac{11}{6} & \tfrac{13}{6}\\[1mm] 3 & \tfrac{5}{2} & \tfrac{7}{2} & \end{array}; 1\right).
\end{equation}

Differentiation with respect to $u$ increases each parameter of the hypergeometric functions by one, and it is not difficult to check that $F^{(0)}(u)$ is twice differentiable at $u=u_c$. Therefore, we get
\begin{equation}
 B=F^{(0)'}(u_c), \qquad C=\frac{1}{2} F^{(0)''}(u_c).
\end{equation}

The non--analytic terms in $\tilde{F}_N(t)$ can be determined from the double scaling regime. Consider the Toda equation \eqref{Toda}, which is valid for any $t>t_c$, see \cite{BD}. We can establish a connection between the variables $\lambda$ and $t$, from the double scaling setting and the change of variables \eqref{tu}. Namely, if $\Delta u=u_c-u$ and $\Delta t=t-t_c$, we have
\begin{equation}
 \Delta u= 2^{-7/3} 3^{-7/4}\Delta t+\mathcal{O}((\Delta t)^2)
\end{equation}
and since $\lambda=-2^{12/5}3^{7/4}N^{4/5}\Delta u $, from the double scaling relation, we get
\begin{equation}
\lambda=-2^{1/15}N^{4/5}\Delta t+\mathcal{O}(N^{-4/5}).
\end{equation}

We define the auxiliary variable 
\begin{equation}\label{nut}
\nu=-2^{1/15}N^{4/5}\Delta t, 
\end{equation}
and observe that $\nu-\lambda=\mathcal{O}(N^{-4/5})$. The Toda equation \eqref{Toda} can be written in terms of this variable $\nu$:
\begin{equation}\label{Todanu}
\frac{d^2 \tilde{F}_N(\nu)}{d\nu^2}=2^{-2/15} N^{-8/5}\tilde{\gamma}^2_{N,N}(\nu),
\end{equation}
and then, since
\begin{equation}
\tilde{\gamma}^2_{N,N}(t)=\frac{1}{2\sqrt{t}}\gamma^2_{N,N}(u),
\end{equation}
we have
\begin{equation}
\tilde{\gamma}^2_{N,N}(\nu)=\frac{1}{2\sqrt{t_c}}\gamma^2_{N,N}(\nu)+\mathcal{O}(N^{-4/5})
=2^{-2/3}3^{-1/2}\gamma^2_{N,N}(\nu)+\mathcal{O}(N^{-4/5}).
\end{equation}

Then, substituting the asymptotic expansion for $\gamma^2_{N,N}(\nu)$ in \eqref{Todanu}, we get the following differential equation:
\begin{equation}
\begin{aligned}
\frac{d^2 \tilde{F}_N(\nu)}{d\nu^2}&=\frac{2^{-4/5}3^{-1/2}}{N^{8/5}} \gamma^2_{N,N}(\nu)+\mathcal{O}(N^{-12/5})=\frac{2^{-4/5}}{N^{8/5}}-\frac{y_{\alpha}(\nu)}{N^2}+\mathcal{O}(N^{-12/5}).
\end{aligned}
\end{equation}

Integrating twice in $\nu$, we have
\begin{equation}
\tilde{F}_N(\nu)=\frac{2^{-9/5}\nu^2}{N^{8/5}} +A_N\nu+B_N-\frac{Y_{\alpha}(\nu)}{N^2}+\mathcal{O}(N^{-12/5}).
\end{equation}

Here $A_N$ and $B_N$ are constants of integration. In terms of $\Delta t$, using \eqref{nut}, we have
\begin{equation}\label{FNt_ds}
\begin{aligned}
\tilde{F}_N(t)&=2^{-5/3}(\Delta t)^2- A_N 2^{1/15}N^{4/5}\Delta t+B_N-\frac{Y_{\alpha}(\nu)}{N^2}+\mathcal{O}(N^{-12/5}),\\
&=\tilde{F}^{\operatorname{reg}}_N(t)+\tilde{F}^{\operatorname{sing}}_N(\nu)+\mathcal{O}(N^{-12/5}),
\end{aligned}
\end{equation}
where the term $\tilde{F}^{\operatorname{reg}}_N(t)$ matches the one before, and
\begin{equation}
\tilde{F}^{\operatorname{sing}}_N(\nu)
=-\frac{Y_{\alpha}(\nu)}{N^2}.
\end{equation}

This function $Y_{\alpha}(\nu)$ satisfies the ODE
\begin{equation}\label{Yy}
Y''_{\alpha}(\nu)=y_{\alpha}(\nu), 
\end{equation}
with boundary condition
\begin{equation}\label{Yy_bc}
Y_{\alpha}(\nu)=\frac{2\sqrt{6}}{45}(-\nu)^{5/2}-\frac{1}{48}\ln(-\nu)+\mathcal{O}(\nu^{-5/2}), \qquad (-\nu)\to\infty.
\end{equation}
 
This boundary condition is chosen to match the two non--analytic terms at $t=t_c$ appearing in the regular regime. Finally, we use the fact that $\nu=\lambda+\mathcal{O}(N^{-4/5})$, and define
\begin{equation}
F^{\operatorname{sing}}_N(\lambda)=-\frac{Y_{\alpha}(\lambda)}{N^2},
\end{equation}
which completes the proof of the theorem.

\appendix
\section{Proof of Theorem \ref{Th_Phi}}\label{ApA}
We start with the Riemann--Hilbert problem for the function $\Phi(\z;\lambda,\alpha)$. First of all, we translate the contour $\Gamma_{\Psi}$ to $\Gamma_{\Phi}=\Gamma_{\Psi}+\z_0$, where $\z_0=-2/\sqrt{6}$, recall \eqref{zeta0}. This implies a modification of $\Phi$ in different sectors taking into account the jumps on the contour $\Gamma_{\Psi}$, but it does not alter the asymptotic behavior. Thus, we consider the following RH problem for $\Phi(\z;\lambda,\alpha)$:
\begin{enumerate}
\item $\Phi$ is analytic on $\C\setminus \Ga_{\Phi}$, where the contour $\Gamma_{\Phi}$ is depicted in Figure \ref{PI_jumps}, but centered at the point $\z=\z_0$. 
\item On $\Ga_{\Phi}$, $\Phi$ has the following jumps:
\begin{equation}
\Phi_+(s)=\Phi_-(s)
\begin{cases}
\begin{pmatrix}
1 & 0\\ 1 & 1
\end{pmatrix}, \qquad s\in\gamma_{\pm 2}\\
\begin{pmatrix}
0 & 1\\ -1 & 0
\end{pmatrix}, \qquad s\in\rho\\
\begin{pmatrix}
1 & \alpha \\ 0 & 1
\end{pmatrix}, \qquad s\in\gamma_{1}\\
\begin{pmatrix}
1 & 1-\alpha \\ 0 & 1
\end{pmatrix}, \qquad s\in\gamma_{-1}.
\end{cases}
\end{equation}
\item As $\z\to\infty$, for fixed $\lambda$ and $\alpha$, $\Phi$ expands in the asymptotic series
\begin{equation}
\Phi(\z;\lambda,\alpha)\sim \frac{(\z-\z_0)^{\sg_3/4}}{\sqrt 2}
\begin{pmatrix}
1 & -i \\
1 & i
\end{pmatrix}
\left(I+\sum_{k=1}^\infty \frac{\Phi_k(\lambda,\alpha)}{(-\lambda)^{k/4}\z^{k/2}}\right)e^{(-\lambda)^{5/4}g(\z)\sigma_3},
\end{equation}
where we recall that
\begin{equation}
g(\z)=\frac{4}{5}\left(\z+\frac{2}{\sqrt 6}\right)^{5/2}-\frac{2\sqrt 6}{3}\left(\z+\frac{2}{\sqrt 6}\right)^{3/2},
\end{equation}
with a cut on the axis $(-\infty,\z_0]$.
\end{enumerate}

\subsection{First transformation}
Consider the new matrix function
\begin{equation}
S(\z;\lambda,\alpha)=\Phi(\z;\lambda,\alpha)e^{-(-\lambda)^{5/4}g(\z)\sg_3},
\end{equation}

Then $S(\z;\lambda,\alpha)$ satisfies the following RH problem:

\begin{enumerate}
\item $S$ is analytic on $\C\setminus \Ga_{\Phi}$.
\item On $\Ga_{\Phi}$, $S$ has the following jumps:
\begin{equation}
S_+(s)=S_-(s)
\begin{cases}
\begin{pmatrix}
1 & 0\\ e^{-2(-\lambda)^{5/4}g(s)} & 1
\end{pmatrix}, \qquad s\in\gamma_{\pm 2}\\
\begin{pmatrix}
0 & 1\\ -1 & 0
\end{pmatrix}, \qquad s\in\rho\\
\begin{pmatrix}
1 & \alpha\, e^{2(-\lambda)^{5/4}g(s)}\\ 0 & 1
\end{pmatrix}, \qquad s\in\gamma_{1}\\
\begin{pmatrix}
1 & (1-\alpha) e^{2(-\lambda)^{5/4}g(s)}\\ 0 & 1
\end{pmatrix}, \qquad s\in\gamma_{-1}.
\end{cases}
\end{equation}

\item As $\z\to\infty$, $S$ expands in the asymptotic series
\begin{equation}\label{rhpT2}
S(\z;\lambda,\alpha)\sim \frac{(\z-\z_0)^{\sg_3/4}}{\sqrt 2}
\begin{pmatrix}
1 & -i \\
1 & i
\end{pmatrix}
\left(I+\sum_{k=1}^\infty \frac{\Phi_k(\lambda,\alpha)}{(-\lambda)^{k/4}\z^{k/2}}\right)
\end{equation}
\end{enumerate}

Note that the jump on $\rho$ is the same as before, since the function $g(\z)$ has a jump on $(-\infty,\z_0]$, and $g_+(s)=-g_-(s)$ for $s\in(-\infty,\z_0]$. 

Let us deform the contour slightly near the point $\z=\z_0$, as indicated in Figure \ref{T_PI_jumps_mod}. All the jumps remain the same, except a new one on $\gamma_0$, which is equal to
\begin{equation}
S_+(s)=S_-(s)\begin{pmatrix} 1 & 1\\ 0 &1\end{pmatrix}, \qquad s\in\gamma_0.
\end{equation}

Taking into account the level curves corresponding to $\textrm{Re}\, g(\z)=0$, see Figure \ref{T_PI_jumps_mod}, it is clear that the jumps on $\gamma_{\pm 1}$ and on $\gamma_{\pm 2}$ tend to the identity exponentially fast, provided that the segment $\gamma_0$ does not enter the region where $\textrm{Re}\,g(\z)<0$.

\subsection{Model RH problem}
Now we ignore all the jumps exponentially close to identity (that is, the ones on $\gamma_{\pm 1}$ and on $\gamma_{\pm  2}$), and we solve the model Riemann--Hilbert problem: we look for $M(\z)$ such that
\begin{enumerate}
\item $M$ is analytic on $\C\setminus(-\infty,\z_0]$.
\item On $(-\infty,\z_0)$, we have
\begin{equation}
M_+(s)=M_-(s)\begin{pmatrix} 0 & 1\\ -1 & 0\end{pmatrix}
\end{equation}
\item As $\z\to\infty$, $\z\notin(-\infty,\z_0]$, we have 
\begin{equation}
M(\z)=\frac{(\z-\z_0)^{\sg_3/4}}{\sqrt 2}
\begin{pmatrix}
1 & -i \\
1 & i
\end{pmatrix}\left(I+\mathcal{O}\left(\z^{-1}\right)\right).
\end{equation}
\end{enumerate}

This model RH problem can be solved explicitly, namely
\begin{equation}
M(\z)=\frac{(\z-\z_0)^{\sg_3/4}}{\sqrt{2}}\begin{pmatrix}1& -i\\ 1 & i\end{pmatrix}.
\end{equation}


\begin{figure}
\centerline{
\includegraphics[width=70mm,height=70mm]{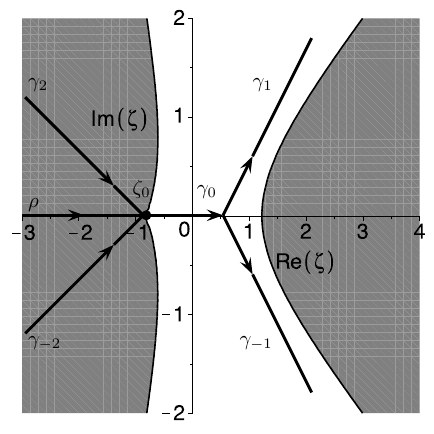}}
 \caption{Level curves where $\textrm{Re}\,g(\z)=0$. In grey, the region where $\textrm{Re}\,g(\z)>0$, in white where $\textrm{Re}\, g(\z)<0$. The modified contour $\Gamma_S=\gamma_{\pm 1}\cup\gamma_{\pm 2}\cup\gamma_0\cup\rho$ is also depicted.}
\label{T_PI_jumps_mod}
\end{figure}

\subsection{Local parametrix around $\z=\z_0$}

Take a fixed disc $D(\z_0,\varepsilon)$, and consider the following RH problem for a function $P(\z;\lambda,\alpha)$:
\begin{enumerate}
\item $P$ is analytic on $D(\z_0,\varepsilon)\cap \Ga_S$.
\item In $D(\z_0,\varepsilon)\cap \Ga_S$, $P$ has the following jumps:
\begin{equation}
P_+(s)=P_-(s)
\begin{cases}
\begin{pmatrix}
1 & 0\\ e^{-2(-\lambda)^{5/4}g(s)} & 1
\end{pmatrix}, \qquad s\in\gamma_{\pm 2}\cap D(\z_0,\varepsilon)\\
\begin{pmatrix}
0 & 1\\ -1 & 0
\end{pmatrix}, \qquad s\in\rho\cap D(\z_0,\varepsilon)\\
\begin{pmatrix} 1 & 1\\ 0 &1\end{pmatrix}, \qquad s\in\gamma_0\cap D(\z_0,\varepsilon)\\
\end{cases}
\end{equation}
\item Uniformly for $\z\in\partial D(\z_0,\varepsilon)$, we have the matching
\begin{equation}\label{matchingPS}
P(\z;\lambda,\alpha)=M(\z)\left(I+\mathcal{O}((-\lambda)^{5/4})\right).
\end{equation}
\end{enumerate}

The reduction to constant jumps is standard now: consider
\begin{equation}
\tilde{P}(\z;\lambda,\alpha)=P(\z;\lambda,\alpha)e^{(-\lambda)^{5/4}g(\z)\sg_3},
\end{equation}
then $\tilde{P}(\z;\lambda,\alpha)$ satisfies the following RH problem:
\begin{enumerate}
\item $\tilde{P}$ is analytic on $D(\z_0,\varepsilon)\cap \Ga_S$.
\item $D(\z_0,\varepsilon)\cap \Ga_S$, $\tilde{P}$ has the following jumps:
\begin{equation}
\tilde{P}_+(s)=\tilde{P}_-(s)
\begin{cases}
\begin{pmatrix}
1 & 0\\ 1 & 1
\end{pmatrix}, \qquad s\in\gamma_{\pm 2}\cap D(\z_0,\varepsilon)\\
\begin{pmatrix}
0 & 1\\ -1 & 0
\end{pmatrix}, \qquad s\in\rho\cap D(\z_0,\varepsilon)\\
\begin{pmatrix} 1 & 1\\ 0 &1\end{pmatrix}, \qquad s\in\gamma_0\cap D(\z_0,\varepsilon)\\
\end{cases}
\end{equation}
\item Uniformly for $\z\in\partial D(\z_0,\varepsilon)$, we have the matching
\begin{equation}\label{matchingPtS}
\tilde{P}(\z;\lambda,\alpha)=M(\z)\left(I+\mathcal{O}((-\lambda)^{-5/4})\right)e^{(-\lambda)^{5/4}g(\z)\sigma_3}.
\end{equation}
\end{enumerate}

This is a standard Airy Riemann--Hilbert problem, so we look for a local parametrix in the form
\begin{equation}
\tilde{P}(\z;\lambda,\alpha)=E(\z;\lambda)A((-\lambda)^{5/6}f(\z)),
\end{equation}
where $E(\z;\lambda)$ is an analytic prefactor, the matrix $A(w)$ is built of suitably chosen Airy functions, and $f(\z)$ is a conformal mapping from a neighborhood of $\z_0$ onto a neighborhood of $0$ in the auxiliary $w$ complex plane. The important fact for us in this case is that $A((-\lambda)^{5/6}f(\z))$ can be expanded asymptotically as the argument grows large:
\begin{equation}
A((-\lambda)^{5/6}f(\z))=((-\lambda)^{5/6}f(\z))^{-\sg_3/4}\frac{1}{2\sqrt{\pi}}\begin{pmatrix} 1 & i\\ i & 1\end{pmatrix}
\left(I+\mathcal{O}((-\lambda)^{-5/4})\right)e^{-\frac{2}{3}(-\lambda)^{5/4}(f(\z))^{3/2}\sigma_3}.
\end{equation}

Now we choose $f(\z)$ in the following way:
\begin{equation}
f(\z)=\left(-\frac{3}{4}g(\z)\right)^{3/2},
\end{equation}
to match the exponential factors, and since from \eqref{rhp18} we can write
\begin{equation}
g(\z)=-\frac{2\sqrt{6}}{3}(\z-\z_0)^{3/2}\left(1-\frac{\sqrt{6}}{5}(\z-\z_0)\right),
\end{equation}
then it follows that $f(\z)$ is indeed a conformal mapping in a neighborhood of $\z=\z_0$. Finally, to get the correct matching with $M(\z)$ in \eqref{matchingPtS}, we take
\begin{equation}
E(\z;\lambda)=\sqrt{\pi}\,M(\z)\begin{pmatrix}1 & -i\\ -i & 1\end{pmatrix}((-\lambda)^{5/6}f(\z))^{-\sigma_3/4},
\end{equation}
which is analytic for $\z\in(-\infty,\z_0]$ because of the jumps of $M(\z)$ and the fractional power $f(\z)^{-\sigma_3/4}$.
\subsection{Second transformation}
Finally, we define
\begin{equation}
R(\z;\lambda,\alpha)=S(\z;\lambda,\alpha)
\begin{cases}
M^{-1}(\z),& \qquad \z\in\mathbb{C}\setminus D(\z_0,\varepsilon),\\
P^{-1}(\z),& \qquad \z\in D(\z_0,\varepsilon).
\end{cases}
\end{equation}

Then this matrix $R(\z;\lambda,\alpha)$ is analytic in $\mathbb{C}\setminus \Gamma_R$, where $\Gamma_R$ is depicted in Figure \ref{R_PI_jumps}. It has exponentially small jumps in $(-\lambda)$ in the whole contour except on the boundary of the disc, $\partial D(\z_0,\varepsilon)$ where it is of order $(-\lambda)^{5/4}$.

\begin{figure}
\centerline{
\includegraphics{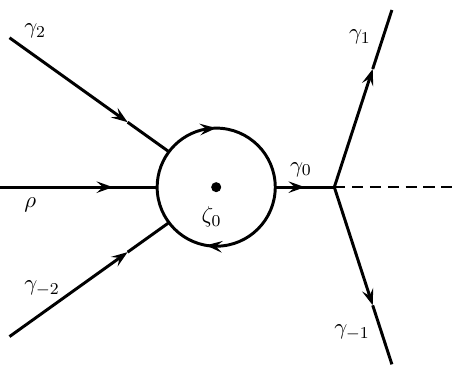}}
 \caption{The final contour $\Gamma_R$.}
\label{R_PI_jumps}
\end{figure}

Following standard arguments, see for instance \cite{Ble}, we can conclude that as $\lambda\to-\infty$, we have
\begin{equation}
R(\z;\lambda,\alpha)
=I+\mathcal{O}\left(\frac{1}{(-\lambda)^{5/4}(1+|\z|)}\right),
\end{equation}
uniformly for $\z\in\mathbb{C}\setminus D(\z_0,\varepsilon)$. Now, undoing the transformations, we get
\begin{equation}
\begin{aligned}
\Phi(\z;\lambda,\alpha)&=S(\z;\lambda,\alpha)e^{(-\lambda)^{5/4}g(\z)\sg_3}\\
&=R(\z;\lambda,\alpha)M(\z)e^{(-\lambda)^{5/4}g(\z)\sg_3}\\
&=\left(I+\mathcal{O}\left(\frac{1}{(-\lambda)^{5/4}(1+|\z|)}\right)\right)M(\z)e^{(-\lambda)^{5/4}g(\z)\sg_3},
\end{aligned}
\end{equation}
valid as $(-\lambda)\to\infty$, uniformly for $\z\in\mathbb{C}\setminus D(\z_0,\varepsilon)$. This completes the proof of Theorem \ref{Th_Phi}.

\end{document}